\newcounter{theorem} 
\newtheorem{xdefinition}[theorem]{Definition}
\newtheorem{xobservation}[theorem]{Observation}
\newtheorem{xtheorem}[theorem]{Theorem}
\newtheorem{xlemma}[theorem]{Lemma}	
\newtheorem{xproposition}[theorem]{Proposition}
\newtheorem{xcorollary}[theorem]{Corollary}
\newenvironment{definition}{\begin{xdefinition}\rm}%
{\hspace*{\fill}\raisebox{-1pt}{\boldmath$\Box$}\end{xdefinition}}
\newenvironment{observation}{\begin{xobservation}\rm}%
{\hspace*{\fill}\raisebox{-1pt}{\boldmath$\Box$}\end{xobservation}}
\newenvironment{theorem}{\begin{xtheorem}\rm}{\end{xtheorem}}
\newenvironment{lemma}{\begin{xlemma}\rm}{\end{xlemma}}
\newenvironment{proposition}{\begin{xproposition}\rm}{\end{xproposition}}
\newenvironment{corollary}{\begin{xcorollary}\rm}{\end{xcorollary}}
\newenvironment{proof}{\begin{trivlist}\item[]{\bf Proof }}%
{\hspace*{\fill}\raisebox{-1pt}{\boldmath$\Box$}\end{trivlist}}
\setlist[itemize,2]{label={$\circ$}}
\definecolor{custom_yellow}{RGB}{246,219,166}
\definecolor{custom_yellow_dark}{RGB}{196,136,19}
\definecolor{custom_blue}{RGB}{166,193,246}
\definecolor{custom_blue_dark}{RGB}{19,79,196}
\definecolor{custom_red}{RGB}{244,123,126}
\definecolor{custom_red_dark}{RGB}{220,18,23}
\newcommand{\on}[1]{\ensuremath{\operatorname{#1}}}
\newcommand{\abs}[1]{\ensuremath{\left\lvert #1 \right\rvert}}
\newcommand{\eps}{\ensuremath{\varepsilon}}
\newcommand{\AT}{\ensuremath{\kappa}} 
\newcommand{\xZero}[1]{\ensuremath{#1_0}\xspace}
\newcommand{\xOne}[1]{\ensuremath{#1_1}\xspace}
\newcommand{\ALGS}[1]{\ensuremath{\mathcal{A}_{#1}}\xspace}
\newcommand{\ORALG}[1]{\ensuremath{{#1}_{\textsc{a}}}\xspace}
\newcommand{\ORTRANS}[1]{\ensuremath{{#1}_{\textsc{i}}}\xspace}
\newcommand{\ONLRED}[1]{\ensuremath{(\ORALG{#1},\ORTRANS{#1})}\xspace}
\newcommand{\BLOCKWITHPROBLEM}[4]{\ensuremath{B_{#1}(#2,#3,#4)}\xspace}
\newcommand{\BLOCK}[2]{\ensuremath{B(#1,#2)}\xspace}
\newcommand{\INSTANCES}[1]{\ensuremath{\mathcal{{I}}_{#1}}\xspace}
\newcommand{\myoverset}[3][0ex]{%
  \mathrel{\mathop{#3}\limits^{
    \vbox to#1{\kern-2\ex@
    \hbox{$\scriptstyle#2$}\vss}}}}
\newcommand{\orarrow}{\ensuremath{\myoverset{r}{\rightarrow}}}
\newcommand{\CCWM}[2]{\ensuremath{\mathcal{C}_{#1}^{#2}}\xspace} 
\newcommand{\CCINFTYWM}[1]{\ensuremath{\mathcal{C}_{#1}}\xspace} 
\newcommand{\ACCWM}[2]{\ensuremath{\mathcal{AC}_{#1}^{#2}}\xspace} 
\newcommand{\OPT}{\ensuremath{\on{\textsc{Opt}}}\xspace}
\newcommand{\ALG}{\ensuremath{\on{\textsc{Alg}}}\xspace}
\newcommand{\FTP}{\ensuremath{\on{\textsc{FtP}}}\xspace}
\newcommand{\LFD}{\ensuremath{\textsc{LFD}}\xspace}
\newcommand{\FLUSH}{\ensuremath{\textsc{FwZ}}\xspace}
\newcommand{\FLUSHwALLzeros}{\ensuremath{\textsc{Flush-when-All-Zeros}}\xspace}
\newcommand{\Block}{\ensuremath{\textsc{FbB}}\xspace}
\newcommand{\FlushBlock}{\ensuremath{\textsc{Flush-between-Blocks}}\xspace}
\newcommand{\subblockOne}{\ensuremath{I_b'}\xspace}
\newcommand{\subblockTwo}{\ensuremath{I_b''}\xspace}
\newcommand{\ASG}[1]{\ensuremath{\textsc{ASG}_{#1}}\xspace}
\newcommand{\ASGINFTY}{\ensuremath{\textsc{ASG}}\xspace}
\newcommand{\BDVC}[1]{\ensuremath{\textsc{VC}_{#1}}\xspace}
\newcommand{\VC}{\ensuremath{\textsc{VC}}\xspace}
\newcommand{\COL}[1]{\ensuremath{#1\textsc{-Spill}}\xspace}
\newcommand{\COLT}[2]{\ensuremath{#1\textsc{-Spill}_{#2}}\xspace}
\newcommand{\INTER}[1]{\ensuremath{\textsc{IR}_{#1}}\xspace}
\newcommand{\IR}{\ensuremath{\textsc{IR}}\xspace}
\newcommand{\SAT}[1]{\ensuremath{#1\textsc{-SatD}}\xspace}
\newcommand{\DOM}{\ensuremath{\textsc{Dom}}\xspace}
\newcommand{\PAG}[1]{\ensuremath{\textsc{Pag}_{#1}}\xspace}
\newcommand{\con}[1]{\ensuremath{\textit{con}(#1)}\xspace}
\newcommand{\dom}[1]{\ensuremath{\textit{dom}(#1)}\xspace}
\newcommand{\ZZ}{\ensuremath{\mathbb{Z}}\xspace}
\newcommand{\RR}{\ensuremath{\mathbb{R}}\xspace}
\newcommand{\ETA}{\ensuremath{\eta}}
\newcommand{\ETAZERO}{\ensuremath{\ETA_0}}
\newcommand{\ETAONE}{\ensuremath{\ETA_1}}
\newcommand{\PAIRETAFORCC}{\ensuremath{\ETAZERO,\ETAONE}}
\newcommand{\PAIRETA}{\ensuremath{(\PAIRETAFORCC)}\xspace}
\newcommand{\XI}{\ensuremath{\xi}}
\newcommand{\XIZERO}{\ensuremath{\XI_0}}
\newcommand{\XIONE}{\ensuremath{\XI_1}}
\newcommand{\PAIRXIFORCC}{\ensuremath{\XIZERO,\XIONE}}
\newcommand{\PAIRXI}{\ensuremath{(\PAIRXIFORCC)}\xspace}
\newcommand{\PHI}{\ensuremath{\varphi}}
\newcommand{\PHIZERO}{\ensuremath{\PHI_0}}
\newcommand{\PHIONE}{\ensuremath{\PHI_1}}
\newcommand{\PAIRPHIFORCC}{\ensuremath{\PHIZERO,\PHIONE}}
\newcommand{\PAIRPHI}{\ensuremath{(\PAIRPHIFORCC)}\xspace}
\newcommand{\MU}{\ensuremath{\mu}}
\newcommand{\MUZERO}{\ensuremath{\MU_0}}
\newcommand{\MUONE}{\ensuremath{\MU_1}}
\newcommand{\PAIRMUFORCC}{\ensuremath{\MUZERO,\MUONE}}
\newcommand{\PAIRMU}{\ensuremath{(\PAIRMUFORCC)}\xspace}
\newcommand{\ZEROMEASURE}{\ensuremath{Z}}
\newcommand{\ZEROMEASUREZERO}{\ensuremath{\ZEROMEASURE_0}}
\newcommand{\ZEROMEASUREONE}{\ensuremath{\ZEROMEASURE_1}}
\newcommand{\PAIRZEROMEASUREFORCC}{\ZEROMEASUREZERO,\ZEROMEASUREONE}
\newcommand{\ALPHA}{\ensuremath{\alpha}}
\newcommand{\BETA}{\ensuremath{\beta}}
\newcommand{\GAMMA}{\ensuremath{\gamma}}
\newcommand{\ABC}{\ensuremath{(\ALPHA,\BETA,\GAMMA)}\xspace}
\tikzset{
	vertex/.style = {
		shape = circle,
		draw = black,
		fill = gray!15,
		minimum size = 1cm,
		scale = 0.8
	}
}
\tikzset{
	vertexwcolor/.style = {
		shape = circle,
		draw = black,
		minimum size = 1cm,
		scale = 0.8
	}
}
\tikzset{
	vertexgray/.style = {
		shape = circle,
		draw = gray,
                fill = gray!5,
		minimum size = 1cm,
		scale = 0.8
	}
}
\tikzset{
	vertexdarkgray/.style = {
		shape = circle,
		draw = gray,
                fill = gray!70,
		minimum size = 1cm,
		scale = 0.8
	}
}
\tikzset{
	smallvertex/.style = {
		shape = circle,
		draw = black,
		fill = gray!15,
		minimum size = 1cm,
		scale = 0.5
	}
}
\tikzset{strictorarrow/.style = {-{Latex[line width=0.5pt, fill=white,scale = 1.5]}}}
\tikzset{orarrow/.style = {-{Latex[scale = 1.5]}}}
\title{Complexity Classes for Online Problems \\
  with and without Predictions\,\thanks{Supported in part by the Independent Research Fund Denmark, Natural Sciences, grants DFF-0135-00018B and DFF-4283-00079B and in part by the Innovation Fund Denmark, grant 9142-00001B, Digital Research Centre Denmark, project P40: Online Algorithms with Predictions. An extended abstract of this paper was published in the International Joint Conference on Theoretical Computer Science -- Frontier of Algorithmic Wisdom (IJTCS-FAW), Track A Best Paper Award, volume 15828 of Lecture Notes in Computer Science, pages 49-63. Springer, 2025.}
}
\author{
        \begin{tabular}{l}
          Magnus Berg
        \end{tabular} \\[1ex]
        IBA International Business Academy, Kolding, Denmark \\[1ex]
        \texttt{mbmo@iba.dk}\\[3ex]
        \begin{tabular}{l@{\hspace{2em}}l@{\hspace{2em}}l}
          Joan Boyar & Lene M. Favrholdt & Kim S. Larsen
        \end{tabular} \\[1ex]
        University of Southern Denmark, Odense, Denmark \\[1ex]
        \texttt{\{joan,lenem,kslarsen\}@imada.sdu.dk}
}
\date{June 11, 2026}
\begin{document}

\maketitle

\begin{abstract}
  With the developments in machine learning, there has been a surge in interest and results focused on algorithms utilizing predictions, not least in online algorithms where most new results incorporate the prediction aspect for concrete online problems. While the structural computational hardness of problems with regards to time and space is quite well developed, not much is known about online problems where time and space resources are typically not in focus. Some information-theoretical insights were gained when researchers considered online algorithms with oracle advice, but predictions of uncertain quality is a very different matter.

  We initiate the development of a complexity theory for online problems with predictions, considering minimization problems and one prediction bit per request. Based on the most generic hard online problem type, string guessing, we define a family of hierarchies of complexity classes (indexed by pairs of error measures) and develop notions of reductions, class membership, hardness, and completeness. Our framework contains all the tools one expects to find when working with complexity, and we illustrate our tools by analyzing problems with different characteristics. In addition, we show that known lower bounds for paging with discard predictions apply directly to all hard problems for each class in the hierarchy based on the canonical pair of error measures. This paging problem is not complete for these classes.

  Our work also implies corresponding complexity classes for classic online problems without predictions, with the corresponding complete problems.
\end{abstract}

\section{Introduction}
In computational complexity theory, one aims at classifying
computational problems based on their hardness, by relating them via
hardness-preserving mappings, referred to as reductions. Most
commonly seen are time and space complexity, where problems are
classified based on how much time or space is needed to solve the
problem.  Our primary aim is to classify online minimization problems
with predictions based on the \emph{competitiveness} of
     best possible deterministic online algorithms
for each problem.
In initiating this line of research,
we consider minimization problems with binary predictions,
encoding an optimal solution and given one bit at a time together with the requests.
Our framework has recently been extended to maximization problems~\cite{B25}.

An \emph{online problem} is an optimization problem where the input is
revealed to an online algorithm in a piece-wise fashion in the form of
\emph{requests}. When a request arrives, an online algorithm must make
an irrevocable decision about the request before the next request
arrives.  When comparing the quality of online algorithms, we use the
standard \emph{competitive analysis} framework~\cite{ST85} (see~\cite{BE98,K16}), where
the competitiveness of an online algorithm is computed by comparing
the algorithm's performance
to an offline optimal
algorithm's performance.
Competitive analysis is
a framework for worst-case guarantees, where we say that an algorithm
is $c$-competitive if, asymptotically over all possible input
sequences, its cost is at most a factor $c$ times the cost of the
optimal offline algorithm.

With the increased availability and improved quality of predictions
from machine learning software, efforts to utilize predictions in
online algorithms have increased dramatically~\cite{ALPS}. Typically,
one studies the competitiveness of online algorithms that have access
to additional information about the instance through (unreliable)
predictions.  Ideally, such algorithms should perform perfectly when
the predictions are error-free (the competitiveness in this case is
called the \emph{consistency}), and
perform as well as the best purely online algorithm when the
predictions are erroneous (\emph{robustness}).  There is also a desire
that an algorithm's competitiveness degrades gracefully from the
consistency to the robustness as the predictions get worse (often
referred to as \emph{smoothness}). In particular, the performance should not
plummet due to minor errors. To establish smoothness, it is necessary to
have some measure of how wrong a prediction is. Thus, results of this
type are based on some \emph{error measure}.

The complexity of algorithms with predictions has also been considered in a
different context, dynamic graph problems~\cite{HSSY24}.
However,
Henzinger et al.\ study the \emph{time} complexity of dynamic data structures,
whereas we create complexity classes where the hardness is based on \emph{competitiveness}.

The basis for our complexity classes is a parameterized version of
\emph{asymmetric string
guessing}~\cite{BFKM17}, a generic hard
online problem,
where each request
is simply a prompt for the
algorithm to guess a bit.
String guessing (not necessarily asymmetric)~\cite{BHKKSS14} has played a
fundamental role in what is often referred to as advice
complexity~\cite{BKKKM17,DKP09,EFKR11,HKK10,BFKLM17j}, where online algorithms have access to oracle-produced information about the instance which, in our context, can be considered infallible predictions.
The same standard string guessing problem has also been used for the
advice complexity of priority algorithms~\cite{BBLP20,BLP24j,BNR03}.
Specifically, we use \emph{Online $(1,t)$-Asymmetric
String Guessing with Unknown History and Predictions} ($\ASG{t}$), which
will be our base family of complete problems, establishing a strict
hierarchy based on the parameter,~$t$. The \emph{cost} of processing
an input
is the
number of guesses of $1$
plus $t$ times
the number of incorrect guesses of $0$.
Other variants of
Asymmetric String Guessing
have been used before in~\cite{M16}, considering connections
between advice complexity and randomization, as well as in~\cite{BFKM18} where
weighted versions of problems proven hard with respect to advice complexity
in~\cite{BFKM17} are considered.

We define complexity classes,
$\CCWM{\PAIRETAFORCC}{t}$, parameterized by
$t\in \ZZ^+ \cup \{\infty\}$ and a pair of error measures, \PAIRETA,
with certain properties.  To prove that a problem, $P$, is
$\CCWM{\PAIRETAFORCC}{t}$-\emph{hard}, one must 
show that $P$ is as hard as $\ASG{t}$,
and to prove \emph{membership} in $\CCWM{\PAIRETAFORCC}{t}$, one
must show
that $\ASG{t}$ is as hard as $P$.
If both are true, $P$ is
$\CCWM{\PAIRETAFORCC}{t}$-\emph{complete}.  
The as-hard-as relation is
transitive, so our framework provides all the usual tools: if a
subproblem of some problem is hard, the problem itself is hard,
one can reduce from the most convenient complete problem,
etc.
Thus, working with our complexity classes is similar to working
with, e.g., NP, MAX-SNP~\cite{PY91}, the
W-hierarchy~\cite{DF99}, and APX~\cite{APMGCK99}, in that hardness
results are obtained by proving the existence of special types of
reductions that preserve properties related to hardness.
However, we obtain performance bounds that are independent of any conjectures.

Deriving lower bounds on the competitiveness
of algorithms based on the
hardness of string guessing has been considered
before~\cite{BHKKSS14,BBLP20,EFKR11},
with different
objectives. The closest related work is in \cite{BFKM17}, where one of the base problems
we use in this paper,  $(1,\infty)$-Asymmetric String Guessing with Unknown History, was used as the base problem for the complexity
class AOC; AOC-complete problems are hard online problems
with advice.
Note that despite the similarities, working with advice and predictions are quite different matters.
In advice complexity, the competitive ratio is a function of the number of advice bits available.
Working with predictions, competitiveness is a function of the quality, not the quantity, of information about the input. 
Thus, results cannot be translated between AOC and the complexity classes of this paper.
Moreover, AOC is only one complexity class, not a hierarchy.

Strong lower
bounds in the form of hardness results from our framework can be seen
as indicating the insufficiency of a binary prediction scheme for a
problem. Proving that a problem is $\CCWM{\PAIRETAFORCC}{t}$-hard
suggests that one cannot solve it better than blindly trusting the
predictions, when using binary predictions, giving a rather poor result.
Hence, proving that a problem is hard serves as an argument for
needing a richer prediction scheme for the problem, or possibly a
more accurate way of measuring prediction error.

Our main contribution is an initial framework enabling a complexity theory for online algorithms with binary predictions. In this framework,
proving that a problem is hard for a class currently requires that the predictions
be binary encodings of the output the online algorithm should produce.
Using this framework, we prove hardness and class membership results for
several problems, including showing the completeness of Online $t$-Bounded
Degree Vertex Cover ($\BDVC{t}$) for $\CCWM{\PAIRETAFORCC}{t}$. 
Thus, $\BDVC{t}$, or any other complete problem, could be used as the basis for the complexity classes instead of $\ASG{t}$. However, we follow the tradition from advice complexity and use a string guessing problem, $\ASG{t}$, as its lack of structure
offers
simpler proofs. 
We illustrate
the relative hardness of the problems we investigate in
Figure~\ref{fig:hardness_graph}.
\begin{figure}[tb]
\centering
\begin{tikzpicture}[scale = 0.9]
\node (asgtm1) at (1.5,0) {$\ASG{t-1}$};
\node (asgt) at (5,0) {$\ASG{t}$};
\draw[strictorarrow] (asgtm1) -- (asgt);
\node (asgtp1) at (8.5,0) {$\ASG{t+1}$};
\draw[strictorarrow] (asgt) -- (asgtp1);

\node at (-0.5,0) {$\cdots$};
\node at (11.5,0) {$\cdots$};
\draw[strictorarrow] (0,0) -- (asgtm1);
\draw[strictorarrow] (asgtp1) -- (11,0);

\node (vctm1) at (2.5,1.5) {$\BDVC{t-1}$};
\node (vct) at (6,1.5) {$\BDVC{t}$};
\node (vctp1) at (9.5,1.5) {$\BDVC{t+1}$};

\draw[orarrow] (asgt) to[out = 45, in = -110] (vct);
\draw[orarrow] (vct) to[out = 225, in = 70] (asgt);

\draw[orarrow] (asgtm1) to[out = 45, in = -110] (vctm1);
\draw[orarrow] (vctm1) to[out = 180+45, in = 180-110] (asgtm1);

\draw[orarrow] (asgtp1) to[out = 45, in = -110] (vctp1);
\draw[orarrow] (vctp1) to[out = 180+45, in = 180-110] (asgtp1);

\node (istm1) at (0.5,1.5) {$\INTER{t-1}$};
\node (ist) at (4,1.5) {$\INTER{t}$};
\node (istp1) at (7.5,1.5) {$\INTER{t+1}$};

\draw[orarrow] (asgtm1) to[out = 110, in = -45] (istm1);
\draw[orarrow] (istm1) to[out = -70, in = 135] (asgtm1);
\draw[orarrow] (asgt) to[out = 110, in = -45] (ist);
\draw[orarrow] (ist) to[out = -70, in = 135] (asgt);
\draw[orarrow] (asgtp1) to[out = 110, in = -45] (istp1);
\draw[orarrow] (istp1) to[out = -70, in = 135] (asgtp1);
\draw[orarrow] (istm1) to[out = 10, in = 180-10] (vctm1);
\draw[orarrow] (vctm1) to[out = 180+10, in = -10] (istm1);
\draw[orarrow] (ist) to[out = 10, in =170] (vct);
\draw[orarrow] (vct) to[out = 180+10, in = -10] (ist);
\draw[orarrow] (istp1) to[out = 10, in =170] (vctp1);
\draw[orarrow] (vctp1) to[out = 180+10, in = -10] (istp1);

\node (vc) at (6.75,3) {$\VC$};
\node (dom) at (6.75,4.5) {$\DOM$};

\draw[strictorarrow] (vctm1) -- (vc);
\draw[strictorarrow] (vct) -- (vc);
\draw[strictorarrow] (vctp1) -- (vc);
\draw[orarrow,dashed] (vc) -- (dom);

\node (inter) at (3.25,3) {$\IR$};
\draw[strictorarrow] (istm1) -- (inter);
\draw[strictorarrow] (ist) -- (inter);
\draw[strictorarrow] (istp1) -- (inter);
\draw[orarrow] (inter) -- (vc);

\node (2sat) at (3.25,4.5) {$\SAT{2}$};
\draw[orarrow] (inter) -- (2sat);

\node (asg) at (5,3.75) {$\ASGINFTY$};
\draw[strictorarrow] (vc) -- (asg);
\draw[strictorarrow] (dom) -- (asg);

\node (kcolm) at (-0.7,-1.5) {$\COLT{k}{t+k}$};
\draw[orarrow] (asgtm1) -- (kcolm);
\node (kcol) at (4,-1.5) {$\COLT{k}{t+k+1}$};
\draw[orarrow] (asgt) -- (kcol);
\node (kcolp) at (8.5,-1.5) {$\COLT{k}{t+k+2}$};
\draw[orarrow] (asgtp1) -- (kcolp);

\node (pagtm1) at (1.5,-1.5) {$\PAG{t-1}$};
\draw[strictorarrow,gray!75] (pagtm1) -- (asgtm1);
\node (pagt) at (6,-1.5) {$\PAG{t}$};
\draw[strictorarrow,gray!75] (pagt) -- (asgt);
\node (pagtp1) at (11,-1.5) {$\PAG{t+1}$};
\draw[strictorarrow,gray!75] (pagtp1) -- (asgtp1);

\end{tikzpicture}
\caption{A hardness graph based on our complexity hierarchy.
  The problems shown are defined in Definitions~\ref{def:asg_t},~\ref{def:bdvc_t},~\ref{def:ir_t},~\ref{def:k-spill},~\ref{def:2-SAT},~\ref{def:dom}, and~\ref{def:paging}.
Given two problems $P$ and $Q$, we write $P \rightarrow Q$ to indicate that $Q$ is as hard as $P$ (see Definition~\ref{def:as_hard_as}).
If the arrowhead is only outlined, $P$ is not as hard as $Q$.
If the arrow is dashed, $P$ is asymptotically as hard as $Q$ (see Definition~\ref{def:weaklyhard}).
We leave out most arrows that can be derived by transitivity.
The \textcolor{gray!75}{gray} arrows
hold with respect to the pair of error measures $\PAIRMU$ (see Definition~\ref{def:error_measures_for_ASG_analysis}), and the remaining arrows hold with respect to all pairs of \emph{insertion monotone} error measures (see Definition~\ref{def:insertion_monotone_error_measures}).
}
\label{fig:hardness_graph}
\end{figure}
Worth noting is that by choosing the appropriate pair of error
measures, our set-up immediately gives the same hardness results for purely
online problems, that is, for algorithms without predictions.

As a soundness test for our framework, we consider the paging problem.
Intuitively, researchers in online algorithms would expect that problem
to be easier than, for instance, $\ASG{t}$ or $\BDVC{t}$,
because it seems that much more information is available.
And, indeed, we can prove that, as opposed to those two problems,
paging with discard predictions is not a complete problem
for the complexity class $\CCWM{\PAIRMUFORCC}{t}$ from our framework.
Furthermore, maybe surprisingly, this
non-complete paging problem gives rise to new lower bounds for
$\CCWM{\PAIRMUFORCC}{t}$-complete problems.

\section{Preliminaries}\label{sec:prelims}

In this paper, we consider online problems with binary predictions. From now on, we will not keep emphasizing that they are binary but simply refer to them as \emph{predictions}. The algorithms have to make an irrevocable decision, $y_i$, for each request. With the exception of the paging problem, we focus on problems, where these decisions (the $y_i$'s) are also binary.

For any problem, $P$, as just described, we let $\INSTANCES{P}$ be the collection of instances of $P$, we let $\ALGS{P}$ be the set of algorithms for $P$, and we let $\OPT_P$ be a fixed optimal algorithm for $P$.
In our notation, an instance of $P$ is a triple $I = (x,\hat{x},r)$, consisting of two bitstrings $x,\hat{x} \in \{0,1\}^n$, and a sequence of requests $r = \langle r_1,r_2,\ldots,r_n\rangle$ for $P$.
The bitstring $x$ is an encoding of $\OPT_P$'s solution, and $\hat{x}$ is a prediction of $x$.
When an algorithm, $\ALG$, receives the request $r_i$, it also receives the prediction $\hat{x}_i$ to aid its decision (represented by a bit, $y_i$) for $r_i$.
What information is contained in each request, $r_i$, and the meaning of the bits $x_i$, $\hat{x}_i$, and $y_i$, will be specified for each problem.
When there can be no confusion, we write $\OPT$ instead of $\OPT_P$.

Given an algorithm, $\ALG \in \ALGS{P}$, and an instance, $I \in \INSTANCES{P}$, we let $\ALG[I]$ be $\ALG$'s solution to $I$, and $\ALG(I)$ be the cost of $\ALG[I]$.

\subsection{Competitiveness and Error Measures}
For purely online algorithms, we use the following definition of competitiveness:
An algorithm, $\ALG$, for a minimization problem, is \emph{$c$-competitive} if there exists a constant, $\AT$, called the \emph{additive term}, such that for all instances $I = (x,r) \in \INSTANCES{P}$, 
\begin{align*}
\ALG(I) \leqslant c \cdot \OPT(I) + \AT.
\end{align*}

We extend the definition of competitiveness to online algorithms with predictions, based on~\cite{ABEFHLPS23}.
Here, the competitiveness of an algorithm is written as a function of two error measures\footnote{We work with separate error measures for
predicted $0$s and $1$s to allow for more detailed results. Our
reductions and structural results would also work if we used only one
error measure. For instance, letting $\eta_{t-1,1}(x,\hat{x}) =
(t-1) \cdot \MUZERO(x,\hat{x}) + \MUONE(x,\hat{x})$,
Theorem~\ref{thm:ftp_equality_wrt_mu_0_and_mu_1} implies that $\FTP$
is $(1,1)$-competitive with respect to $\eta_{t-1,1}$.
However,
combining the two error measures into one, we lose some detail such as the trade-offs between $\ALPHA$, $\BETA$, and $\GAMMA$ given in
Theorem~\ref{thm:stronger_lower_bounds_from_paging_general_version}.}, $\ETAZERO$ and $\ETAONE$, where $\ETA_b$ is a function of the bits incorrectly predicted to be $b$.
\begin{definition}\label{def:competitiveness}
Let $\PAIRETA$ be a pair of error measures, let $P$ be an online minimization problem with predictions, and let $\ALG$ be a deterministic online algorithm for $P$.
If there exist three maps $\ALPHA, \BETA, \GAMMA \colon \INSTANCES{P} \rightarrow \RR_{\geqslant 0}$ and an additive constant, \AT, such that for all $I \in \INSTANCES{P}$,
\begin{align*}
\ALG(I) \leqslant \ALPHA \cdot \OPT(I) + \BETA \cdot \ETAZERO(I) + \GAMMA \cdot \ETAONE(I) + \AT,
\end{align*}
then $\ALG$ is \emph{$\ABC$-competitive with respect to $\PAIRETA$}.
When $\PAIRETA$ is clear from the context, we simply write that $\ALG$ is $\ABC$-competitive.
If $\AT \leqslant 0$, for all $I \in \INSTANCES{P}$, then $\ALG$ is \emph{strictly $\ABC$-competitive}. 
If \ALG is not $(\alpha,\beta,\gamma)$-competitive for any $\alpha,\beta,\gamma$, we simply say that it is {\em not competitive}.
\end{definition}
In our notation, $\ALPHA$, $\BETA$, and $\GAMMA$'s dependency on $I$ is kept implicit, as is tradition.
For the results in this paper, it is not necessary that $\alpha$, $\beta$, and $\gamma$ are functions of the input.
However, we are developing a framework that should be generally applicable, and many results from standard competitive analysis have competitive ratios that are functions of the input (size). Examples include graph coloring, dynamic binary search trees, and some scheduling problems.

Further, observe that any purely online $\ALPHA$-competitive algorithm is $(\ALPHA,0,0)$-competitive with respect to any pair of error measures $\PAIRETA$.

Since the performance of an algorithm is not measured by a competitive ratio, but by a triple, $(\alpha,\beta,\gamma)$, we cannot establish a total ordering of algorithms. Instead, we consider the concept of Pareto-optimality.

\begin{definition}
An $\ABC$-competitive algorithm is called \emph{Pareto-optimal} for a problem, $P$, if, for any $\eps > 0$, there cannot exist an $(\alpha - \eps ,\beta,\gamma)$-, an $(\alpha,\beta-\eps,\gamma)$-, or an $(\alpha,\beta,\gamma - \eps)$-competitive algorithm for $P$.
\end{definition}

\subsubsection{Error Measures of Interest}

We define a pair of error measures, $(\mu_0,\mu_1)$, equivalent to the pair of error measures from~\cite{ABEFHLPS23}, where $\MU_b$ is the number of incorrect predictions of $b \in \{0,1\}$:

\begin{definition}\label{def:error_measures_for_ASG_analysis}
For any instance, $I = (x,\hat{x},r)$, 
\begin{align*}
&\MUZERO(I) = \sum_{i=1}^n x_i \cdot (1-\hat{x}_i) \; \text{ and } \;
 \MUONE(I) = \sum_{i=1}^n (1-x_i) \cdot \hat{x}_i.
\end{align*}
\end{definition}

The following pair of error measures allows us to extend the results of this paper to purely online algorithms.
\begin{definition}\label{def:error_measure_Z}
For any instance, $I = (x,\hat{x},r)$,
$\ZEROMEASUREZERO(I) = \ZEROMEASUREONE(I) = 0$.
\end{definition}

In this paper, we will consider error measures with the property that
insertion of a finite number of correctly predicted bits into an instance does not increase
the error:
\begin{definition}\label{def:insertion_monotone_error_measures}
  An error measure $\ETA$ is called \emph{insertion monotone} if for any instance $I$,
$\ETA(I') \leqslant \ETA(I)$,
where $I'$ is obtained by inserting a finite number of correctly predicted requests into $I$.
\end{definition}

Clearly $\MU_b$ and $\ZEROMEASURE_b$, $b \in \{0,1\}$, are insertion monotone.\footnote{Other examples of insertion monotone functions are $L^p$ norms. Moreover, any linear combination of insertion monotone functions is insertion monotone.}
Although formally, the predictions $\hat{x}$ exist in general results that
assume insertion monotone measures,
these predictions are not used in the analysis when $\ZEROMEASURE_b$ is used;
the competitiveness must be obtained independently of the correctness of the prediction.
Thus, these
results hold for purely online algorithms, where the predictions do not exist.

\subsection{Graph Problems}
All graph problems mentioned in this paper are studied in the \emph{vertex-arrival} model, the most standard for online graph problems. 
Hence, for any graph $G = (V,E)$, each request $r_i$ is a vertex $v_i \in V$ that is revealed together with all edges of the form $(v_j,v_i) \in E$, where $j \leqslant i$.

We only consider simple unweighted graphs.

\section{Asymmetric String Guessing: \\ A Collection of Hard Problems}

Given a bitstring, $x$, and a $t \in \ZZ^+ \cup \{\infty\}$, the task of
an algorithm for $(1,t)$-Asymmetric String Guessing (\ASG{t}) is to correctly guess the contents of $x$.
The cost of a solution is the number of guesses of $1$ plus $t$ times
the number of incorrect guesses of $0$.
When $t=\infty$, the problem corresponds to the string guessing
problem considered in~\cite{BFKM17}.
We now define the problem more formally.

\begin{definition}\label{def:asg_t}
  For any $t \in \ZZ^+ \cup \{\infty\}$, an instance of the problem \emph{Online $(1,t)$-Asymmetric
  String Guessing with Unknown History and Predictions} ($\ASG{t}$) is
  a triple $I = (x,\hat{x},r)$, where $x = \langle x_1, \ldots x_n
  \rangle$ and $\hat{x} = \langle \hat{x}_1, \ldots, \hat{x}_n \rangle$ are
  bitstrings and $r=\langle r_1, \ldots, r_n \rangle$ is a
  sequence of requests.
  Each request, $r_i$, is a prompt for the algorithm to output a bit, $y_i$.
  Together with $r_i$, $\hat{x}_i$ is revealed, but $x$ is
  only revealed after the last request. 

Given an instance $I \in \INSTANCES{\ASG{t}}$ of $\ASG{t}$ with $t \in \ZZ^+$,
\begin{align*}
\ALG(I) = \sum_{i=1}^n \left(y_i + t \cdot x_i\cdot(1-y_i) \right),
\end{align*}
where $y_i$ is $\ALG$'s $i$'th guess.

When $t = \infty$, we abbreviate $\ASG{t}$ by $\ASGINFTY$ and rewrite the objective function as:
\begin{align*}
\ALG(I) = \begin{cases}
\displaystyle \sum_{i=1}^n y_i, &\text{ if } \: \displaystyle \sum_{i=1}^n x_i\cdot(1-y_i) = 0 \\
\infty, &\mbox{otherwise.}
\end{cases}
\end{align*}
\end{definition}

\subsection{$\boldsymbol{\ASG{t}}$ without Predictions}
For all $t \in \ZZ^+ \cup \{\infty\}$, we may consider $\ASG{t}$ as a purely online problem by omitting $\hat{x}$.
We briefly state the main results on $\ASG{t}$ without predictions.
The following observation is well-known.

\begin{observation}(\cite{BFKM17})\label{obs:no_competitive_algorithm_for_asg_infty}
  For any purely online algorithm, $\ALG$, for $\ASGINFTY$, there is no function, $f$, such that $\ALG$ is $f(n)$-competitive.
\end{observation}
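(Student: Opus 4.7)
The plan is to exploit the two sources of unboundedness available in $\ASGINFTY$: the infinite penalty for a wrong $0$-guess, and the possibility that $\OPT = 0$ on an all-zero instance. Since requests in $\ASGINFTY$ carry no information and there are no predictions, a deterministic algorithm $\ALG$ must produce the same output sequence $(y_i)_{i \geqslant 1}$ irrespective of the hidden string $x$, and I would case-split on this fixed sequence.

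In the first case, suppose some index $i$ has $y_i = 0$. Then I let the adversary terminate the input after $i$ requests with hidden string $x = 0^{i-1}1$. By the $t=\infty$ clause in Definition~\ref{def:asg_t}, $\ALG(x) = \infty$ since $x_i(1-y_i) = 1$, whereas $\OPT(x) = 1$ (as $\OPT$ simply outputs $x$). No finite value of $f(i)$ combined with a finite $\AT(x,r)$ can satisfy $\infty \leqslant f(i)\cdot 1 + \AT(x,r)$, so $\ALG$ fails to be $f(n)$-competitive on this instance, regardless of the choice of $f$ and $\AT$.

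In the complementary case, $y_i = 1$ for every $i$, and I use the all-zero instance $x = 0^n$ for arbitrarily large $n$. Here $\ALG(x) = n$ and $\OPT(x) = 0$, so competitiveness would require $n \leqslant f(n) \cdot 0 + \AT(0^n,r) = \AT(0^n,r)$. The technical point, which I view as the main obstacle, is that $\AT$ is not required to be a constant but only to lie in $o(\OPT)$, so one must rule out the possibility of a large instance-dependent additive slack. This is resolved by specializing~\eqref{eq:sublinear_additive_term} with $\DELTA = 1$ to zero-$\OPT$ instances, which yields $\AT(0^n,r) < b_1$ for a single constant $b_1$ independent of $n$, contradicting $\AT(0^n,r) \geqslant n$ for sufficiently large $n$ and completing the argument.
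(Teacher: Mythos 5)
Your proof is correct and takes essentially the same two-case approach as the paper's informal argument (if $\ALG$ ever guesses $0$, force an unbounded ratio via an infinite-cost miss; otherwise force it via an all-zeros instance where $\OPT=0$). You add welcome rigor in handling the non-constant additive term by specializing the $o(\OPT)$ definition to zero-$\OPT$ instances, which the paper elides, but the underlying idea is identical.
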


The observation follows from the fact that if an algorithm, $\ALG$, ever guesses $0$, there is an instance where \ALG guesses $0$ on a true $1$, and thus incurs cost $\infty$.
On the other hand, if $\ALG$ only guesses $1$, there is an instance consisting of only $0$'s, such that $\OPT$ will incur cost $0$ while the cost of \ALG is equal to the length of the sequence. 

\begin{theorem}\label{thm:always_guess_zero}
Let $t \in \ZZ^+$ and $\eps > 0$.
Then, for $\ASG{t}$, the following hold.
\begin{enumerate}[label = {(\roman*)}]
\item The algorithm that always guesses $0$ is $t$-competitive. \label{item:always_guess_0}
\item There is no purely online $(t-\eps)$-competitive deterministic algorithm. \label{item:no_t-eps_com_det_alg}
\end{enumerate}
\end{theorem}
\begin{proof}
  We consider each item separately.
  
\textbf{Towards~\ref{item:always_guess_0}:}
Let $\ALG$ be the algorithm that always guesses $0$. 
Then, for any instance $I = (x,r) \in \INSTANCES{\ASG{t}}$, 
\begin{align*}
\ALG(I) = \sum_{i=1}^{n} t \cdot x_i = t \cdot \sum_{i=1}^{n} x_i = t \cdot \OPT(I).
\end{align*}

\textbf{Towards~\ref{item:no_t-eps_com_det_alg}:}
Assume towards contradiction that there exists a deterministic purely online $(t-\eps)$-competitive algorithm, $\ALG$, for some $\eps>0$.
Then, there exists a constant $0<\eps'<1$ such that $\ALG$ is $(t-\eps')$-competitive.
This means that there exists an additive term, $\AT \in O(1)$, such that, for all $I = (x,r) \in \INSTANCES{\ASG{t}}$,
\begin{align}\label{eq:c_comp_alg_no_preds}
\ALG(I) \leqslant (t-\eps') \cdot \OPT(I) + \AT.
\end{align}

We define a family $\{I^n\}_{n\in\ZZ^+}$ of instances, where, for any $n \in \ZZ^+$ and any $i \in \{1,2,\ldots,n\}$, the $i$'th true bit in $I^n = (x^n,r^n)$ is 
\begin{align*}
x^n_i = \begin{cases}
0, &\mbox{if $y_i^n = 1$}, \\
1, &\mbox{if $y_i^n = 0$},
\end{cases}
\end{align*}
where $y_i^n$ is $\ALG$'s $i$'th guess when run on $I^n$.
Since $\ALG$ is deterministic, the collection $\{I^n\}_{n\in\ZZ^+}$ is well-defined.

For each $i = 1,2,\ldots,n$, if $y^n_i = 0$, then $x_i^n = 1$, and so $\ALG$ incurs cost $t$, and $\OPT$ incurs cost $1$. 
On the other hand, if $y_i^n = 1$, then $x_i^n = 0$, and so $\ALG$ incurs cost $1$, and $\OPT$ incurs cost $0$.
Hence, for each $n \in \ZZ^+$,
\begin{align}\label{eq:alg_cost}
  \ALG(I^n)
  & =  t \cdot \OPT(I^n) + 1 \cdot (n - \OPT(I^n)) \nonumber\\
  & = (t-1) \cdot \OPT(I^n) + n.
\end{align}
Combining Equation~\eqref{eq:alg_cost} with Inequality~\eqref{eq:c_comp_alg_no_preds}, we get
\begin{align*}
(t-1) \cdot \OPT(I^n) + n \leqslant (t-\eps') \cdot \OPT(I^n) + \AT.
\end{align*}
Solving for $\AT$, we get
\begin{align}\label{eq:kappa_lower}
  \AT
  &\geqslant n - (1-\eps') \cdot \OPT(I^n) \nonumber \\
  &\geqslant \eps' \cdot n, \text{ since } 1-\eps'>0 \text{ and } \OPT \leqslant n.
\end{align}
This contradicts that $\AT \in O(1)$.
\end{proof}

\subsection{$\boldsymbol{\ASG{t}}$ with Predictions and Error Measures $\boldsymbol{(\mu_0,\mu_1)}$}

Now, we turn to the hardness of $\ASG{t}$ with predictions, focusing on the pair $(\mu_0,\mu_1)$ of error measures.

An obvious algorithm for $\ASG{t}$ is \emph{Follow-the-Predictions}~($\FTP$), which always sets its guess, $y_i$, to the given prediction, $\hat{x}_i$.

\begin{theorem}\label{thm:ftp_equality_wrt_mu_0_and_mu_1}
  For any $t \in \ZZ^+$ and any $\alpha$ and $\beta$ such that $\alpha \geqslant 1$ and $\alpha + \beta \geqslant t$,
  $\FTP$ is strictly $(\alpha,\beta,1)$-competitive for $\ASG{t}$ with respect
  to $\PAIRMU$.
  \end{theorem}
\begin{proof}
  Consider any $I = (x,\hat{x},r) \in \INSTANCES{\ASG{t}}$ with $n = \abs{r}$, and let $y_i$ be \FTP's output on $r_i$, $1 \leqslant i \leqslant n$.
  Then,
\begin{align*}
\FTP(I) =\; &\sum_{i=1}^n \left( y_i + t \cdot x_i \cdot (1-y_i) \right) \\
=\; &\sum_{i=1}^n \left( \hat{x}_i + t \cdot x_i \cdot (1-\hat{x}_i) \right) \\
\leqslant\; &\sum_{i=1}^n \hat{x}_i + (\alpha+\beta) \cdot \sum_{i=1}^n \left(x_i \cdot (1-\hat{x}_i) \right), \text{ since } \alpha + \beta \geqslant t \\
=\; &\alpha \cdot \sum_{i=1}^n x_i + \beta \cdot \sum_{i=1}^n \left(x_i \cdot (1-\hat{x}_i) \right) + \sum_{i=1}^n (1-\alpha \cdot x_i) \cdot \hat{x}_i \\
\leqslant\; &\alpha \cdot \sum_{i=1}^n x_i + \beta \cdot \sum_{i=1}^n \left(x_i \cdot (1-\hat{x}_i) \right) + \sum_{i=1}^n (1- x_i) \cdot \hat{x}_i, \text{ since } \alpha \geqslant 1 \\
=\; &\alpha \cdot \OPT(I) + \beta \cdot \mu_0(I) + \mu_1(I).
\end{align*}
\end{proof}

\begin{corollary}\label{cor:FTP}
  For any $t \in \ZZ^+$,
  $\FTP$ is strictly $(1,t-1,1)$-competitive for $\ASG{t}$ with respect
  to $\PAIRMU$.
\end{corollary}

In the following, we extend a lower bound on Paging with Discard Predictions by Antoniadis et al.~\cite{ABEFHLPS23} to $\ASG{t}$.
For completeness, we state the theorem here, though restricted to $\ASG{t}$ and without proof.
In Theorem~\ref{thm:stronger_lower_bounds_from_paging_general_version} in Section~\ref{sec:pagingBounds}, it is restated and proven in a more general form.

\begin{theorem}\label{thm:stronger_lower_bounds_from_paging}
Let $t \in \ZZ^+$.
Then, for any $\ABC$-competitive algorithm for $\ASG{t}$ with respect to $\PAIRMU$,
\begin{enumerate}[label = {(\roman*)}]
\item $\ALPHA + \BETA \geqslant t$ and \label{thm:paging-1}
\item $\ALPHA + (t-1) \cdot \GAMMA \geqslant t$. \label{thm:paging-2}
\end{enumerate}
\end{theorem}

Next, we prove a negative result on the competitiveness of algorithms for $\ASG{t}$ that does not follow from
Theorem~\ref{thm:stronger_lower_bounds_from_paging}.
Together with Theorems~\ref{thm:always_guess_zero},~\ref{thm:ftp_equality_wrt_mu_0_and_mu_1}, and~\ref{thm:stronger_lower_bounds_from_paging}, this negative
result gives a complete classification of all Pareto-optimal algorithms for $\ASG{t}$ with respect to $\PAIRMU$.

\begin{lemma}\label{lem:small_alpha_gives_large_gamma}
Let $\ALG$ be an $\ABC$-competitive algorithm for $\ASG{t}$ with respect to $\PAIRMU$.
If $\alpha < t$, then $\gamma \geqslant 1$.
\end{lemma}
\begin{proof}
  Assume towards contradiction that $\ALG$ is $\ABC$-competitive, where
  $\alpha<t$ and $\gamma<1$.
  Then there exists an $\eps > 0$ such that $\alpha \leqslant t-\eps$ and $\gamma \leqslant 1-\eps$.

  Consider the family $\{I^n\}_{n\in\ZZ^+}$ of instances with $I^n=(x^n,\hat{x}^n,r^n)$, where
  \begin{itemize}
  \item $\hat{x}^n = \langle 1^n \rangle$ and
  \item $x^n_i = 1 - y^n_i$, where $y^n_i$ is \ALG's output on $r^n_i$, $1 \leqslant i \leqslant n$.
  \end{itemize}
Observe that $x^n$ is well-defined since $\ALG$ is deterministic.

Then,
\begin{align}
  \ALG(I^n)
  & = \sum_{i=1}^n \left( y^n_i + t \cdot (1-y^n_i) \right) \nonumber\\
  & = \sum_{i=1}^n \left( 1-x^n_i + t \cdot x^n_i \right), \text{ by definition of $x^n$} \nonumber\\
  & = n + (t-1) \cdot \sum_{i=1}^n x^n_i \label{eq:alg-1}
\end{align}

Since $\ALG$ is $(t-\eps,\beta,1-\eps)$-competitive, there exists a $\AT \in O(1)$, such that, for each $n$,
\begin{align}
\ALG(I^n)
  &\leqslant (t-\eps) \cdot \OPT(I^n) + (1-\eps) \cdot \mu_1(I^n) + \AT, \text{ since } \mu_0(I^n)=0  \nonumber\\
  &= (t-\eps) \cdot \sum_{i=1}^n x^n_i + (1-\eps) \cdot \sum_{i=1}^n (1-x^n_i)  + \AT\nonumber\\
  &= (t-1) \cdot \sum_{i=1}^n x^n_i + (1-\eps) \cdot n + \AT \label{eq:alg-2}
\end{align}

Combining~(\ref{eq:alg-1}) and~(\ref{eq:alg-2}) and solving for $\AT$, we get
\begin{align*}
 & n + (t-1) \cdot \sum_{i=1}^n x^n_i \leqslant (t-1) \cdot \sum_{i=1}^n x^n_i + (1-\eps) \cdot n + \AT \; \Leftrightarrow\\
 & \eps \cdot n < \AT,
\end{align*}
contradicting that $\AT \in O(1)$, since $\eps>0$.
\end{proof}

\begin{theorem}\label{thm:Pareto}
Let $\ALG$ be an $\ABC$-competitive algorithm for $\ASG{t}$.
Then, $\ALG$ is Pareto-optimal with respect to $(\mu_0,\mu_1)$, if and only if
\begin{enumerate}[label = {(\roman*)}]
\item $\alpha = t$ and $\beta = \gamma = 0$, or \label{item:pareto_optimal_purely_online}
\item $\alpha < t$, $\beta = t - \alpha$, and $\gamma = 1$.\label{item:pareto_optimal_predictions}
\end{enumerate}
\end{theorem}
\begin{proof}
We consider an $(\alpha,\beta,\gamma)$-competitive algorithm and split the proof into two cases based on the value of $\alpha$.

\textbf{Case~$\alpha \geqslant t$:}
By Theorem~\ref{thm:always_guess_zero}\ref{item:always_guess_0}, there exists a $(t,0,0)$-competitive algorithm.
This algorithm is Pareto-optimal, since neither the second nor the third entry can be improved and by Theorem~\ref{thm:always_guess_zero}\ref{item:no_t-eps_com_det_alg}, no $(t-\eps,0,0)$-competitive algorithm exists.
Thus, since $\alpha \geqslant t$, \ALG is Pareto-optimal, if and only if $\alpha=t$ and $\beta=\gamma=0$.

\textbf{Case~$\alpha < t$:}
We note that $\alpha \geqslant 1$, since no algorithm can be better than $1$-consistent.
Thus, by Theorem~\ref{thm:ftp_equality_wrt_mu_0_and_mu_1}, the algorithm \FTP is $(\alpha, t-\alpha, 1)$-competitive.
Hence, for \ALG to be Pareto-optimal, we must have that
\begin{enumerate}[label = {(\alph*)}]
\item $\beta < t-\alpha$, \label{item:pareto_alpha-beta}
\item $\gamma < 1$, or \label{item:pareto_gamma}
\item $\beta=t-\alpha$ and $\gamma=1$.
\end{enumerate}
By Theorem~\ref{thm:stronger_lower_bounds_from_paging}\ref{thm:paging-1}, \ref{item:pareto_alpha-beta} is impossible, and, by Lemma~\ref{lem:small_alpha_gives_large_gamma}, \ref{item:pareto_gamma} is impossible.
Thus, \ALG is Pareto-optimal, if and only if $\beta=t-\alpha$ and $\gamma=1$.
\end{proof}

Note that by Theorem~\ref{thm:always_guess_zero}\ref{item:always_guess_0}, the algorithm that always guesses zero is $t$-robust with respect to \PAIRMU.

\begin{corollary}
\FTP  and the algorithm that always guesses~0 are both Pareto-optimal with respect to $(\mu_0,\mu_1)$.
\end{corollary}

\begin{theorem}
  There is no competitive algorithm 
  for $\ASGINFTY$ with respect to $\PAIRMU$.
\end{theorem}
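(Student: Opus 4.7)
The plan is to mimic the computation in Theorem~\ref{thm:ftp_equality_wrt_mu_0_and_mu_1}, but to split on whether $\MUZERO(x,\hat{x}) = 0$ since the cost of $\ASGINFTY$ is defined piecewise. Fix an arbitrary instance $(x,\hat{x}) \in \INSTANCES{\ASGINFTY}$ with $n = \abs{x}$. Under $\FTP$, the guesses satisfy $y_i = \hat{x}_i$, so $\sum_{i=1}^n x_i \cdot (1-y_i) = \sum_{i=1}^n x_i \cdot (1-\hat{x}_i) = \MUZERO(x,\hat{x})$. Thus the penalty term that sends the cost of $\FTP$ to $\infty$ vanishes precisely when $\MUZERO(x,\hat{x}) = 0$.

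First I would handle the case $\MUZERO(x,\hat{x}) > 0$. Here $\FTP(x,\hat{x}) = \infty$, but also $\infty \cdot \MUZERO(x,\hat{x}) = \infty$, so the claimed inequality
\begin{align*}
\FTP(x,\hat{x}) \leqslant \OPT(x,\hat{x}) + \infty \cdot \MUZERO(x,\hat{x}) + \MUONE(x,\hat{x})
\end{align*}
holds trivially (and the additive term is $0$, so strictness is preserved).

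Next I would handle the case $\MUZERO(x,\hat{x}) = 0$, which means $\hat{x}_i = 1$ whenever $x_i = 1$, i.e. $x_i \hat{x}_i = x_i$ for every $i$. Then $\FTP(x,\hat{x}) = \sum_{i=1}^n \hat{x}_i = \sum_{i=1}^n \bigl(x_i \hat{x}_i + (1-x_i)\hat{x}_i\bigr) = \sum_{i=1}^n x_i + \MUONE(x,\hat{x}) = \OPT(x,\hat{x}) + \MUONE(x,\hat{x})$, using the convention that $\infty \cdot \MUZERO(x,\hat{x}) = \infty \cdot 0 = 0$. This gives the desired inequality with equality and additive term $0$.

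Combining the two cases yields the strict $(1,\infty,1)$-competitiveness of $\FTP$ with respect to $\PAIRMU$. There is no real obstacle here; the only subtlety is adhering to the stated convention that $\infty \cdot \ETA(x,\hat{x}) = 0$ when $\ETA(x,\hat{x}) = 0$, which is exactly what makes the $\MUZERO = 0$ case collapse to the clean identity $\FTP(x,\hat{x}) = \OPT(x,\hat{x}) + \MUONE(x,\hat{x})$.
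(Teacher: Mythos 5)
Your proof is correct and takes essentially the same route the paper has in mind: the paper's proof is a one-line instruction to adapt the computation from Theorem~\ref{thm:ftp_equality_wrt_mu_0_and_mu_1} using $\sum_i x_i\hat{x}_i \leqslant \sum_i x_i$, which is precisely the bound your $\MUZERO = 0$ case establishes (you derive the stronger equality $x_i\hat{x}_i = x_i$ under that hypothesis, but the inequality suffices), and your $\MUZERO > 0$ case handles the $\infty$-cost branch trivially, just as the piecewise definition of the $\ASGINFTY$ objective requires.
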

\begin{proof}
  Suppose towards contradiction that
  for all sequences $I$, $\ALG(I) \leqslant \alpha\OPT(I)+\beta\MUZERO(I)
  +\gamma\MUONE(I)+\AT$, where
  $\ALPHA, \BETA, \GAMMA \colon \INSTANCES{P} \rightarrow \RR_{\geqslant 0}$
  and $\AT$ is a constant.
Consider the family $\{I^n\}_{n\in\ZZ^+}$ of instances with $I^n=(x^n,\hat{x}^n,r^n)$, where
  \begin{itemize}
  \item $\hat{x}^n = \langle 0^n \rangle$ and
  \item $x^n_i = 1 - y^n_i$, where $y^n_i$ is \ALG's output on $r^n_i$, $1 \leqslant i \leqslant n$.
  \end{itemize}
Observe that $x^n$ is well-defined since \ALG is deterministic.
Also, note that
$\MUONE(I^n)=0$, so $\ALG(I^n) \leqslant \alpha\OPT(I^n)+\beta\MUZERO(I^n) +\AT$.

If $y^n_i=0$ for some $i$, then $x^n_i=1$, so \ALG incurs an infinite cost on $r_i$.
Thus, since neither $\alpha$ nor $\beta$ can be infinite, we have a contradiction.

Otherwise, $y^n_i=1$ for all $i$, so $\ALG(I^n)=n$, $\OPT(I^n)=0$ and
$\MUZERO(I^n) = 0$, implying that
$n\leqslant \alpha\cdot 0+\beta\cdot 0+\AT = \AT\in O(1)$,
giving a contradiction.

Thus, \ALG is not competitive.
\end{proof}

\section{Hierarchies of Complexity Classes}

In this section, we formally introduce the complexity classes, prove that, for each pair of error measures, they form a strict hierarchy, and show multiple fundamental structural properties of the complexity classes.

\subsection{Relative Hardness and Reductions}

We define relative hardness as follows:

\begin{definition}\label{def:as_hard_as}
Let $P$ and $Q$ be two online problems with predictions and error measures $\PAIRETA$ and \PAIRPHI.
We say that $Q$ is \emph{as hard as} $P$ \emph{with respect to \PAIRPHI and \PAIRETA}, if the existence of an $\ABC$-competitive algorithm for $Q$ with respect to $\PAIRPHI$ implies the existence of an $\ABC$-competitive algorithm for $P$ with respect to $\PAIRETA$.
If the error measures are clear from the context, we simply say that $Q$ is \emph{as hard as} $P$.
If $Q$ is as hard as $P$, we use the notation $Q\geq_{\textrm o} P$.
We also say that $P$ is \emph{no harder than} $Q$, denoting
this $P \leq_{\textrm o} Q$.
\end{definition}

It is not hard to see that the as-hard-as relation is both reflexive and transitive, but for completeness we give the proof here.

\begin{lemma}\label{lem:transitivity}
The relation as-hard-as is reflexive and transitive. 
\end{lemma}
\begin{proof}
We prove each property separately.

\textbf{Towards reflexivity:}
Proving reflexivity translates to proving that the existence of an $\ABC$-competitive algorithm with respect to $\PAIRETA$ for $P$ implies the existence of an $\ABC$-competitive algorithm with respect to $\PAIRETA$ for $P$, which is a tautology.

\textbf{Towards transitivity:}
Let $P$, $W$, and $Q$ be online maximization problems with predictions with error measures $\PAIRETA$, $\PAIRXI$, and $\PAIRPHI$, and assume that $Q\geq_{\textrm o} W$ and that $W \geq_{\textrm o} P$.
Let $\ALG_Q \in \ALGS{Q}$ be an $\ABC$-competitive algorithm for $Q$ with respect to $\PAIRPHI$.
Since $Q \geq_{\textrm o} W$ with respect to $\PAIRPHI$ and $\PAIRXI$, the existence of $\ALG_Q$ implies the existence of an $\ABC$-competitive algorithm with respect to $\PAIRXI$ for $W$, say $\ALG_W$. 
Since $W \geq_{\textrm o} P$ with respect to $\PAIRXI$ and $\PAIRETA$, the existence of $\ALG_W$ also implies the existence of an $\ABC$-competitive algorithm for $P$ with respect to $\PAIRETA$, and so $Q \geq_{\textrm o} P$ with respect to $\PAIRPHI$ and $\PAIRETA$.
\end{proof} 

As a tool for proving hardness, we introduce the notion of reductions.
A reduction from a problem, $P$, to another problem, $Q$, consists of
a mapping of instances of $P$ to instances of $Q$ and a mapping from
algorithms for $Q$ to algorithms for $P$, with the requirement that
$(\alpha, \beta, \gamma)$-competitive algorithms for $Q$ map to
$(\alpha, \beta, \gamma)$-competitive algorithms for $P$.
In this paper, we use a restricted type of reduction:

\begin{definition}\label{def:OR}
  Let $P$ and $Q$ be online minimization problems with predictions, and let $\PAIRETA$ and $\PAIRPHI$ be pairs of error measures for the predictions in $P$ and $Q$, respectively.
  Let $\rho = \ONLRED{\rho}$ be a tuple consisting of two maps, $\ORALG{\rho}\colon\ALGS{Q}\rightarrow\ALGS{P}$ and $\ORTRANS{\rho}\colon\ALGS{Q} \times \INSTANCES{P} \rightarrow \INSTANCES{Q}$.

  If there exists a constant, $a$, called the \emph{reduction term} of $\rho$, such that for each instance $I_P \in \INSTANCES{P}$ and each algorithm $\ALG_Q \in \ALGS{Q}$, letting $\ALG_P = \ORALG{\rho}(\ALG_Q)$ and $I_Q = \ORTRANS{\rho}(\ALG_Q,I_P)$,
\begin{enumerate}[label = {(O\arabic*)}]
\item $\ALG_P(I_P) \leqslant \ALG_Q(I_Q) + a$,\label{item:OR_condition_ALG}
\item $\OPT_Q(I_Q) \leqslant \OPT_P(I_P)$, \label{item:OR_condition_OPT}
\item $\PHIZERO(I_Q) \leqslant \ETAZERO(I_P)$, and $\PHIONE(I_Q) \leqslant \ETAONE(I_P)$, \label{item:OR_condition_eta}
\end{enumerate}
then $\rho$ is called a \emph{strict online reduction from $P$ to $Q$ with respect to $\PAIRETA$ and $\PAIRPHI$}.
If
the pairs $\PAIRETA$ and $\PAIRPHI$ are clear from the context, then we simply say that $\rho$ is
a \emph{strict online reduction} from $P$ to $Q$, and write $\rho \colon P \orarrow Q$.
\end{definition}

In many of the examples to follow, we do not need a (nonzero) reduction term to establish the reductions, and we only mention the reduction term when needed.

\begin{lemma}\label{lem:why_the_naming}
  Let $\rho = \ONLRED{\rho}$ be a strict online reduction from a problem, $P$, with predictions and error measures \PAIRETA to a problem, $Q$, with predictions and error measures \PAIRPHI.
  Let $\ALG_Q \in \ALGS{Q}$ be an $\ABC$-competitive algorithm for $Q$ with respect to $\PAIRPHI$, and let $\ALG_P = \ORALG{\rho}(\ALG_Q)$.
  Then, $\ALG_P$ is an $\ABC$-competitive algorithm for $P$ with respect to $\PAIRETA$.
\end{lemma}
\begin{proof}
Consider any instance, $I_P \in \INSTANCES{P}$, and let $I_Q = \ORTRANS{\rho}(\ALG_Q,I_P)$.
Then, by Condition~\ref{item:OR_condition_ALG} and Definition~\ref{def:competitiveness}, there exists constants, $a$ and \AT, such that
\begin{align*}
\ALG_P(I_P) \leqslant & \ALG_Q(I_Q) + a \\ 
\leqslant & \: \ALPHA \cdot \OPT_Q(I_Q) + \BETA \cdot \PHIZERO(I_Q) + \GAMMA \cdot \PHIONE(I_Q) + \AT + a \\
\leqslant & \: \ALPHA \cdot \OPT_P(I_P) + \BETA \cdot \ETAZERO(I_P) + \GAMMA \cdot \ETAONE(I_P) + \AT + a, \\
          & \text{ by~\ref{item:OR_condition_OPT} and~\ref{item:OR_condition_eta}} 
\end{align*}
\end{proof}

\begin{observation}\label{obs:reductions_hardness}
  Lemma~\ref{lem:why_the_naming} implies that strict online
reductions serve the desired purpose of reductions: If there
exists a strict online reduction $\rho \colon P \orarrow Q$,
then $Q \geq_{\textrm o} P$.
\end{observation}
When using strict online reductions, we will often simply use the term
\emph{reduction}.
  
For the rest of this paper, we
only
consider reductions
where the quality of predictions
is measured using the same pair of error measures for both problems.

\subsection{Introducing the Complexity Classes}\label{sec:complexity_classes_def}

For any pair, $\PAIRETA$, of error measures and any $t \in \ZZ^+ \cup \{\infty\}$, we define the complexity classes $\CCWM{\PAIRETAFORCC}{t}$ as the set of minimization problems with predictions that are no harder than $\ASG{t}$ with respect to $\PAIRETA$:

\begin{definition}\label{def:complexityClasses}
For each $t \in \ZZ^+ \cup\{\infty\}$ and each pair of error measures, $\PAIRETA$, the complexity class $\CCWM{\PAIRETAFORCC}{t}$ is the closure of $\ASG{t}$ under the as-hard-as relation with respect to $\PAIRETA$. 
Hence, for an online minimization problem, $P$, 
\begin{itemize}
\item $P \in \CCWM{\PAIRETAFORCC}{t}$, if $\ASG{t} \geq_{\textrm o} P$,
\item  $P$ is \emph{$\CCWM{\PAIRETAFORCC}{t}$-hard}, if $P \geq_{\textrm o} \ASG{t}$, and
\item $P$ is \emph{$\CCWM{\PAIRETAFORCC}{t}$-complete}, if $P \in \CCWM{\PAIRETAFORCC}{t}$ and $P$ is $\CCWM{\PAIRETAFORCC}{t}$-hard.
\end{itemize}
The notation \CCWM{\PAIRETAFORCC}{\infty} is usually abbreviated \CCINFTYWM{\PAIRETAFORCC}.
\end{definition}

\begin{observation}\label{obs:ASG_classes} By Observation~\ref{obs:reductions_hardness} and
  Lemma~\ref{lem:why_the_naming}, if $P$ and $Q$ are $\CCWM{\PAIRETAFORCC}{t}$-complete problems, then there exists an $\ABC$-competitive algorithm for $P$ if and only if there exists an $\ABC$-competitive algorithm for $Q$. Thus, all of the results for $\ASG{t}$ hold for the complete problems in $\CCWM{\PAIRETAFORCC}{t}$, and the upper bounds for $\ASG{t}$ hold for all of the problems in $\CCWM{\PAIRETAFORCC}{t}$.
\end{observation}

Since the as-hard-as relation is reflexive, $\ASG{t}$ is $\CCWM{\PAIRETAFORCC}{t}$-complete for any pair of error measures, $\PAIRETA$, and any $t$. 
Further, due to transitivity, we have the following simple but important properties that one would expect should hold for complexity classes.

\begin{theorem}\label{lem:different_base_problem}
Let $t \in \ZZ^+ \cup \{\infty\}$ and let $\PAIRETA$ be any pair of error measures.
\begin{enumerate}[label = {(\roman*)}]
\item If $P \in \CCWM{\PAIRETAFORCC}{t}$ and $P \geq_{\textrm o} Q$, then $Q \in \CCWM{\PAIRETAFORCC}{t}$. \label{item:sub_problems}
\item If $P$ is $\CCWM{\PAIRETAFORCC}{t}$-hard and $Q \geq_{\textrm o} P$, then $Q$ is $\CCWM{\PAIRETAFORCC}{t}$-hard.  \label{item:sup_problems}
\end{enumerate}
\end{theorem}
\begin{proof}
We prove each item separately.
  
\textbf{Towards~\ref{item:sub_problems}:}
Since $P \in \CCWM{\PAIRETAFORCC}{t}$, $\ASG{t} \geq_{\textrm o} P$.
Since $P \geq_{\textrm o} Q$, transitivity implies that $\ASG{t} \geq_{\textrm o} Q$, and thus $Q \in \CCWM{\PAIRETAFORCC}{t}$.

\textbf{Towards~\ref{item:sup_problems}:}
Since $P$ is $\CCWM{\PAIRETAFORCC}{t}$-hard, $P \geq_{\textrm o} \ASG{t}$.
Since $Q \geq_{\textrm o} P$, transitivity implies that $Q \geq_{\textrm o} \ASG{t}$, and thus we conclude that $Q$ is $\CCWM{\PAIRETAFORCC}{t}$-hard.
\end{proof}

This theorem implies results concerning special cases of a problem:

\begin{corollary}\label{cor:sub_and_sup_problem}
Let $t \in \ZZ^+ \cup \{\infty\}$ and let $\PAIRETA$ be any pair of error measures.
Let $P$ and $P_{\on{sub}}$ be online minimization problems such that $\INSTANCES{P_{\on{sub}}}
\subseteq \INSTANCES{P}$.
\begin{enumerate}[label=(\roman*)]
\item If $P \in \CCWM{\PAIRETAFORCC}{t}$, then $P_{\on{sub}} \in \CCWM{\PAIRETAFORCC}{t}$. \label{item:sub}
\item If $P_{\on{sub}}$ is $\CCWM{\PAIRETAFORCC}{t}$-hard, then $P$ is $\CCWM{\PAIRETAFORCC}{t}$-hard. \label{item:sup}
\end{enumerate}
\end{corollary}
\begin{proof}
There exists a trivial online reduction, $\rho \colon P_{\on{sub}} \orarrow P$, obtained by setting $\ORALG{\rho}(\ALG) = \ALG$ and $\ORTRANS{\rho}(\ALG,I) = I$, for all algorithms $\ALG \in \ALGS{P}$ and all instances $I \in \INSTANCES{P_{\on{sub}}}$.
Hence, this is a consequence of Theorem~\ref{lem:different_base_problem}.
\end{proof}

Note that Theorem~\ref{lem:different_base_problem} implies the following.
\begin{observation}\label{obs:anyothercanbeused}
  Any $\CCWM{\PAIRETAFORCC}{t}$-complete problem can be used instead of \ASG{t} as the base problem when defining $\CCWM{\PAIRETAFORCC}{t}$.
\end{observation}
As shown in Lemmas~\ref{lem:bdvc_t_hardness} and~\ref{lem:bdvc_t_containment}, the better known Online $t$-Bounded Degree Vertex Cover with Predictions is $\CCWM{\PAIRETAFORCC}{t}$-complete with respect to a wide range of pairs of error measures and may therefore be used as the basis of these complexity classes instead of $\ASG{t}$. 
However, we chose to define the complexity classes as the closure of $\ASG{t}$, due to it being a generic problem that is easily analyzed.
Also, after establishing the first $\CCWM{\PAIRETAFORCC}{t}$-hard problems, we may reduce from any $\CCWM{\PAIRETAFORCC}{t}$-hard problem to prove hardness.
Finally, a problem $Q$ is $\CCWM{\PAIRETAFORCC}{t}$-hard, if and only if $Q \geq_{\textrm o} P$, for all $P \in \CCWM{\PAIRETAFORCC}{t}$.
This is in line with the structure of other complexity classes such as NP and APX, where a problem $Q$ is NP-hard, respectively APX-hard, if and only if there exists a polynomial-time reduction, respectively PTAS-reduction, from any problem in NP, respectively APX, to $Q$.

\begin{proposition}\label{prop:ASG_complete}
  For every problem $P$ in $\CCWM{\PAIRMUFORCC}{t}$, there exists a $(t,0,0)$-com\-peti\-tive algorithm and for every
  $1 \leqslant \alpha \leqslant t$, there exists a $(\alpha,t-\alpha,1)$-competitive algorithm
  for $P$. If $P$ is $\CCWM{\PAIRMUFORCC}{t}$-complete, this is tight for Pareto-optimal algorithms for $P$.
\end{proposition}
\begin{proof}
  By Theorems~\ref{thm:ftp_equality_wrt_mu_0_and_mu_1} and~\ref{thm:Pareto}, the theorem holds for $\ASG{t}$.
  Thus, by Observation~\ref{obs:ASG_classes}, it also holds for any problems in $\CCWM{\PAIRETAFORCC}{t}$. 
  \end{proof}

Note that by Proposition~\ref{prop:ASG_complete}, all of the problems in $\CCWM{\PAIRMUFORCC}{t}$  have a $(1,t-1,1)$-competitive algorithm, so there are algorithms with consistency~1.
Moreover, problems in the class indexed by $t$ have a $t$-robust algorithm, and for complete problems this is tight.

In introducing some initial complexity classes to the field of online algorithms (with and without predictions), we define reductions from complete problems for each class to all members of the class. The reductions map algorithms for one problem to algorithms for another problem. To show inclusion of a problem $P$ in a class, the map is from $P$ to a complete problem.
The original complete problems for each class have binary outputs and predictions.
In contrast, the Paging problem, with discard predictions for whether or not the current page will be in a selected \OPT's cache the next time it is requested, obtains an optimal solution by choosing pages to evict among those that would not be in \OPT's cache. Thus, the predictions are an encoding of all optimal solutions that would have a page in cache, if the selected \OPT would. The sequence of pages chosen for eviction is the output.

\subsection{Establishing the Hierarchy}

In this subsection, we show that our complexity classes form a strict
hierarchy by showing that $\ASG{t+1}$ is strictly harder than $\ASG{t}$, for any $t \in \ZZ^+$.

\begin{lemma}\label{lem:hierarchy_lemma}
For any $t \in \ZZ^+$ and any pair of error measures, $\PAIRETA$, 
\begin{enumerate}[label = {(\roman*)}]
\item $\ASG{t+1} \geq_{\textrm o} \ASG{t}$, and \label{item:or_from_asg_t_to_asg_t+1}
\item $\ASG{t}$ is not as hard as $\ASG{t+1}$. \label{item:no_or_from_asg_t+1_to_asg_t}
\end{enumerate}
\end{lemma}
\begin{proof}
We prove each item separately.
  
\textbf{Towards~\ref{item:or_from_asg_t_to_asg_t+1}:}
We give a reduction, $\rho = \ONLRED{\rho}$, from $\ASG{t}$ to $\ASG{t+1}$.
For any  $I=(x,\hat{x},r) \in \INSTANCES{\ASG{t}}$ and $\ALG_{t+1} \in \ALGS{\ASG{t+1}}$, we let $\ORTRANS{\rho}(\ALG_{t+1}, I) = I$.
We let $\ALG_{t} = \ORALG{\rho}(\ALG_{t+1})$ be the algorithm which, for each request, passes the prediction bit on to $\ALG_{t+1}$ and outputs the same bit as $\ALG_{t+1}$.

We verify that this reduction satisfies Conditions~\ref{item:OR_condition_ALG}--\ref{item:OR_condition_eta} from Definition~\ref{def:OR}.
Since $\ORTRANS{\rho}(\ALG_{t+1},I) = I$, Conditions~\ref{item:OR_condition_OPT}--\ref{item:OR_condition_eta} are immediate for any pair of error measures $\PAIRETA$. 
Towards Condition~\ref{item:OR_condition_ALG}, denote by $y_1,y_2,\ldots,y_n$ the bits guessed by $\ALG_{t+1}$, and therefore also by $\ALG_t$.
Then,
\begin{align*}
\ALG_{t+1}(I) - \ALG_t(I) =& \sum_{i=1}^n \left(y_i + (t+1) \cdot (1-y_i) \cdot x_i\right) \\
&- \sum_{i=1}^n \left(y_i + t\cdot (1-y_i) \cdot x_i\right) \\
=& \sum_{i=1}^n (1-y_i) \cdot x_i \geqslant 0,
\end{align*}
and Condition~\ref{item:OR_condition_ALG} follows.

\textbf{Towards~\ref{item:no_or_from_asg_t+1_to_asg_t}:}
Assume towards contradiction that $\ASG{t} \geq_{\textrm o} \ASG{t+1}$.
Then, for any $\ABC$-competitive algorithm for $\ASG{t}$, there exists an $\ABC$-competitive algorithm for $\ASG{t+1}$.
By Theorem~\ref{thm:always_guess_zero}\ref{item:always_guess_0}, the algorithm that always guesses $0$ is $(t,0,0)$-competitive for $\ASG{t}$.
Hence, there exists a $(t,0,0)$-competitive algorithm for $\ASG{t+1}$, which contradicts Theorem~\ref{thm:always_guess_zero}\ref{item:no_t-eps_com_det_alg}.
\end{proof}

\begin{lemma}
For any $t \in \ZZ^+$ and any pair of error measures, $\PAIRETA$, $\CCWM{\PAIRETAFORCC}{t} \subsetneq \CCWM{\PAIRETAFORCC}{t+1}$. 
\end{lemma}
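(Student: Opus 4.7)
The plan is to prove the two directions of the strict inclusion separately, using the preceding Lemma~\ref{lem:hierarchy_lemma} together with the transitivity of online reductions established in Theorem~\ref{thm:online_reductions_closed_under_composition}.

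First, for the inclusion $\CCWM{\PAIRETAFORCC}{t} \subseteq \CCWM{\PAIRETAFORCC}{t+1}$, I would take an arbitrary problem $P \in \CCWM{\PAIRETAFORCC}{t}$, which by definition means there is an online reduction $\sigma \colon P \orarrow \ASG{t}$ with respect to $\PAIRETA$. By Lemma~\ref{lem:hierarchy_lemma}\ref{item:or_from_asg_t_to_asg_t+1}, there is an online reduction $\rho \colon \ASG{t} \orarrow \ASG{t+1}$ with respect to $\PAIRETA$. Applying Theorem~\ref{thm:online_reductions_closed_under_composition}\ref{item:composition} to the pair $(\sigma,\rho)$ yields an online reduction from $P$ to $\ASG{t+1}$ with respect to $\PAIRETA$, which shows $P \in \CCWM{\PAIRETAFORCC}{t+1}$.

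For the strictness, I would exhibit a separating problem: $\ASG{t+1}$ itself. The identity reduction witnesses $\ASG{t+1} \in \CCWM{\PAIRETAFORCC}{t+1}$ (this is the trivial remark made just before Corollary~\ref{cor:different_base_problem}). Suppose for contradiction that $\ASG{t+1} \in \CCWM{\PAIRETAFORCC}{t}$. By definition, this would give an online reduction $\tau \colon \ASG{t+1} \orarrow \ASG{t}$ with respect to $\PAIRETA$, which directly contradicts Lemma~\ref{lem:hierarchy_lemma}\ref{item:no_or_from_asg_t+1_to_asg_t}. Hence $\ASG{t+1} \notin \CCWM{\PAIRETAFORCC}{t}$, completing the strict separation.

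There is no real obstacle here: both pieces are essentially direct corollaries of the previous results, with Theorem~\ref{thm:online_reductions_closed_under_composition}\ref{item:composition} handling the inclusion and Lemma~\ref{lem:hierarchy_lemma}\ref{item:no_or_from_asg_t+1_to_asg_t} handling the strictness. The only subtlety worth double-checking is that both reductions being composed (and the hypothetical reduction ruled out by the lower bound) are considered with respect to the same pair of error measures $\PAIRETA$, which matches the convention adopted just before Section~\ref{sec:complexity_classes_def} that online reductions are taken with a single shared pair of error measures on both sides.
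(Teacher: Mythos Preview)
Your proposal is correct and follows essentially the same route as the paper's own proof: inclusion via composing an arbitrary reduction $P \orarrow \ASG{t}$ with the reduction $\ASG{t} \orarrow \ASG{t+1}$ from Lemma~\ref{lem:hierarchy_lemma}\ref{item:or_from_asg_t_to_asg_t+1} using Theorem~\ref{thm:online_reductions_closed_under_composition}, and strictness via $\ASG{t+1}$ as the separating problem invoking Lemma~\ref{lem:hierarchy_lemma}\ref{item:no_or_from_asg_t+1_to_asg_t}.
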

\begin{proof}
We first prove that $\CCWM{\PAIRETAFORCC}{t} \subseteq \CCWM{\PAIRETAFORCC}{t+1}$. 
For any $P \in \CCWM{\PAIRETAFORCC}{t}$, $\ASG{t} \geq_{\textrm o} P$, and, by Lemma~\ref{lem:hierarchy_lemma}\ref{item:or_from_asg_t_to_asg_t+1}, $\ASG{t+1} \geq_{\textrm o} \ASG{t}$.
Thus, by transitivity (Lemma~\ref{lem:transitivity}), $\ASG{t+1} \geq_{\textrm o} P$, and so $P \in \CCWM{\PAIRETAFORCC}{t+1}$.
To see that $\CCWM{\PAIRETAFORCC}{t} \subsetneq \CCWM{\PAIRETAFORCC}{t+1}$, observe that $\ASG{t+1} \in \CCWM{\PAIRETAFORCC}{t+1}$ and, by Lemma~\ref{lem:hierarchy_lemma}\ref{item:no_or_from_asg_t+1_to_asg_t}, $\ASG{t+1} \not\in \CCWM{\PAIRETAFORCC}{t}$.
\end{proof}

\begin{lemma}
For any $t \in \ZZ^+$ and any pair of error measures, $\PAIRETA$, no
$\CCWM{\PAIRETAFORCC}{t+1}$-hard problem is in $\CCWM{\PAIRETAFORCC}{t}$.
\end{lemma}
\begin{proof}
Assume towards contradiction that $\CCWM{\PAIRETAFORCC}{t}$ contains a $\CCWM{\PAIRETAFORCC}{t+1}$-hard problem, $P$.
Then, $\ASG{t} \geq_{\textrm o} P$, and $P \geq_{\textrm o} \ASG{t+1}$.
Thus, due to transitivity, $\ASG{t} \geq_{\textrm o} \ASG{t+1}$, contradicting Lemma~\ref{lem:hierarchy_lemma}.
\end{proof}

By similar arguments, we get the following:
\begin{lemma}
For any $t \in \ZZ^+$ and any pair of error measures, $\PAIRETA$, 
\begin{itemize}
\item $\CCWM{\PAIRETAFORCC}{t} \subsetneq \CCINFTYWM{\PAIRETAFORCC}$, and
\item $\CCWM{\PAIRETAFORCC}{t}$ contains no $\CCINFTYWM{\PAIRETAFORCC}$-hard problems.
\end{itemize}
\end{lemma}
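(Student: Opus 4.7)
The plan is to replay the argument structure of Lemma~\ref{lem:hierarchy_lemma} and the two lemmas immediately preceding this one, with $\ASG{t+1}$ replaced by $\ASGINFTY$. The two ingredients I would establish first are: (A) there is an online reduction $\rho \colon \ASG{t} \orarrow \ASGINFTY$ with $k_\textsc{a} = k_\textsc{o} = 0$; and (B) no online reduction $\tau \colon \ASGINFTY \orarrow \ASG{t}$ exists. Both bullets of the lemma will then follow routinely by composition with the reduction defining membership in $\CCWM{\PAIRETAFORCC}{t}$.

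For (A), I would mirror the construction used in Lemma~\ref{lem:hierarchy_lemma}\ref{item:or_from_asg_t_to_asg_t+1}: set $\ORTRANS{\rho}(\ALG) = \on{id}$ and let $\ORALG{\rho}(\ALG) = \ALG'$ be the algorithm that copies $\ALG$'s guesses on the same instance viewed as an $\ASG{t}$ input. Since the instance is unchanged and $\OPT_{\ASG{t}}(x,\hat{x}) = \sum_i x_i = \OPT_{\ASGINFTY}(x,\hat{x})$, conditions \ref{item:OR_condition_OPT_1}--\ref{item:OR_condition_eta_1} hold trivially. For \ref{item:OR_condition_ALG}, I would split into cases: if $\ALG$ ever guesses $0$ on a true $1$, then $\ALG(x,\hat{x}) = \infty$ and the inequality is immediate; otherwise $x_i(1-y_i) = 0$ for every $i$, so $\ALG'(x,\hat{x}) = \sum_i y_i + t \sum_i x_i(1-y_i) = \sum_i y_i = \ALG(x,\hat{x})$.

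For (B), suppose for contradiction that such a $\tau$ exists. By Theorem~\ref{thm:always_guess_zero}\ref{item:always_guess_0}, the algorithm that always guesses $0$ is $t$-competitive, hence $(t,0,0)$-competitive for $\ASG{t}$ with respect to $\PAIRETA$. Since $\ALPHA = t$ is a constant we have $\ALPHA \cdot k_\textsc{o} \in o(\OPT_{\ASGINFTY})$ automatically, so Lemma~\ref{lem:why_the_naming} produces a $(t,0,0)$-competitive algorithm for $\ASGINFTY$ with respect to $\PAIRETA$. Fixing any default prediction string converts this into a $t$-competitive algorithm for $\ASGINFTY$ without predictions, which contradicts Observation~\ref{obs:no_competitive_algorithm_for_asg_infty}.

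With (A) and (B) in hand, the inclusion $\CCWM{\PAIRETAFORCC}{t} \subseteq \CCINFTYWM{\PAIRETAFORCC}$ follows because any $P \in \CCWM{\PAIRETAFORCC}{t}$ admits a reduction $P \orarrow \ASG{t}$, which composes (via Theorem~\ref{thm:online_reductions_closed_under_composition}) with the reduction from (A) to yield $P \orarrow \ASGINFTY$. Strictness is witnessed by $\ASGINFTY$ itself, which is $\CCINFTYWM{\PAIRETAFORCC}$-complete but, by (B), cannot lie in $\CCWM{\PAIRETAFORCC}{t}$. For the second bullet, if a $\CCINFTYWM{\PAIRETAFORCC}$-hard $P$ were also in $\CCWM{\PAIRETAFORCC}{t}$, composing $\ASGINFTY \orarrow P$ with $P \orarrow \ASG{t}$ would produce $\ASGINFTY \orarrow \ASG{t}$, again contradicting (B). The main subtlety is (A)'s case split around the $\infty$-cost convention of $\ASGINFTY$; everything else is a direct translation of the finite-$t$ proofs.
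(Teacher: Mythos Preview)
Your proposal is correct and is exactly the argument the paper has in mind: the paper does not spell out this proof but simply writes ``By similar arguments as above,'' meaning one replays Lemma~\ref{lem:hierarchy_lemma} and the two subsequent lemmas with $\ASG{t+1}$ replaced by $\ASGINFTY$, precisely as you do. Your case split in (A) to handle the $\infty$-cost convention and your use of Observation~\ref{obs:no_competitive_algorithm_for_asg_infty} in place of Theorem~\ref{thm:always_guess_zero}\ref{item:no_t-eps_comp_alg_with_preds} for (B) are the only adaptations needed, and both are sound.
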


\begin{theorem}\label{thm:hierarchy}
For any pair of error measures $\PAIRETA$, we have a strict hierarchy of complexity classes:
\begin{align*}
\CCWM{\PAIRETAFORCC}{1} \subsetneq \CCWM{\PAIRETAFORCC}{2} \subsetneq \CCWM{\PAIRETAFORCC}{3} \subsetneq \cdots \subsetneq \CCINFTYWM{\PAIRETAFORCC}.
\end{align*}
\end{theorem}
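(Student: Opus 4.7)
The statement is the direct culmination of the three lemmas immediately preceding it, so my plan is to simply assemble them. Concretely, for each fixed $t \in \ZZ^+$, the penultimate displayed lemma gives $\CCWM{\PAIRETAFORCC}{t} \subsetneq \CCWM{\PAIRETAFORCC}{t+1}$, and the final lemma before the theorem gives $\CCWM{\PAIRETAFORCC}{t} \subsetneq \CCINFTYWM{\PAIRETAFORCC}$. Chaining these for $t = 1, 2, 3, \ldots$ yields the displayed chain of strict inclusions, with $\CCINFTYWM{\PAIRETAFORCC}$ strictly above every finite level.

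First, I would note that the inclusions $\CCWM{\PAIRETAFORCC}{t} \subseteq \CCWM{\PAIRETAFORCC}{t+1}$ follow by composing any online reduction $P \orarrow \ASG{t}$ with the reduction $\ASG{t} \orarrow \ASG{t+1}$ from Lemma~\ref{lem:hierarchy_lemma}.\ref{item:or_from_asg_t_to_asg_t+1}, using Theorem~\ref{thm:online_reductions_closed_under_composition}. Second, strictness at each level is witnessed by $\ASG{t+1}$ itself, which belongs to $\CCWM{\PAIRETAFORCC}{t+1}$ trivially, but cannot belong to $\CCWM{\PAIRETAFORCC}{t}$: otherwise, composing a putative reduction $\ASG{t+1} \orarrow \ASG{t}$ would contradict Lemma~\ref{lem:hierarchy_lemma}.\ref{item:no_or_from_asg_t+1_to_asg_t}. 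The same argument, using the last lemma in the excerpt (applied to the $\CCINFTYWM{\PAIRETAFORCC}$-hard problem $\ASGINFTY$), shows that $\ASGINFTY \notin \CCWM{\PAIRETAFORCC}{t}$ for any finite $t$, giving the final strict containment.

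Since every ingredient has already been established, there is no real obstacle; the only thing to be careful about is phrasing the statement so that the chain and the top element $\CCINFTYWM{\PAIRETAFORCC}$ both follow from the cited lemmas without re-proving anything. I would therefore write the proof as a one-line invocation of the three preceding lemmas, noting that the strictness of each $\CCWM{\PAIRETAFORCC}{t} \subsetneq \CCWM{\PAIRETAFORCC}{t+1}$ is witnessed by $\ASG{t+1}$ and the strictness of $\CCWM{\PAIRETAFORCC}{t} \subsetneq \CCINFTYWM{\PAIRETAFORCC}$ is witnessed by $\ASGINFTY$.
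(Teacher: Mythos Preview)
Your proposal is correct and matches the paper's approach: the theorem is stated without proof in the paper, being an immediate consequence of the preceding lemmas exactly as you describe. Your identification of the witnesses $\ASG{t+1}$ and $\ASGINFTY$ for strictness is precisely what the lemmas establish.
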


\subsection{Purely Online Algorithms}

Observe that our complexity theory extends to a complexity theory for purely online algorithms as well. 
In particular, one may consider the complexity classes, $\CCWM{\PAIRZEROMEASUREFORCC}{t}$.
Recall that $\ZEROMEASUREZERO(I) = \ZEROMEASUREONE(I) = 0$, for any instance $I = (x,\hat{x},r)$.
In this framework, any $\ABC$-competitive algorithm, $\ALG$, for an online minimization problem $P$, satisfies that
\begin{align*}
\ALG(I) &\leqslant \ALPHA \cdot \OPT(I) + \BETA \cdot \ZEROMEASUREZERO(I) + \GAMMA \cdot \ZEROMEASUREONE(I) + \AT \\
&= \ALPHA \cdot \OPT(I) + \AT,
\end{align*}
for all instances $I \in \INSTANCES{P}$, and so $\ALG$ is an $\ALPHA$-competitive purely online algorithm for $P$. Since \ASG{t} has a $t$-competitive
algorithm (it guesses all $0$s) and is as hard as all problems in $\CCWM{\PAIRZEROMEASUREFORCC}{t}$, all problems in $\CCWM{\PAIRZEROMEASUREFORCC}{t}$
are (at worst) $t$-competitive.

By the above and Theorem~\ref{lem:different_base_problem}, if a purely online problem, $P$, has an $\alpha$-competitive algorithm, it is in
$\CCWM{\PAIRZEROMEASUREFORCC}{t}$, for any $t \geqslant \alpha$.
If, in addition, $P$ has no $(t-\eps)$-competitive algorithm, $P$ is
$\CCWM{\PAIRZEROMEASUREFORCC}{t}$-complete.

Since most results in this paper hold for any insertion monotone error measure, and $\ZEROMEASUREZERO$ and $\ZEROMEASUREONE$ are insertion monotone, 
we obtain
a similar complexity theory for purely online algorithms.

\begin{observation}
  All results in this paper on problems and complexity classes with respect to insertion monotone error measures also hold for the same problems and classes without predictions.
\end{observation}

Note that our complexity theory may help translating results for one purely online problem directly to other purely online problems.
In Section~\ref{sec:pagingBounds}, we discuss a strategy for proving lower bounds for all $\CCWM{\PAIRMUFORCC}{t}$-hard problems.
Using the same strategy, the lower bound from Theorem~\ref{thm:always_guess_zero}\ref{item:no_t-eps_com_det_alg} on the competitive ratio of any purely online algorithm for $\ASG{t}$ extends to a lower bound on the competitiveness of any purely online algorithm for any $\CCWM{\PAIRZEROMEASUREFORCC}{t}$-hard problems.

\begin{observation}
  Assume that some minimization problem, $Q$, is as hard as some minimization problem, $P$, with respect to any pair of insertion monotone error measures.
  Then any lower bound on the competitiveness of the purely online version of $P$ is a lower bound on the competitiveness of the purely online version of $Q$.
  Similarly, any upper bound for the purely online version of $Q$ is an upper bound for the purely online version of $P$.
\end{observation}

\section{A Template for Online Reductions via Simulation}\label{sec:reduction_template}

In this section, we introduce a template for creating online reductions from $\ASG{t}$ to a problem, $Q$, implying $\CCWM{\PAIRETAFORCC}{t}$-hardness of $Q$ with respect to any pair of 
insertion monotone error measures.
Recall from Definition~\ref{def:insertion_monotone_error_measures} that an error measure is insertion monotone if
insertion of a finite number of correctly predicted bits into an instance does not increase
the error. 
The reduction strategy is outlined in Algorithm~\ref{alg:redalg}.
It might be an advantage to read this template in conjunction with the first example of its use for Vertex Cover (Lemma~\ref{lem:bdvc_t_hardness} in Section~\ref{sec:VC}) to better understand how elements from the concrete problems enter into the template.

\begin{algorithm}[h!]
\caption{Reducing \ASG{t} to $Q$}
\begin{algorithmic}[1]
\Require $\ALG' \in \ALGS{Q}$ and $I = (x,\hat{x},r) \in \INSTANCES{\ASG{t}}$
\Ensure $\ALG \in \ALGS{\ASG{t}}$ and $I' = (x',\hat{x}',r') \in \INSTANCES{Q}$ \Comment{$I'$ may be longer than $I$}
\For {each request, $r_i$, in $r$}
	\State Give $\ALG'$ the \emph{challenge request} $c_i$ and the prediction $\hat{x}_i' = \hat{x}_i$ \label{template:cp}
	\State Output $y_i = y_i'$, where $y_i'$ is the output of $\ALG'$ for $c_i$ \label{template:y}
\EndFor
\State Receive $x = x_1,x_2,\ldots,x_n$ \Comment{Finishes the $\ASG{t}$ instance}  \label{template:x}
\For {$j=1,2,\ldots,n$} \label{template:blockstart}
        \For {each request, $r$, in \emph{block} $\BLOCKWITHPROBLEM{Q}{x_j}{y_j'}{j}$}
             \State Give $\ALG'$ the request $r$ together with a correct prediction of $0$ \label{template:blockend}
        \EndFor
\EndFor
\end{algorithmic}
\label{alg:redalg}
\end{algorithm}
Algorithm~\ref{alg:redalg} is a template where the challenge requests and the blocks need to be defined when a concrete online reduction is created.
For each request, $r_i$, of the $\ASG{t}$ instance, we create and give a challenge request, $c_i$, to the algorithm, $\ALG'$, for $Q$ in a way such that
no information about the true bit, $x_i'$, can be inferred from the information about the instance obtained so far.
If, for instance, $Q$ is $\BDVC{t}$, each challenge request will be a vertex which is isolated at arrival.
After we have treated the entire $\ASG{t}$ instance, we give $\ALG'$ one (possibly empty)
block of requests per challenge request.
For $\BDVC{t}$, the $i$th block consists of a (possibly empty) set of neighbors of the $i$th challenge request, which an optimal algorithm can reject.
The purpose of the blocks is to ensure that
$x'_i=x_i$, for $1 \leqslant i \leqslant n$, and $x'_i=0$, for $i>n$, encode an optimal solution for $Q$.
Thus, for each incorrect bit, $\hat{x}_i'$, in $\hat{x}'$, the bit $\hat{x}_i$ is also incorrect, though $\hat{x}'$ has additional, correct, predictions associated with the blocks. Since the error measures, $\eta_0$ and $\eta_1$, are insertion monotone, $\eta_0(I')\leqslant \eta_0(I)$ and $\eta_1(I')\leqslant \eta_1(I)$.

Summing up, in the input created for $\ALG'$, the requests are the challenge requests given in Line~\ref{template:cp} followed by the block requests (Lines~\ref{template:blockstart}--\ref{template:blockend}), the prediction consists of the prediction bits for the $\ASG{t}$ instance (Line~\ref{template:cp}) followed by a 0 for each block request, and the optimal solution is the correct solution to the $\ASG{t}$ instance (Line~\ref{template:x}) followed by a 0 for each block request.
On each request, $r_i$, \ALG outputs the same as $\ALG'$ does on $c_i$.

Note that $\ALG'$ may not recognize when the challenge requests end
and the blocks begin, in which case it may not answer all of the
challenges in the blocks correctly, despite the correct predictions.
However, this is not a problem, since the idea is to prove that $\ALG$ does no worse than $\ALG'$.

In the online reductions, we need to know which request we are dealing with to create the right blocks. However, most often this is clear from the context, and then we leave out the third parameter to $\BLOCKWITHPROBLEM{Q}{x_j}{y_j'}{j}$.

\section{Examples of $\boldsymbol{\CCWM{\PAIRETAFORCC}{t}}$-Hard Problems}

We use the reduction template from Section~\ref{sec:reduction_template} for proving $\CCWM{\PAIRETAFORCC}{t}$-hardness of Vertex Cover (Lemma~\ref{lem:bdvc_t_hardness}), $k$-Spill (Theorem~\ref{thm:k-col_hardness}), and Interval Rejection (Lemma~\ref{lem:interval_scheduling_hardness}). In addition, we demonstrate the use of transitivity to
prove hardness, creating reductions from Interval Rejection to $2$-\textsc{SAT}-Deletion (Theorem~\ref{thm:2_sat_hardness}) and from  Vertex Cover to Dominating Set (Theorem~\ref{thm:hardness_dominating_set}).

\subsection{Vertex Cover}\label{sec:VC}

Given a graph, $G = (V,E)$, an algorithm for Vertex Cover finds a  subset $V' \subseteq V$ of vertices such that for each edge $e = (u,v) \in E$, $u \in V'$ or $v \in V'$.
The cost of the solution is given by the size of $V'$, and the goal is to minimize this cost.
In $t$-Bounded Degree Vertex Cover, all vertices have degree at most~$t$.

Other work on Vertex Cover includes the following.
In the purely online setting,
there exists a $t$-competitive algorithm for $t$-Bounded Degree Vertex Cover, and the problem does not allow for a $(t-\eps)$-competitive algorithm, for any $\eps > 0$~\cite{DP05}.
Considering advice, the problem is AOC-complete~\cite{BFKM17}.
In the offline setting, Vertex Cover is NP-complete~\cite{K72} and APX-complete~\cite{S82,DS05}, and
$t$-Bounded Degree Vertex Cover is  MAX-SNP-hard~\cite{PY91}.

We formally define the Vertex Cover problem studied in this section:

\begin{definition}\label{def:bdvc_t}
The input to \emph{Online $t$-Bounded Degree Vertex Cover with Predictions} ($\BDVC{t}$) is a graph, $G = (V,E)$, with maximum degree at most $t$.
When an algorithm, $\ALG$, receives a vertex, $v_i$, it outputs $y_i = 1$ to
accept this vertex or $y_i = 0$ to reject it.

For the algorithm's solution to be feasible, it must be a vertex cover, i.e., $$y_i=1 \vee y_j=1, \text{ for each edge } (v_i,v_j) \in E.$$
For an instance, $I = (x,\hat{x},r) \in \INSTANCES{\BDVC{t}}$,
$$\ALG(I) = \sum_{i=1}^n y_i.$$
The set $\{v_i \mid x_i=1\}$ is an optimal vertex cover.
\end{definition}

The standard unbounded Online Vertex Cover with Predictions is also considered, and is abbreviated $\VC$.

First, we show hardness results for $\BDVC{t}$:
\begin{lemma}\label{lem:bdvc_t_hardness}
For any $t \in \ZZ^+$ and any pair of insertion monotone error measures, $\PAIRETA$, $\BDVC{t}$ is $\CCWM{\PAIRETAFORCC}{t}$-hard.
\end{lemma}
\begin{proof}
  We give a strict online reduction $\rho = \ONLRED{\rho}$ from \ASG{t} to \BDVC{t}, using the reduction template of Algorithm~\ref{alg:redalg} in Section~\ref{sec:reduction_template}
(see Figure~\ref{fig:bounded_degree_vertex_cover_example} for an example).
Consider any $I = (x,\hat{x},r) \in \INSTANCES{\ASG{t}}$ and any $\ALG' \in \ALGS{\BDVC{t}}$.
Let $\ALG = \ORALG{\rho}(\ALG')$ and $I'=\ORTRANS{\rho}(\ALG',I)$ be the \ASG{t} algorithm and \BDVC{t} instance created in the reduction described below.

\begin{figure}[htp]
\centering
\begin{tikzpicture}
\node at (-1.5,5.2) {$x_i=x'_i$:};  
\node at (-1.5,3.8) {$y_i=y'_i$:};  
\begin{scope}[shift={(1.5,0)}]
  \node[vertex] (v2) at (0,4.5) {$v_2$};
  \node at (0.15,5.2) {$0$};
  \node at (0.15,3.8) {$1$};
\end{scope}
  
\begin{scope}[shift={(3,0)}]
  \node[vertex] (v4) at (1.5,4.5) {$v_4$};
  \node at (1.65,5.2) {$1$};
  \node at (1.65,3.8) {$0$};
  \node[vertex] (v21) at (1.5,3) {$v_{4,1}$};
  \draw (v21) -- (v4);
  \node[vertex] (v22) at (1.5,1.5) {$v_{4,2}$};
  \draw (v22) to[out = 125, in = 235] (v4);
  \node[vertex] (v23) at (1.5,0) {$v_{4,3}$};
  \draw (v23) to[out = 135, in = 225] (v4);
\end{scope}

\begin{scope}[shift={(-3,0)}]
  \node[vertex] (v1) at (3,4.5) {$v_1$};
  \node at (3.15,5.2) {$0$};
  \node at (3.15,3.8) {$0$};
\end{scope}

\begin{scope}[shift={(-1.5,0)}]
  \node[vertex] (v3) at (4.5,4.5) {$v_3$};
  \node at (4.65,5.2) {$1$};
  \node at (4.65,3.8) {$1$};
  \node[vertex] (v71) at (4.5,3) {$v_{3,1}$};
  \draw (v71) -- (v3);
\end{scope}

\begin{scope}[shift={(-0.15,-8.5)}]
  \node at (-2,7.4) {$\OPT = \OPT'$:}; \node at (0.15,7.4) {$0$}; \node at (1.65,7.4) {$0$}; \node at (3.15,7.4) {$1$};           \node at (4.65,7.4) {$1$}; 
  \node at (-2,6.8) {\ALG:};           \node at (0.15,6.8) {$0$}; \node at (1.65,6.8) {$1$}; \node at (3.15,6.8) {$1$};           \node at (4.65,6.8) {$t$}; 
  \node at (-2,6.2) {$\ALG'$:};        \node at (0.15,6.2) {$0$}; \node at (1.65,6.2) {$1$}; \node at (3.15,6.2) {$\geqslant 1$}; \node at (4.65,6.2) {$t$};
\end{scope}
\end{tikzpicture}
\caption{A \BDVC{3} instance constructed from an \ASG{3} instance as described in the proof of Lemma~\ref{lem:bdvc_t_hardness}.
  For each challenge request, the bits $x_i$ and $y_i$ are shown above and below $v_i$, respectively.
Below the graph, the cost of each challenge request and its corresponding block requests is shown for $\OPT$, $\OPT'$, $\ALG$, and $\ALG'$.}
\label{fig:bounded_degree_vertex_cover_example}
\end{figure}

The $i$th challenge request is an isolated vertex, $v_i$.
Clearly, $v_1, v_2, \ldots, v_i$ give no information about $x_i$,
as required.
Recall from the template that $\hat{x}_i' = \hat{x}_i$ and that
\ALG outputs $y_i=y'_i$, where $y_i'$ is the output of $\ALG'$ for $v_i$.

Following all the $n$ challenge requests, we give the $n$ blocks, where
the $i$th block, $\BLOCK{x_i}{y_i'}$, is constructed as follows, with all prediction bits equal to $0$ as described in the template.
\begin{itemize}
\item If $x_i = 0$, then $\BLOCK{x_i}{y_i'}$ is empty. Thus, no optimal solution will contain $v_i$.
\item If $x_i = y_i' = 1$, then $\BLOCK{x_i}{y_i'}$ contains one request to a vertex, $v_{i,1}$, connected to $v_i$, ensuring that there is an optimal solution containing $v_i$.
\item If $x_i = 1$ and $y_i' = 0$, then $\BLOCK{x_i}{y_i'}$ contains requests to $t$ new vertices, $v_{i,j}$, $j=1,2,\ldots,t$, each connected to $v_i$, giving $\ALG'$ a cost of $t$ and ensuring that there is an optimal solution containing $v_i$. 
\end{itemize}

Observe that $\{ v_i \mid x_i=1 \}$ is an optimal solution to the \BDVC{t} instance constructed, as required by the template.
This shows that $\OPT(I)=\OPT(I')$, satisfying Condition~\ref{item:OR_condition_OPT} of Definition~\ref{def:OR}.
It also shows that all block requests are correctly predicted,
since all block predictions are zero.
Since $\hat{x}'_i=\hat{x}_i$, for each challenge request, $v_i$, and $\eta_0$ and $\eta_1$ are insertion monotone, we conclude that $\eta_0(I') \leqslant \eta_0(I)$ and $\eta_1(I') \leqslant \eta_1(I)$, satisfying Condition~\ref{item:OR_condition_eta} of Definition~\ref{def:OR}.

To prove that Condition~\ref{item:OR_condition_ALG} is also satisfied, we argue that $\ALG(I)\leqslant\ALG'(I')$.
For each request, $r_i$ in $r$, such that $x_i = 0$, $\ALG$ and $\ALG'$ both have a cost of $y_i = y_i'$ on $r_i$.
If $x_i = y_i' = 1$, both algorithms have a cost of $1$ on $r_i$. 
Finally, if $x_i = 1$ and $y_i' = 0$, then $\ALG$ has a cost of $t$ on $r_i$ and $\ALG'$ has a cost of $t$ on $\BLOCK{x_i}{y_i'}$.
\end{proof}

A \emph{star graph} is a tree where all vertices, except at most one, are leaves.
A \emph{star forest} is a disjoint collection of star graphs.
Since the graphs constructed in the proof of Lemma~\ref{lem:bdvc_t_hardness} are star forests, we obtain the following corollary.

\begin{corollary}
\label{corollary_VCSF}
For any $t\in \ZZ^+$ and any pair of insertion monotone error measures, $\PAIRETA$, $\BDVC{t}$ is $\CCWM{\PAIRETAFORCC}{t}$-hard, even on star forests.
\end{corollary}

Since star forests are interval graphs, we also obtain the following.
\begin{corollary}
\label{corollary_VCIG}
For any $t\in \ZZ^+$ and any pair of insertion monotone error measures, $\PAIRETA$, $\BDVC{t}$ is $\CCWM{\PAIRETAFORCC}{t}$-hard, even on interval graphs.
\end{corollary}
Note that Figure~\ref{fig:bounded_degree_vertex_cover_example} is an (interval) graph representation of the intervals in Figure~\ref{fig:interval_scheduling_example}.

Now, we turn to showing membership by reducing to $\ASG{t}$.

\begin{lemma}\label{lem:bdvc_t_containment}
For any $t \in \ZZ^+$ and any pair of error measures, $\PAIRETA$, $\BDVC{t} \in \CCWM{\PAIRETAFORCC}{t}$.
\end{lemma}
\begin{proof}
We define a strict online reduction, $\rho \colon \BDVC{t} \orarrow \ASG{t}$.
Consider any $I = (x,\hat{x},r) \in \INSTANCES{\BDVC{t}}$ and any $\ALG' \in \ALGS{\ASG{t}}$.
We define an instance, $I' =(x',\hat{x}',r') \in \INSTANCES{\ASG{t}}$, and an algorithm, \ALG, for handling $I$ using the output of $\ALG'$ on $I'$:

For each request, $r_i$ in $r$, give $\hat{x}'_i=\hat{x}_i$ to $\ALG'$ and let
$y'_i$ be the output of $\ALG'$.
If $v_i$ has a neighbor among $v_1,\ldots,v_{i-1}$ which is not in the
vertex cover constructed so far, \ALG outputs $y_i=1$,
  ensuring that \ALG always outputs a vertex cover.
Otherwise, it outputs $y_i=y'_i$.
After the last request of $I$, compute an optimal solution, $x$, for $I$ and present $x'=x$ to $\ALG'$, in order to finish the instance $I'$.

Since we let
 $x'=x$ and $\hat{x}'=\hat{x}$, Condition~\ref{item:OR_condition_eta} from Definition~\ref{def:OR} is trivially satisfied for any pair of error measures.
Moreover, since $x'=x$, $\OPT(I) = \OPT(I')$, and so Condition~\ref{item:OR_condition_OPT} is also satisfied.
Hence, it only remains to check Condition~\ref{item:OR_condition_ALG}.
To this end, we prove that $\ALG(I) \leqslant \ALG'(I')$.

We consider the cost of $\ALG$ and $\ALG'$ on request $r_i$:
\begin{itemize}
\item $y'_i=y_i=0$: 
  \begin{itemize}
  \item $x_i=0$: Both algorithms have a cost of $0$.
  \item $x_i=1$: $\ALG'$ has a cost of $t$ and $\ALG$ has a cost of $0$.
  \end{itemize}
\item $y'_i=0$, $y_i=1$: $\ALG$ has a cost of $1$.
  \begin{itemize}
  \item $x_i=0$: $\ALG'$ has a cost of $0$.
  \item $x_i=1$: $\ALG'$ has a cost of $t$.
  \end{itemize}
\item $y'_i=y_i=1$: Both algorithms have a cost of $1$.
\end{itemize}
Note that $\ALG$ has a higher cost than $\ALG'$, only when $x_i=y'_i=0$ and $y_i=1$.
In this case, the cost of $\ALG$ is exactly one higher than that of $\ALG'$.
However, from $y_i \neq y'_i$, it follows by the definition of \ALG that $v_i$ has a neighbor, $v_j$, $j<i$, such that $y_j=0$,
in which case also $y'_j=0$.
Moreover, $x_i=0$ implies that $x_j=1$, since $x$ encodes a feasible vertex cover.
Since $v_j$ has at most $t$ neighbors, and since the cost of $\ALG'$ on $r'_j$ is $t$ higher than that of $\ALG$ on $r_j$, we conclude that the total cost of $\ALG$ is no larger than that of \ALG.
Hence, Condition~\ref{item:OR_condition_ALG} is satisfied.
\end{proof}

Our results about Vertex Cover are summarized in Theorem~\ref{thm:summary_bdvc_t}.
Note that Items~\ref{item:bdvc_t_completeness} and~\ref{item:bdvc_t_membership} follow directly from Lemmas~\ref{lem:bdvc_t_hardness} and~\ref{lem:bdvc_t_containment}.

\begin{theorem}\label{thm:summary_bdvc_t}
  Consider any $t \in \ZZ^+$.

For any pair of insertion monotone error measures, $\PAIRETA$,
\begin{enumerate}[label = {(\roman*)}]
\item $\BDVC{t}$ is $\CCWM{\PAIRETAFORCC}{t}$-complete, \label{item:bdvc_t_completeness}
\item $\BDVC{t}$ on star forests is $\CCWM{\PAIRETAFORCC}{t}$-complete, \label{item:bdvcsf_t_completeness}
\item $\BDVC{t}$ on interval graphs is $\CCWM{\PAIRETAFORCC}{t}$-complete, \label{item:bdvcig_t_completeness}
\item $\VC$ is $\CCWM{\PAIRETAFORCC}{t}$-hard, and $\VC \not\in \CCWM{\PAIRETAFORCC}{t}$. \label{item:vc_insertion_monotone}
\end{enumerate}
For any pair of error measures, $\PAIRETA$,
\begin{enumerate}[resume,label = {(\roman*)}]
\item $\BDVC{t} \in \CCWM{\PAIRETAFORCC}{t}$, \label{item:bdvc_t_membership}
\item $\BDVC{t}$ on star forests is contained in $\CCWM{\PAIRETAFORCC}{t}$, \label{item:bdvcsf_t_membership}
\item $\BDVC{t}$ on interval graphs is contained in $\CCWM{\PAIRETAFORCC}{t}$, \label{item:bdvcig_t_membership}
\item $\VC \in \CCINFTYWM{\PAIRETAFORCC}$, and $\VC$ is not $\CCINFTYWM{\PAIRETAFORCC}$-hard. \label{item:vc}
\end{enumerate}
\end{theorem}
\begin{proof}
First, we consider insertion monotone error measures:
  
\textbf{Towards~\ref{item:bdvc_t_completeness}:}
This is a direct consequence of Lemmas~\ref{lem:bdvc_t_hardness} and~\ref{lem:bdvc_t_containment}.

\textbf{Towards~\ref{item:bdvcsf_t_completeness}:}
This is a direct consequence of Corollary~\ref{corollary_VCSF} and of Lemma~\ref{lem:bdvc_t_containment} in combination with Corollary~\ref{cor:sub_and_sup_problem}\ref{item:sub}.

\textbf{Towards~\ref{item:bdvcig_t_completeness}:}
This is a direct consequence of Corollary~\ref{corollary_VCIG} and of Lemma~\ref{lem:bdvc_t_containment} in combination with Corollary~\ref{cor:sub_and_sup_problem}\ref{item:sub}.

\textbf{Towards~\ref{item:vc_insertion_monotone}:}
The hardness of \VC is a direct consequence of Lemma~\ref{lem:bdvc_t_hardness} in combination with Corollary~\ref{cor:sub_and_sup_problem}\ref{item:sub}.
Now, assume towards contradiction that  $\VC \in \CCWM{\PAIRETAFORCC}{t}$, for some $t \in \ZZ^+$.
This means that \ASG{t} $\geq_{\textrm o}$ \VC.
By Lemma~\ref{lem:bdvc_t_hardness}, \VC $\geq_{\textrm o}$ \ASG{t+1}, and so, by transitivity (Lemma~\ref{lem:transitivity}), $\ASG{t} \geq_{\textrm o} \ASG{t+1}$, which contradicts Lemma~\ref{lem:hierarchy_lemma}\ref{item:no_or_from_asg_t+1_to_asg_t}.

Next, we consider general error measures:

\textbf{Towards~\ref{item:bdvc_t_membership}:}
This is a direct consequence of Lemma~\ref{lem:bdvc_t_containment}.

\textbf{Towards~\ref{item:bdvcsf_t_membership}--\ref{item:bdvcig_t_membership}:}
These are direct consequences of Lemma~\ref{lem:bdvc_t_containment} in combination with Corollary~\ref{cor:sub_and_sup_problem}\ref{item:sub}.

\textbf{Towards~\ref{item:vc}:}
To prove that $\VC \in \CCINFTYWM{\PAIRETAFORCC}$, we give a strict online reduction $\rho \colon \VC \orarrow \ASGINFTY$.
To this end, consider any $\ALG' \in \ALGS{\ASGINFTY}$ and any $I = (x,\hat{x},r) \in \INSTANCES{\VC}$.
Let $I' = \ORTRANS{\rho}(\ALG',I)= (x',\hat{x}',r')$, where
$x'=x$, $\hat{x}'=\hat{x}$, and $r'$ is a sequence of prompts for guessing
the next bit as usual for $\ASGINFTY$. Moreover, let $\ALG = \ORALG{\rho}(\ALG')$ be the algorithm that always outputs the same as $\ALG'$ and, at the end, computes $\OPT_{\BDVC{t}}[I]$ and reveals it to $\ALG'$ in the form of the bit string $x'=x$.

By construction, Conditions~\ref{item:OR_condition_OPT} and~\ref{item:OR_condition_eta} from Definition~\ref{def:OR} are trivially satisfied, and so it only remains to check Condition~\ref{item:OR_condition_ALG}.
Since \ALG outputs the same as $\ALG'$, we only need to argue that as long as $\ALG'$ produces a feasible solution, \ALG does so too.
Hence, suppose that $\ALG$ has created an infeasible solution to instance $I$.
Then, there exists an uncovered edge $(v_i,v_j)$.
Hence, $y_i = y_j = 0$, where $y_i$ and $y_j$ are the $i$th and $j$th guesses made by $\ALG'$.
However, since the edge $(v_i,v_j)$ is contained in the underlying graph of $I$, at least one of $v_i$ and $v_j$ has been accepted by $\OPT_{\VC}$, and so $x_i = 1$ or $x_j = 1$.
Hence, $\ALG'$ has guessed $0$ on a true $1$, implying that $\ALG'$ produced an infeasible solution.

To prove that $\VC$ is not $\CCINFTYWM{\PAIRETAFORCC}$-hard,
let $\textsc{Acc}$ be the following online algorithm for $\VC$. 
When receiving a vertex, $v$, if $v$ has no neighbors, reject $v$; otherwise, accept $v$.
We claim that $\textsc{Acc}$ is $(n-1,0,0)$-competitive, where $n = \abs{V}$.
To this end, consider any graph, $G$.
If $\OPT(G) = 0$, then $G$ contains no edges, and so $\textsc{Acc}$ never accepts a vertex, in which case $\textsc{Acc}(G) = 0$.
If, on the other hand, $\OPT(G) \geqslant 1$, then $\textsc{Acc}(G) \leqslant n-1$, since $\textsc{Acc}$ never accepts the first vertex.
Hence, $\textsc{Acc}$, is $(n-1,0,0)$-competitive.
If $\VC \geq_{\textrm o} \ASGINFTY$, then this implies the existence of an $(n-1,0,0)$-competitive algorithm for $\ASGINFTY$,
contradicting Observation~\ref{obs:no_competitive_algorithm_for_asg_infty}.
\end{proof}

\subsection{Interval Rejection}
\label{sec:IR}

Given a collection of intervals $\mathcal{S}$ on the line, an Interval Rejection algorithm finds a subset $\mathcal{S}' \subseteq \mathcal{S}$ of intervals such that no two intervals in $\mathcal{S} \setminus \mathcal{S}'$ overlap.
The cost of the solution is given by the cardinality of $\mathcal{S}'$, and the goal is to minimize this cost.

In the offline setting, Interval Rejection is solvable in polynomial time, by a greedy algorithm~\cite{KT05}.

\begin{definition}\label{def:ir_t}
A request, $r_i$, for \emph{Online $t$-Bounded Overlap Interval Rejection with Predictions} ($\INTER{t}$) is an interval, $S_i$.
Instances for $\INTER{t}$ satisfy that any requested interval overlaps at most $t$ other intervals in the instance.

An algorithm, $\ALG$, outputs $y_i = 1$ to accept $S_i$ and $y_i=0$ to reject it.
For the algorithm's solution to be feasible, the intervals in $\{S_i \mid y_i=0 \}$ must be pairwise nonoverlapping.
Given an instance $I = (x,\hat{x},r) \in \INSTANCES{\INTER{t}}$, 
$$\ALG(I) = \sum_{i=1}^{n} y_i.$$
The set $\{ S_i \mid x_i=0 \}$ is an optimal solution.
\end{definition}

We also consider Online Interval Rejection with Predictions ($\IR$), where there is no bound on the number of overlaps.

\begin{figure}[htp]
\centering
\begin{tikzpicture}
\node at (-1.6,0.275) {$(x_i,y_i)=(x'_i,y'_i)$:};
\draw[|-|] (0,0.6) -- node[above] {$S_1$} node[below] {$(0,0)$} (2.25,0.6);
\draw[|-|] (2.25,0.6) -- node[above] {$S_2$} node[below] {$(0,1)$} (4.5,0.6);
\draw[|-|] (4.5,0.6) -- node[above] {$S_3$} node[below] {$(1,1)$} (6.75,0.6);
\draw[|-|] (6.75,0.6) -- node[above] {$S_4$} node[below] {$(1,0)$} (9,0.6);
\begin{scope}[shift={(0,-0.5)}]
\draw[|-|] (4.5,-0.5)   -- node[above] {$S_3^1$} (6.75,-0.5);
\draw[|-|] (6.75,-0.5) -- node[above] {$S_4^1$} (7.5,-0.5);
\draw[|-|] (7.5,-0.5) -- node[above] {$S_4^2$} (8.25,-0.5);
\draw[|-|] (8.25,-0.5) -- node[above] {$S_4^3$} (9,-0.5);
\end{scope}
\begin{scope}[shift={(0,-9.7)}]
  \node at (-1.6,7.4) {$\OPT = \OPT'$:}; \node at (1.125,7.4) {$0$}; \node at (3.375,7.4) {$0$}; \node at (5.625,7.4) {$1$};           \node at (7.875,7.4) {$1$}; 
  \node at (-1.6,6.8) {\ALG:};           \node at (1.125,6.8) {$0$}; \node at (3.375,6.8) {$1$}; \node at (5.625,6.8) {$1$};           \node at (7.875,6.8) {$t$}; 
  \node at (-1.6,6.2) {$\ALG'$:};        \node at (1.125,6.2) {$0$}; \node at (3.375,6.2) {$1$}; \node at (5.625,6.2) {$\geqslant 1$}; \node at (7.875,6.2) {$t$};
\end{scope}
\end{tikzpicture}
\caption{An \INTER{3} instance constructed from an \ASG{3} instance as described in the proof of Lemma~\ref{lem:interval_scheduling_hardness}.
  For each challenge request, the pair $(x_i,y'_i)$ is shown below $S_i$.
Below the \INTER{3} instance, the cost of each challenge request and its corresponding block requests is shown for $\OPT$, $\OPT'$, $\ALG$, and $\ALG'$.
}
\label{fig:interval_scheduling_example}
\end{figure}

\begin{lemma}\label{lem:interval_scheduling_hardness}
For any $t \in \ZZ^+$ and any pair of insertion monotone error measures, $\PAIRETA$, $\INTER{t}$ is $\CCWM{\PAIRETAFORCC}{t}$-hard. 
\end{lemma}
\begin{proof}
We prove that $\INTER{t} \geq_{\textrm o} \ASG{t}$ by giving a strict online reduction $\rho \colon \ASG{t} \orarrow \INTER{t}$.
This reduction is essentially identical to the one from $\ASG{t}$ to $\BDVC{t}$ from the proof of Lemma~\ref{lem:bdvc_t_hardness}, since the graph produced there is an
interval graph, but we present it for completeness.
In particular, the $i$'th challenge request is $S_i = ((i-1)\cdot t , i \cdot t)$, the blocks $\BLOCK{0}{0}$ and $\BLOCK{0}{1}$ are empty, the block $\BLOCK{1}{1}$ contains a single request to $S_i^1 = S_i$, and the block $\BLOCK{1}{0}$ contains $t$ requests, $S_i^j$, for $j=1,2,\ldots,t$, where $S_i^j = ((i-1) \cdot t + j - 1, (i-1) \cdot t + j)$.
See Figure~\ref{fig:interval_scheduling_example} for an example reduction.
Observe that this is an interval representation of the interval graph created in the reduction from Lemma~\ref{lem:bdvc_t_hardness}, assuming that the $\BDVC{t}$ algorithm outputs the same bits as the $\INTER{t}$ algorithm.
The remainder of the analysis is analogous to that of Lemma~\ref{lem:bdvc_t_hardness}.
\end{proof}

Now, we turn to showing membership by reducing to $\BDVC{t}$.

\begin{lemma}\label{lem:interval_scheduling_to_bdvc_t}
For any $t \in \ZZ^+ \cup \{\infty\}$ and any pair of error measures, $\PAIRETA$, $\INTER{t} \in \CCWM{\PAIRETAFORCC}{t}$.
\end{lemma}
\begin{proof}
Let $\PAIRETA$ be any pair of error measures.
We define a reduction, $\rho\colon \INTER{t} \orarrow \BDVC{t}$.
This, together with Theorem~\ref{lem:different_base_problem}\ref{item:sub_problems} and Lemma~\ref{lem:bdvc_t_containment}, shows that $\INTER{t} \in \CCWM{\PAIRETAFORCC}{t}$.
Figure~\ref{fig:reduction_from_inter_to_vc} shows an example of this reduction. 
We will see that the reduction resembles a template reduction, with all blocks empty, in that $x'=x$, $\hat{x}'=\hat{x}$, and $y'=y$.

\begin{figure}[htp]
\centering
\begin{tikzpicture}
\begin{scope}[shift={(-1,0)}]
\draw[|-|] (1.25,0.5+1.25-0.6) -- node[above,scale = 1] {$S_1$} (2.66,0.5+1.25-0.6);
\draw[|-|] (2,1.5+1.25) -- node[above,scale = 1] {$S_2$} (3.75,1.5+1.25);
\draw[|-|] (1.75,1+1.25-0.3) -- node[above,scale = 1] {$S_3$} (3,1+1.25-0.3);
\draw[|-|] (3.25,0.5+1.25-0.6) -- node[above,scale = 1] {$S_4$} (4.75,0.5+1.25-0.6);
\draw[|-|] (1.33,0+1.25-0.9) -- node[above,scale = 1] {$S_5$} (5,0+1.25-0.9);
\draw[|-|] (4.5,1+1.25-0.3) -- node[above,scale = 1] {$S_6$} (5.5,1+1.25-0.3);
\end{scope}
\node[vertex] (v1) at (7,3.5) {$v_1$};
\node[vertex] (v2) at (10,3.5) {$v_2$};
\node[vertex] (v3) at (8.5,2.5) {$v_3$};
\node[vertex] (v4) at (10,0) {$v_4$};
\node[vertex] (v5) at (8.5,1) {$v_5$};
\node[vertex] (v6) at (7,0) {$v_6$};

\draw (v1) -- (v2);
\draw (v1) -- (v3);
\draw (v1) -- (v5);

\draw (v2) -- (v3);
\draw (v2) -- (v4);
\draw (v2) -- (v5);

\draw (v3) -- (v5);

\draw (v4) -- (v5);
\draw (v4) -- (v6);

\draw (v5) -- (v6);
\end{tikzpicture}
\caption{An example reduction from $\INTER{t}$ to $\BDVC{t}$, for $t \geqslant 5$.
On the left are the intervals in the instance $I$ of $\INTER{t}$, and on the right is the graph of the instance $\ORTRANS{\rho}(\ALG',I)$, for any algorithm $\ALG' \in \ALGS{\BDVC{t}}$.}
\label{fig:reduction_from_inter_to_vc}
\end{figure}

Consider any algorithm, $\ALG'$, for \BDVC{t} and any instance, $I = (x,\hat{x},r)$, of \INTER{t}.
We define $\ALG = \ORALG{\rho}(\ALG')$ and $I' = (x',\hat{x}',r') = \ORTRANS{\rho}(\ALG',I)$ as follows.
When $\ALG$ receives a request containing an interval, $S_i$, we request a vertex, $v_i$, together with all edges of $\{ (v_j,v_i) \mid j < i \text{ and }S_j \cap S_i \neq \emptyset \}$, with prediction $\hat{x}_i' = \hat{x}_i$.
Since each interval of $I$ overlaps at most $t$ other intervals, $I'$ is a valid instance of \BDVC{t}.
For each $S_i$, $\ALG$ outputs the same as $\ALG'$ does for $v_i$.

Note that the graph, $G$, of $I'$ is the interval graph of the intervals of $I$.
Thus, $\langle y_1,y_2,\ldots,y_{|r|} \rangle$ is a feasible solution to $I$, if and only if $\{v_i \mid y_i=0 \}$ is an independent set in $G$.
Since the complement of an independent set is a vertex cover, $x'=x$ is an optimal solution to $I'$ and $\ALG(I)=\ALG'(I')$, proving Conditions~\ref{item:OR_condition_OPT} and~\ref{item:OR_condition_ALG} of Definition~\ref{def:OR}.
Combining $x'=x$ with $\hat{x}'=\hat{x}$, we also get that Condition~\ref{item:OR_condition_eta} is satisfied.
\end{proof}

Our results about $\INTER{t}$ and $\IR$ are summarized in Theorem~\ref{thm:summary_inter_t}. 

\begin{theorem}\label{thm:summary_inter_t}
Consider any $t \in \ZZ^+$.

For any pair of insertion monotone error measures, $\PAIRETA$,
\begin{enumerate}[label = {(\roman*)}]
\item $\INTER{t}$ is $\CCWM{\PAIRETAFORCC}{t}$-complete, \label{item:inter_t_completeness}
\item $\IR$ is $\CCWM{\PAIRETAFORCC}{t}$-hard, and $\IR \not\in \CCWM{\PAIRETAFORCC}{t}$. \label{item:IR}
\end{enumerate}
For any pair of error measures, $\PAIRETA$,
\begin{enumerate}[label = {(\roman*)},resume]
\item $\INTER{t} \in \CCWM{\PAIRETAFORCC}{t}$, \label{item:inter_t_member_all_errors}
\item $\IR \in \CCINFTYWM{\PAIRETAFORCC}$, and $\IR$ is not $\CCINFTYWM{\PAIRETAFORCC}$-hard. \label{item:interval_scheduling}
\end{enumerate}
\end{theorem}
\begin{proof}
  We prove each item separately.
  
\textbf{Towards~\ref{item:inter_t_completeness}:}
This is a direct consequence of Lemmas~\ref{lem:interval_scheduling_hardness} and~\ref{lem:interval_scheduling_to_bdvc_t}.

\textbf{Towards~\ref{item:IR}:}
Since $\INTER{t}$ is a subproblem of $\IR$, the $\CCWM{\PAIRETAFORCC}{t}$-hardness of \IR is a direct consequence of Corollary~\ref{cor:sub_and_sup_problem}\ref{item:sup} and Lemma~\ref{lem:interval_scheduling_hardness}.

Assume towards contradiction that $\IR \in \CCWM{\PAIRETAFORCC}{t}$.
Then $\ASG{t} \geq_{\textrm o} \IR$.
On the other hand, $\IR \geq_{\textrm o} \ASG{t+1}$, so transitivity implies that $\ASG{t} \geq_{\textrm o} \ASG{t+1}$, which contradicts Lemma~\ref{lem:hierarchy_lemma}\ref{item:no_or_from_asg_t+1_to_asg_t}.

\textbf{Towards~\ref{item:inter_t_member_all_errors}:}
This is a direct consequences of Lemma~\ref{lem:interval_scheduling_to_bdvc_t}.

\textbf{Towards~\ref{item:interval_scheduling}:}
By Lemma~\ref{lem:interval_scheduling_to_bdvc_t}, $\IR \leq_{\textrm o} \VC$, and  by Theorem~\ref{thm:summary_bdvc_t}\ref{item:vc}, $\VC \in \CCINFTYWM{\PAIRETAFORCC}$. Hence, by transitivity, $\IR \in \CCINFTYWM{\PAIRETAFORCC}$.

Assume towards contradiction that $\IR$ is $\CCINFTYWM{\PAIRETAFORCC}$-hard.
Then, $\IR \geq_{\textrm o} \ASGINFTY$.
Since $\VC \geq_{\textrm o} \IR$, transitivity implies that $\VC \geq_{\textrm o} \ASGINFTY$,
contradicting Theorem~\ref{thm:summary_bdvc_t}\ref{item:vc}. 
\end{proof}

\subsection{$\boldsymbol{k}$-Spill}\label{sec:SPILL}

Given a graph, $G = (V,E)$, the objective of $k$-Spill is to
select a smallest possible subset $\xOne{V} \subseteq V$ such that the subgraph of $G$ induced by the vertices in $V \setminus \xOne{V}$ is $k$-colorable.

In compiler construction~\cite{A98}, register allocation plays a significant role.
It is often implemented by a liveness analysis, followed by a construction of an interference graph, which is then colored. 
The vertices represent variables (or values) and the edges represent conflicts (values that must be kept at the same point in time). 
The goal is to place as many of these values as possible in registers. 
Thus, with a fixed number of registers, this is really coloring with a fixed number of colors. 
Vertices that cannot be colored are referred to as \emph{spills}. 
Spilled values must be stored in a more expensive manner. 
Thus, similar to minimizing faults in Paging, the objective is to minimize spill.

\begin{figure}[htp]
\centering
\begin{tikzpicture}[scale = 0.89]

\begin{scope}[shift={(-3,0)}]
\node at (5, 0.65) {$0$};
\node[vertex] (v1) at (5,0) {$v_1$};
\node at (5,-0.7) {$0$};
\node[vertex] (f1) at (6.5,0) {$f_1$};
\draw (v1) -- (f1);
\end{scope}

\begin{scope}[shift={(-4.5,0)}]
\node at (9.65, 0.65) {$1$};
\node[vertex] (v2) at (9.5,0) {$v_2$};
\node at (9.5,-0.7) {$1$};
\draw (f1) -- (v2);
\node[vertex] (v21) at (8.4,1.25) {$v_{2,1}$};
\node[vertex] (v22) at (9.5,2.5) {$v_{2,2}$};
\node[vertex] (v23) at (10.6,1.25) {$v_{2,3}$};
\node[vertex] (f2) at (11,0) {$f_2$};
\draw (v2) -- (v21);
\draw (v2) -- (v22);
\draw (v2) -- (v23);
\draw (v21) -- (v23);
\draw (v22) -- (v23);
\draw (v21) -- (v22);
\draw (v2) -- (f2);
\end{scope}

\begin{scope}[shift={(3,0)}]
\node at (5, 0.65) {$0$};
\node[vertex] (v3) at (5,0) {$v_3$};
\node at (5,-0.7) {$1$};
\draw (f2) -- (v3);
\node[vertex] (f3) at (6.5,0) {$f_3$};
\draw (v3) -- (f3);
\end{scope}

\begin{scope}[shift={(9,0)}]
\node at (2.15, 0.65) {$1$};
\node[vertex] (v4) at (2,0) {$v_4$};
\node at (2,-0.7) {$0$};
\draw (f3) -- (v4);
\node[vertex] (v41) at (0.3,1.55) {$v_{4,1}$};
\node at (0.3,0.85) {$0$};) 
\node[vertex] (v42) at (0.3,3.5) {$v_{4,2}$};
\node at (0.15,2.8) {$1$};) 
\node[vertex] (v43) at (2,4.75) {$v_{4,3}$};
\node at (2.15,4.05) {$0$};) 
\node[vertex] (v44) at (3.7,3.5) {$v_{4,4}$};
\node at (3.7,2.8) {$1$};) 
\node[vertex] (v45) at (3.7,1.55) {$v_{4,5}$};
\node at (3.7,0.85) {$1$};) 
\node[vertex] (f4) at (3.5,0) {$f_4$};
\draw (v4) -- (v41);
\draw (v4) --(v42);
\draw (v41) --(v42);
\draw (v4) -- (v43);
\draw (v41) -- (v43);
\draw (v4) -- (v44);
\draw (v41) -- (v44);
\draw (v43) -- (v44);
\draw (v4) -- (v45);
\draw (v41) -- (v45);
\draw (v43) -- (v45);
\draw (v4) -- (f4);
\end{scope}

\begin{scope}[shift={(1.85,-9.3)}]
  \node at (-1.8,7.4) {$\OPT = \OPT'$:}; \node at (0.15,7.4) {$0$}; \node at (3.15,7.4) {$1$};           \node at (6.15,7.4) {$0$}; \node at (9.15,7.4) {$1$}; 
  \node at (-1.8,6.8) {\ALG:};           \node at (0.15,6.8) {$0$}; \node at (3.15,6.8) {$1$};           \node at (6.15,6.8) {$1$}; \node at (9.15,6.8) {$t$}; 
  \node at (-1.8,6.2) {$\ALG'$:};        \node at (0.15,6.2) {$0$}; \node at (3.15,6.2) {$\geqslant 1$}; \node at (6.15,6.2) {$1$}; \node at (9.15,6.2) {$t$};
\end{scope}
\end{tikzpicture}
\caption{A \COLT{3}{7} instance constructed from an \ASG{3} instance as described in the proof of Theorem~\ref{thm:k-col_hardness}.
  For each challenge request, the bits $x_i$ and $y'_i$ are shown above and below $v_i$, respectively.
  For the block requests, only the $y_i'$ bits influencing the construction of the graph are shown.
}
\label{fig:k-col_example}
\end{figure}

\begin{figure}[htp]
\centering
\begin{tikzpicture}[scale = 0.89]
  
\begin{scope}[shift={(-3,0)}]
\node[vertex] (v1) at (5,0) {$v_1$};
\node[vertex] (f1) at (6.5,0) {$f_1$};
\draw (v1) -- (f1);
\end{scope}

\begin{scope}[shift={(-4.5,0)}]
\node[vertexgray] (v2) at (9.5,0) {$v_2$};
\draw[gray!30] (f1) -- (v2);
\node[vertex] (v21) at (8.4,1.25) {$v_{2,1}$};
\node[vertex] (v22) at (9.5,2.5) {$v_{2,2}$};
\node[vertex] (v23) at (10.6,1.25) {$v_{2,3}$};
\node[vertex] (f2) at (11,0) {$f_2$};
\draw[gray!30] (v2) -- (v21);
\draw[gray!30] (v2) -- (v22);
\draw[gray!30] (v2) -- (v23);
\draw (v21) -- (v23);
\draw (v22) -- (v23);
\draw (v21) -- (v22);
\draw[gray!30] (v2) -- (f2);
\end{scope}

\begin{scope}[shift={(3,0)}]
\node[vertex] (v3) at (5,0) {$v_3$};
\draw (f2) -- (v3);
\node[vertex] (f3) at (6.5,0) {$f_3$};
\draw (v3) -- (f3);
\end{scope}

\begin{scope}[shift={(9,0)}]
\node[vertexgray] (v4) at (2,0) {$v_4$};
\draw[gray!30] (f3) -- (v4);
\node[vertex] (v41) at (0.3,1.55) {$v_{4,1}$};
\node[vertex] (v42) at (0.3,3.5) {$v_{4,2}$};
\node[vertex] (v43) at (2,4.75) {$v_{4,3}$};
\node[vertex] (v44) at (3.7,3.5) {$v_{4,4}$};
\node[vertex] (v45) at (3.7,1.55) {$v_{4,5}$};
\node[vertex] (f4) at (3.5,0) {$f_4$};
\draw[gray!30] (v4) -- (v41);
\draw[gray!30] (v4) --(v42);
\draw (v41) --(v42);
\draw[gray!30] (v4) -- (v43);
\draw (v41) -- (v43);
\draw[gray!30] (v4) -- (v44);
\draw (v41) -- (v44);
\draw (v43) -- (v44);
\draw[gray!30] (v4) -- (v45);
\draw (v41) -- (v45);
\draw (v43) -- (v45);
\draw[gray!30] (v4) -- (f4);
\end{scope}
\end{tikzpicture}
\caption{The graph $G$ from Figure~\ref{fig:k-col_example}. The vertices and edges drawn in black are those of the subgraph $G_0$.
}
\label{fig:k-col_example_G0}
\end{figure}

\begin{definition}\label{def:k-spill}
For \emph{Online $d$-Bounded $k$-Spill with Predictions} ($\COLT{k}{d}$), the input is a graph, $G=(V,E)$, where all vertices have degree at most~$d$.
For each vertex, $v_i$, an algorithm, $\ALG$, outputs $y_i = 1$ to mark $v_i$ as a spill or $y_i=0$ to mark it as colorable.

For the algorithm's solution to be feasible, the subgraph induced by $\{v_i \mid y_i=0 \}$ must be $k$-colorable.
Given an instance $I = (x,\hat{x},r) \in \INSTANCES{\COL{k}}$,
$$\ALG(I) = 
\sum_{i=1}^n y_i.$$
The set $\{ v_i \mid x_i=0 \}$ is a maximum set of vertices inducing a $k$-colorable subgraph.
\end{definition}

We let Online $k$-Spill with Predictions ($\COL{k}$) denote the version where there is no bound on the degree of vertices.

For a fixed number of colors, minimizing spill is equivalent to maximizing the number of colored vertices and this problem is
NP-complete~\cite{AHU74}.
A multi-objective variant of this problem was studied both online and
offline in~\cite{ELW11}.
As observed in the proof of the following theorem,
$\COL{1}$ is equivalent to $\VC$.

\begin{theorem}\label{thm:k-col_hardness}
For all $k,t \in \ZZ^+$, and all pairs of insertion monotone error measures $\PAIRETA$, $\COLT{k}{t+k+1}$ is $\CCWM{\PAIRETAFORCC}{t}$-hard.
\end{theorem}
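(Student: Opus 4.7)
The plan is to apply the reduction template from Section~\ref{sec:reduction_template} (Theorem~\ref{thm:OBA_reduction_template}) to construct an online reduction $\rho \colon \ASG{t} \orarrow \COL{k}$, using $(k+1)$-clique gadgets as the hardness-encoding device. The key fact is that a $(k+1)$-clique is not $k$-colorable, so at least one of its vertices must be added to the spill set; this provides exactly the leverage needed to translate the $\ASG{t}$ cost structure into a $\COL{k}$ instance. Note that the $k=1$ case recovers the $\BDVC{t}$ gadgets from Theorem~\ref{thm:bdvc_t_hardness}, since a $2$-clique is simply an edge.

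First I would set the challenge request $c_i$ to be the arrival of a new isolated vertex $v_i$, carrying no edges, so that the truth bit $x_i$ is genuinely ambiguous to the algorithm at decision time. I would then define the blocks as follows. Set $\BLOCK{0}{y_i} = \emptyset$ for both $y_i \in \{0,1\}$. Let $\BLOCK{1}{1}$ reveal $k$ new vertices that together with $v_i$ form a $(k+1)$-clique, each new vertex arriving with true and predicted bit $0$. Let $\BLOCK{1}{0}$ reveal $t$ vertex-disjoint groups of $k$ new vertices, each group together with $v_i$ forming a $(k+1)$-clique (so the $t$ cliques share only $v_i$), again with all true and predicted bits equal to $0$. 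This determines the candidate online reduction via Definition~\ref{def:candidate_or}.

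Next I would verify Conditions~\ref{item:reduction_REDALG}--\ref{item:reduction_ground_truth} of Theorem~\ref{thm:OBA_reduction_template} challenge by challenge, which is valid because the subgraphs attached to distinct challenges are vertex-disjoint. When $x_i = 0$, the vertex $v_i$ is isolated and both $\OPT_{\COL{k}}$ and $\OPT_{\ASG{t}}$ incur $0$, while the algorithm's cost on $v_i$ matches $\REDALG$'s exactly. When $x_i = 1$, spilling $v_i$ is locally optimal: the remaining block vertices then form a disjoint union of $k$-cliques, which are $k$-colorable, so the bit sequence $\langle x_i, \BITS{x_i}{y_i}\rangle$ indeed encodes an optimal local solution (giving~\ref{item:reduction_ground_truth}) and $\OPT_{\COL{k}}$ incurs exactly $1$ per $x_i = 1$ challenge, yielding~\ref{item:reduction_OPT} by summation. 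Correspondingly, $\REDALG$ charges $1$ when $y_i = 1$ (matched by the algorithm spilling $v_i$) and $t$ when $y_i = 0$ (matched by the forced block spills analyzed below), giving~\ref{item:reduction_REDALG}.

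The main technical point, which I expect to be the only subtle step, is the feasibility argument for $\BLOCK{1}{0}$: I must show that any algorithm with $y_i = 0$ is forced to spill at least $t$ vertices inside the block. This combines two observations: each of the $t$ $(k+1)$-cliques requires at least one spilled vertex (since it is not $k$-colorable), and because the cliques share only the unspilled $v_i$, the required spills must come from the $t$ vertex-disjoint groups of non-$v_i$ vertices and are therefore distinct. Insertion monotonicity of $\PAIRETA$ then yields Conditions~\ref{item:OR_condition_eta_0} and~\ref{item:OR_condition_eta_1} automatically, since every request in the blocks is correctly predicted, completing the application of Theorem~\ref{thm:OBA_reduction_template}.
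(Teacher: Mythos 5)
Your proof is correct and follows the same high-level strategy as the paper (the reduction template from Section~\ref{sec:reduction_template} with $(k+1)$-clique gadgets), but the gadget design is genuinely simpler in two respects. First, your $\BLOCK{1}{0}$ is non-adaptive: $t$ vertex-disjoint groups of $k$ vertices, each forming a $(k+1)$-clique through $v_i$, so the "force $t$ spills" argument is a clean counting argument over disjoint cliques. The paper's $\BLOCK{1}{0}$ is instead built adaptively, revealing vertices one at a time with edges only to the block vertices $\ALG$ has so far declared colorable, and then patching on an extra $k$-clique if no $(k+1)$-clique formed; both forcings work, but yours avoids the case analysis entailed by the adaptivity (including the awkward early-exit branch that the paper labels "(this happens only if $\ALG$ creates an infeasible solution)"). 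Second, the paper appends a \emph{final} vertex $f_i$ to every block, joining $v_i$ to $v_{i+1}$ along a path, and uses that path structure in Algorithm~\ref{alg:opt_k-coloring} (challenge vertices get color $\mathcal{c}_1$, final vertices get $\mathcal{c}_2$); because this requires two distinct colors, the paper handles $k=1$ separately via the equivalence $\COL{1}\equiv\VC$. Your construction dispenses with the $f_i$ vertices — the challenge/block components are vertex-disjoint, so the optimal $k$-coloring of $\xZero{G}'$ is immediate component-by-component — and as a bonus it specializes at $k=1$ to exactly the $\BDVC{t}$ gadget, so no case split is needed.

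One thing worth being explicit about when writing this up is that Condition~\ref{item:reduction_REDALG} is a statement about \emph{total} cost, so the per-challenge bookkeeping needs a small wrapper: if $\ALG$'s final output is infeasible it costs $\infty$ and the inequality is trivial; if feasible, then since the components are vertex-disjoint, feasibility restricts to each component separately, and only then does the local clique-counting argument (at least $t$ spills among the $G_j$'s when $v_i$ is kept) apply. You gesture at this with "the forced block spills analyzed below," so you clearly have the right picture; just make sure the feasible/infeasible dichotomy is stated before summing over components.
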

\begin{proof}
  For any solution, $y'$, to an instance of $\COL{1}$, the vertices of $\{v_i \in V \mid y_i' = 0\}$ form an independent set.
  Thus, $\{v_i \in V \mid y_i' = 1\}$ is a vertex cover, so $\COL{1}$ is equivalent to $\VC$.
  Hence, $\COL{1}$ is $\CCWM{\PAIRETAFORCC}{t}$-hard by Theorem~\ref{thm:summary_bdvc_t}\ref{item:vc_insertion_monotone}.

For $k \geqslant 2$, we give a reduction, $\rho = \ONLRED{\rho}$, from \ASG{t} to \COLT{k}{t+k+1} with respect to $\PAIRETA$, using the reduction template of Algorithm~\ref{alg:redalg} from Section~\ref{sec:reduction_template} (see Figure~\ref{fig:k-col_example} for an example).
To this end, consider any $I = (x,\hat{x},r) \in \INSTANCES{\ASG{t}}$ and any $\ALG' \in \ALGS{\COLT{k}{t+k+1}}$.
Let $\ALG = \ORALG{\rho}(\ALG')$ and $I' = (x',\hat{x}',r')=\ORTRANS{\rho}(\ALG',I)$  be the \ASG{t} algorithm and \COLT{k}{t+k+1} instance created in the reduction described below.
We denote the graph of $I'$ by $G=(V,E)$ and split $V$ in two sets, $\xZero{V} = \{ v_i \in V \mid y_i=0 \}$ and $\xOne{V}  = \{ v_i \in V \mid y_i=1 \}$.

The $i$th challenge request is an isolated vertex, $v_i$, with 
$\hat{x}_i' = \hat{x}_i$.
Let $y_i'$ be the output of $\ALG'$ for $v_i$.
Then, \ALG outputs $y_i=y'_i$ on the $i$th request.

The $i$th block, $\BLOCK{x_i}{y_i'}$, is constructed as follows, with all prediction bits equal to $0$.
The last vertex, $f_i$, of the block is called a \emph{final} vertex.
If $i < n$, $f_i$ is connected to $v_i$ and $v_{i+1}$, and if $i = n$, $f_i$ is connected to $v_i = v_n$.
\begin{itemize}
\item If $x_i = 0$, then $\BLOCK{x_i}{y_i'}$ contains a single vertex:
  \begin{itemize}
  \item The final vertex, $f_i$.
  \end{itemize}
\item If $x_i = y_i' = 1$, then $\BLOCK{x_i}{y_i'}$ contains $k+1$
  vertices:
  \begin{itemize}
  \item For each $j=1,2,\ldots,k$, a vertex, $v_{i,j}$, connected to $v_i$ and to $v_{i,l}$, $1 \leqslant l < j$.
  \item The final vertex, $f_i$.
  \end{itemize}
\item If $x_i=1$ and $y_i'=0$, then $\BLOCK{x_i}{y_i'}$ contains between $k$ and $t+k$ vertices:

  Let $j = 1$.\\
  Until $\ALG'$ has added $t$ new vertices to $V_1$ or $k$ new vertices to $V_0$:\footnote{The latter happens only if $\ALG'$ creates an infeasible solution, incurring an infinite cost.}
\begin{itemize}[ ]
\item A vertex, $v_{i,j}$, connected to $v_i$ and to each vertex in $\xZero{V} \cap \bigcup_{l=1}^{j-1}v_{i,l}$.\\
  Increment $j$.
\end{itemize}
Until $\bigcup_{j} \{v_{i,j}\}$ contains a $k$-clique:
  \begin{itemize}
  \item  A vertex, $v_{i,j}$, connected to $v_i$ and to $v_{i,l}$, $l < j$.\\
    Increment $j$.
  \end{itemize}
  Finally:
  \begin{itemize}
  \item The final vertex, $f_i$.
  \end{itemize}
\end{itemize}

First, we argue that the maximum degree in the graph $G$ is at most $t+k+1$,
Since each challenge request is a neighbor of all corresponding block requests, the maximum degree is obtained by a challenge request.
A challenge request, $v_i$, has degree at most $2$, if $x_i=0$, and at most $k+2$, if $x_i=y_i'=1$.
Now, it only remains to consider the case $x_i=1$ and $y'_i=0$.
In the first step,
where vertices are connected to $v_i$ and to vertices from $V_0$ in the same block, at most $t+k-1$ vertices are added.
In the next step, vertices may be added to ensure a $k$-clique within the block.
If vertices are added in this step, they form a clique with the vertices added to $V_0$ in the previous step.
If the last vertex of the previous step was a $V_1$-vertex, this vertex will also be part of the clique, so the number of $V_0$-vertices from that step plus the number of vertices added to ensure a $k$-clique will be at most $k-1$.
If, on the other hand, the last vertex of the previous step was a $V_0$-vertex, at most $t-1$ $V_1$-vertices were added in that step.
We conclude that the total number of nonfinal vertices in the block is at most $t+k-1$.
Hence, $v_i$ is connected to at most $t+k-1$ nonfinal block vertices, one final vertex of its own block, and at most one final vertex of another block, giving a total number of neighbors of at most $t+k+1$.

Let $X_1 = \{ v_i \in V \mid x_i=1 \}$, i.e., let $X_1$ consist of the challenge requests for which the corresponding block contains more than just a final vertex.
In the example of Figure~\ref{fig:k-col_example}, $X_1=\{v_2,v_4\}$.
Let $X_0 = V \setminus X_1$, and note that $X_0$ contains all block requests.
Let $G_0$ be the subgraph of $G$ induced by $X_0$.
Figure~\ref{fig:k-col_example_G0} shows the subgraph $G_0$ of the example graph of Figure~\ref{fig:k-col_example}.
We argue that $G_0$ is a largest possible $k$-colorable subgraph of $G$.

To see that $G_0$ is $k$-colorable, note that each connected component in $G_0$ is either a path (induced by challenge requests with $x_i=0$ and final vertices) or is induced by (a subset of) the nonfinal vertices of a block corresponding to a challenge request with $x_i=1$.
Since the paths can be $2$-colored, we focus on the nonfinal block vertices.
If $y'_i=1$, the nonfinal vertices of the block induce a $k$-clique.
If $y'_i=0$, each nonfinal vertex, at its arrival, is connected to each vertex, $v_{i,j} \in V_0$.
Since the block ends if it contains at least $k$ vertices in $V_0$, each vertex has at most $k-1$ neighbors in the component at its arrival.
Thus, if the vertices of $G_0$ are colored in order of arrival, one of the $k$ colors is available for each new vertex.

To see that $G$ has no $k$-colorable subgraph with more vertices than $G_0$, note that each vertex in $X_1$ together with its block requests induce a subgraph containing a $(k+1)$-clique.

Since $G_0$ is a largest possible $k$-colorable subgraph of $G$, we conclude that $\OPT(I') = |X_1| = \OPT(I)$, satisfying Condition~\ref{item:OR_condition_OPT} of Definition~\ref{def:OR}.
We also conclude that all block requests are correctly predicted.
Since $\hat{x}'_i=\hat{x}_i$, for each challenge request, $v_i$, and $\eta_0$ and $\eta_1$ are insertion monotone, this implies that $\eta_0(I') \leqslant \eta_0(I)$ and $\eta_1(I') \leqslant \eta_1(I)$, satisfying Condition~\ref{item:OR_condition_eta}.

It only remains to show that Condition~\ref{item:OR_condition_ALG} is satisfied.
To this end, we prove that, for each request, $r_i$ in $I$, the cost of \ALG on $r_i$ is no larger than the cost of $\ALG'$ on $v_i$ and $\BLOCK{x_i}{y_i'}$:

If $y'_i=1$, $\ALG$ guesses $1$, incurring a cost of $1$. $\ALG'$ lets $v_i$ be a spill vertex and thus incurs a cost of at least $1$.

If $y'_i = x_i = 0$, $\ALG$ correctly guesses $0$, incurring a cost of $0$.

If $y'_i = 0$ and $ x_i = 1$, $\ALG$ incorrectly guesses $0$ and incurs a cost of $t$. 
Unless $\ALG'$ produces an infeasible solution, $\BLOCK{x_i}{y_i'}$ contains $t$ vertices of $V_1$, so $\ALG'$ incurs a cost of at least $t$.
\end{proof}

\subsection{$\boldsymbol{2}$-SAT Deletion}

Given a collection of variables, $V$, and a $2$-CNF-SAT formula $\varphi$, an algorithm for $2$-SAT Deletion finds a subset of clauses in $\varphi$ to delete, in order to make $\varphi$ satisfiable~\cite{MR99,CC07}.
Equivalently, the algorithm must assign a truth value to all variables, while minimizing the number of unsatisfied clauses in $\varphi$.

Chleb{\'{\i}}k and Chleb{\'{\i}}kova proved the problem APX-hard~\cite{CC07}.

We consider an online variant of $2$-SAT Deletion in the context of predictions.

\begin{definition}\label{def:2-SAT}
A request, $r_i$, for \emph{Online $2$-SAT Deletion with Predictions} ($\SAT{2}$) contains a variable, $v_i$, and all clauses in the formula of the form $(v_i \vee v_j)$, $(\overline{v}_i \vee v_j)$, $(v_i \vee \overline{v}_j)$, or $(\overline{v}_i \vee \overline{v}_j)$, for all $j < i$.
An algorithm, $\ALG$, outputs $y_i = 1$ to set $v_i = \texttt{true}$.

Given an instance $I = (x,\hat{x},r)$,
\begin{align*}
\ALG(I) = | \{C \in \varphi \mid \mbox{$C$ is unsatisfied}\} |.
\end{align*}
The truth assignment given by $v_i = \texttt{true} \Leftrightarrow x_i=1$ maximizes the number of satisfied clauses.
\end{definition}

\begin{theorem}\label{thm:2_sat_hardness}
For any pair of insertion monotone error measures, $\PAIRETA$, and any $t \in \ZZ^+$, $\SAT{2}$ is $\CCWM{\PAIRETAFORCC}{t}$-hard.
\end{theorem}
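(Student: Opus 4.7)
The plan is to construct an online reduction $\rho \colon \IR \orarrow \SAT{2}$ with reduction terms $k_\textsc{a} = k_\textsc{o} = 0$ that realises the construction sketched just above the theorem, and then apply Theorem~\ref{thm:online_reductions_closed_under_composition} to compose with the online reduction $\ASG{t} \orarrow \IR$ that already exists by the corollary after Theorem~\ref{thm:interval_scheduling_hardness}. Given an $\IR$ instance $(x,\hat{x},r)$, I would set $\ORTRANS{\rho}(\ALG_{\SAT{2}})(x,\hat{x},r) = (x,\hat{x},r_f)$, where the $i$-th request in $r_f$ introduces a variable $v_i$ identified with the interval $I_i$, together with the unit clause $(\overline{v_i} \vee \overline{v_i})$ and the collision clauses $(v_i \vee v_j)$ for every previously revealed interval $I_j$ overlapping $I_i$, using $\hat{x}_i$ as the prediction. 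Interpreting $x_i = 1$ as $v_i = \texttt{true}$, an optimal $\IR$ solution $S$ falsifies exactly $|S|$ unit clauses and satisfies every collision clause (since $\mathcal{I} \setminus S$ contains no overlapping pair), so $x$ encodes a $\SAT{2}$ solution of cost $\OPT_{\IR}(x,\hat{x},r)$; this yields $\OPT_{\SAT{2}}(x,\hat{x},r_f) \leqslant \OPT_{\IR}(x,\hat{x},r)$ and establishes condition~\ref{item:OR_condition_OPT_1} with $k_\textsc{o}=0$. Since $x_f = x$ and $\hat{x}_f = \hat{x}$, conditions~\ref{item:OR_condition_eta_0} and~\ref{item:OR_condition_eta_1} are immediate for any pair of error measures.

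For $\ORALG{\rho}(\ALG_{\SAT{2}})$, on the $i$-th $\IR$ request I would forward the translated $\SAT{2}$ request (its clauses and prediction $\hat{x}_i$) to $\ALG_{\SAT{2}}$, record its response $y_i^{\SAT{2}}$, and output $y_i = 1$ either if $y_i^{\SAT{2}} = 1$ or if some previously revealed interval $I_j$ with $y_j = 0$ overlaps $I_i$ (forced to maintain feasibility); otherwise output $y_i = 0$. This policy guarantees feasibility and enforces the invariant that $y_j = 0 \Rightarrow y_j^{\SAT{2}} = 0$. Condition~\ref{item:OR_condition_ALG} then follows from a charging argument: every index $i$ with $y_i = 1$ is charged to a distinct unsatisfied clause in $\ALG_{\SAT{2}}$'s assignment. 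If $y_i^{\SAT{2}} = 1$, the unit clause $(\overline{v_i} \vee \overline{v_i})$ is unsatisfied and available. If instead $y_i = 1$ was forced by a witness $I_j$ with $j < i$ and $y_j = 0$, the invariant gives $y_j^{\SAT{2}} = y_i^{\SAT{2}} = 0$, so the collision clause $(v_i \vee v_j)$ is unsatisfied; charging occurs only while processing the larger index $i$, so each collision clause is charged at most once.

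The main obstacle is verifying that $x$ truly encodes \emph{an optimal} $\SAT{2}$ solution, which requires the reverse inequality $\OPT_{\IR}(x,\hat{x},r) \leqslant \OPT_{\SAT{2}}(x,\hat{x},r_f)$. An optimal $\SAT{2}$ truth assignment need not correspond to a feasible $\IR$ solution, as it may leave some overlapping pairs simultaneously false, each such pair then contributing an unsatisfied collision clause to the $\SAT{2}$ cost. Starting from the assignment and adding to $\mathcal{I}'$ one endpoint of each unsatisfied collision clause (equivalently, a vertex cover of the graph of uncovered overlaps) produces a feasible $\IR$ solution of cost no larger than the $\SAT{2}$ cost, yielding the reverse inequality and the equality $\OPT_{\SAT{2}}(x,\hat{x},r_f) = \OPT_{\IR}(x,\hat{x},r)$. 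Once this is verified, $\rho$ satisfies all four conditions of Definition~\ref{def:OR}, so $\IR \orarrow \SAT{2}$; transitivity then produces an online reduction $\ASG{t} \orarrow \SAT{2}$ for every $t \in \ZZ^+$, establishing $\CCWM{\PAIRETAFORCC}{t}$-hardness of $\SAT{2}$ for every pair of insertion monotone error measures $\PAIRETA$.
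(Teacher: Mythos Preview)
Your proposal is correct and follows essentially the same approach as the paper: the identical reduction $\IR \orarrow \SAT{2}$ via unit clauses $(\overline{v_I}\vee\overline{v_I})$ plus collision clauses, the same $\ORALG{\rho}$ that mirrors $\ALG_{\SAT{2}}$'s bit but forces rejection when an overlapping interval is already accepted, and the same flip-a-false-variable argument for the reverse optimality inequality. Your injective charging argument for condition~\ref{item:OR_condition_ALG} is a clean substitute for the paper's request-by-request case analysis over ``conflict intervals,'' but the construction and the inequalities established are the same.
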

\begin{proof}
We define a reduction, $\rho \colon \IR \orarrow \SAT{2}$.
Together with Lemmas~\ref{lem:transitivity} and~\ref{lem:interval_scheduling_hardness}, this establishes the hardness of $\SAT{2}$ with respect to any pair of insertion monotone error measures, \PAIRETA.
The reduction is given in Algorithm~\ref{alg:inter_to_2-sat} and Figure~\ref{fig:inter_to_2-sat} shows an example of the reduction.

\begin{algorithm}[h]
\caption{Reducing \IR to \SAT{2}} 
\begin{algorithmic}[1]
\Require An $\IR$ instance, $I = (x,\hat{x},r)$, and a \SAT{2} algorithm, $\ALG'$
\Ensure A $\SAT{2}$ instance, $I'=(x',\hat{x}',r')$, and an \IR algorithm, \ALG
\State $\varphi \leftarrow \texttt{true}$ \Comment{An empty CNF-formula for $\ALG'$}
\For {each request in $r$, with interval $S_i$ and prediction $\hat{x}_i$}
	\State $O \leftarrow \{ S_j \mid j<i \text{ and } S_j \cap S_i \neq \emptyset \}$ \Comment{Intervals overlapping $S_i$} \label{line:varphi_construction_start}
	\State Create a new variable, $v_{i}$, for $I'$
	\State $\displaystyle \varphi \leftarrow \varphi \, \wedge \, \left(\overline{v}_{i} \vee \overline{v}_{i} \right) \, \wedge \, \left( \bigwedge_{S_j \in O} (v_{i} \vee v_{j})\right)$ \label{line:varphi_construction_end}
	\State Request $v_{i}$ with prediction $\hat{x}_i'=\hat{x}_i$ and updated CNF-formula $\varphi$
	\If {$O$ contains an accepted interval}\label{line:overlap}
		\State Output $y_i=1$ \Comment{Reject $S_i$} \label{line:reject_safety_inter_to_2-sat}
	\Else
		\State Output $y_i=y_i'$, where $y'_i$ is the output of $\ALG'$ for $v_i$\label{line:outputy_i}
	\EndIf
\EndFor
\end{algorithmic}
\label{alg:inter_to_2-sat}
\end{algorithm}

\begin{figure}[htp]
\centering
\begin{tikzpicture}
\draw[|-|] (0,0) -- node[above,scale = 0.8] {$S_1$} (4,0);
\draw[|-|] (2.5,1.5) -- node[above,scale = 0.8] {$S_2$} (3.5,1.5);
\draw[|-|] (3,0.75) -- node[above,scale = 0.8] {$S_3$} (6,0.75);
\draw[|-|] (4.75 , 1.5 ) -- node[above,scale = 0.8 ] {$S_4$} (6.75 , 1.5 );
\end{tikzpicture}
\caption{An example reduction from $\IR$ to $\SAT{2}$.
The above instance of $\IR$ gives rise to the CNF-formula $\varphi = (\overline{v_{1}} \vee \overline{v_{1}}) \wedge (\overline{v_{2}} \vee \overline{v_{2}}) \wedge (v_{1} \vee v_{2}) \wedge (\overline{v_{3}} \vee \overline{v_{3}}) \wedge (v_{1} \vee v_{3}) \wedge (v_{2} \vee v_{3}) \wedge (\overline{v_{4}} \vee \overline{v_{4}}) \wedge (v_{3} \vee v_{4})$.
}
\label{fig:inter_to_2-sat}
\end{figure}

Clauses of the form $(\overline{v}_{i} \vee \overline{v}_{i})$ are called \emph{interval clauses} and clauses of the form $(v_{i} \vee v_{j})$ are called \emph{collision clauses}.

We first prove that, for each request, $r_i$, the cost of \ALG on $S_i$ is no larger than the cost of $\ALG'$ on $v_i$, thus satisfying Condition~\ref{item:OR_condition_ALG} of Definition~\ref{def:OR}.
To this end, note that \ALG incurs a cost of $1$ on $S_i$, if it outputs $y_i=1$ to reject the interval, and otherwise, it incurs no cost.
Moreover, \ALG outputs $y_i=1$, only if at least one of the following two conditions is fulfilled.
\begin{itemize}
\item $\ALG'$ outputs $y'_i=1$ to set $v_i=\texttt{true}$ (see Line~\ref{line:outputy_i} of Algorithm~\ref{alg:inter_to_2-sat}).
  In this case,
  the interval clause $(\overline{v}_i \vee \overline{v}_i)$ is not satisfied, and $\ALG'$ incurs a cost of~$1$.
\item $S_i$ overlaps an interval, $S_j$, $j<i$, which \ALG accepted by outputting $y_j=0$ (see Lines~\ref{line:overlap}--\ref{line:reject_safety_inter_to_2-sat} of Algorithm~\ref{alg:inter_to_2-sat}).
Note that Line~\ref{line:outputy_i} is the only place in Algorithm~\ref{alg:inter_to_2-sat} where \ALG may output $y_j=0$.
Thus, we must have $y'_j=0$, and hence, $v_j=\texttt{false}$.
Therefore, if $v_i=\texttt{false}$, the collision clause $(v_j \vee v_i)$ is not satisfied, giving $\ALG'$ a cost of~$1$.
On the other hand, if $v_i=\texttt{true}$, the interval clause $(\overline{v}_i \vee \overline{v}_i)$ is not satisfied, and $\ALG'$ incurs a cost of~$1$.
\end{itemize}

Next, we note that for any feasible solution, $z$, to $I$, $z'=z$ is a solution to $I'$ satisfying all collision clauses and all interval clauses corresponding to accepted intervals.
Thus, $z'=z$ is a solution to \SAT{2} with the same cost as $z$.
This shows that $\OPT(I') \leqslant \OPT(I)$, satisfying Condition~\ref{item:OR_condition_OPT}.

To prove that Condition~\ref{item:OR_condition_eta} is satisfied, we show that $x'=x$ is an optimal solution to $I'$.
Since any solution, $z'$, to $I'$ satisfying all collision clauses has the same cost as the solution $z=z'$ for $I$, $x'=x$ is optimal among the solutions satisfying all collision clauses.
Thus, we just need to show that no solution to $I'$ violating some collision clause has a lower cost than $x'=x$.
To this end, we argue that, for any solution, $z'$, to $I'$, there is a solution, $u'$, satisfying all collision clauses and with a cost no higher than that of $z'$.
The solution $u'$ can be constructed in the following way. As long as there is an unsatisfied collision clause, change the truth value of a variable of that clause to \texttt{true}. In this way, the collision clause becomes satisfied, no collision clause changes from \texttt{true} to \texttt{false}, and only one interval clause changes from \texttt{true} to \texttt{false}, so the cost does not increase.
Since the number of satisfied collision clauses increases in each step, the process will terminate, and the resulting solution has a cost no higher than that of $z'$.
Since $\hat{x}=\hat{x}'$, this completes the proof that Condition~\ref{item:OR_condition_eta} is satisfied
\end{proof}

\begin{corollary}
For all $t \in \ZZ^+$, and all pairs of insertion monotone error measures $\PAIRETA$, $\SAT{2} \not\in \CCWM{\PAIRETAFORCC}{t}$.
\end{corollary}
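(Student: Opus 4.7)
The plan is to follow the exact same template used earlier in the paper for $\VC$, $\DOM$, and $\IR$: assume containment, combine with hardness at the next level, and derive a contradiction via transitivity.

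First, I would assume for contradiction that $\SAT{2} \in \CCWM{\PAIRETAFORCC}{t}$ for some $t \in \ZZ^+$ and some pair of insertion monotone error measures $\PAIRETA$. By definition of the complexity class, this yields an online reduction
\begin{align*}
\tau \colon \SAT{2} \orarrow \ASG{t}
\end{align*}
with respect to $\PAIRETA$. On the other hand, by Theorem~\ref{thm:2_sat_hardness} applied at level $t+1$, $\SAT{2}$ is $\CCWM{\PAIRETAFORCC}{t+1}$-hard, so there also exists an online reduction
\begin{align*}
\rho \colon \ASG{t+1} \orarrow \SAT{2}
\end{align*}
with respect to $\PAIRETA$ (the hypothesis that $\PAIRETA$ is insertion monotone is exactly what Theorem~\ref{thm:2_sat_hardness} requires).

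Next, I would invoke Theorem~\ref{thm:online_reductions_closed_under_composition}.\ref{item:composition} to compose $\rho$ and $\tau$, obtaining an online reduction
\begin{align*}
\tau \circ \rho \colon \ASG{t+1} \orarrow \ASG{t}
\end{align*}
with respect to $\PAIRETA$. This, however, directly contradicts Lemma~\ref{lem:hierarchy_lemma}.\ref{item:no_or_from_asg_t+1_to_asg_t}, which states that no such online reduction exists for any pair of error measures. The contradiction establishes the result.

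There is no real obstacle: the proof is purely an application of the machinery already developed. The only thing worth noting is that the argument is parameter-shifted in the same way as the analogous corollaries for $\VC$, $\DOM$, and $\IR$ — we use $\CCWM{\PAIRETAFORCC}{t+1}$-hardness (which Theorem~\ref{thm:2_sat_hardness} provides for every positive integer, hence for $t+1$) to leverage the separation between consecutive levels of the hierarchy established in Lemma~\ref{lem:hierarchy_lemma}.
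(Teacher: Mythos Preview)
Your proposal is correct and follows essentially the same approach as the paper's own proof: assume membership to get a reduction $\SAT{2} \orarrow \ASG{t}$, invoke Theorem~\ref{thm:2_sat_hardness} at level $t+1$ for the reduction $\ASG{t+1} \orarrow \SAT{2}$, compose via Theorem~\ref{thm:online_reductions_closed_under_composition}, and contradict Lemma~\ref{lem:hierarchy_lemma}.\ref{item:no_or_from_asg_t+1_to_asg_t}. The only cosmetic difference is that the paper swaps the names $\rho$ and $\tau$.
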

\begin{proof}
Assume that $\SAT{2} \in \CCWM{\PAIRETAFORCC}{t}$ for some $t$ and some pair of error measures $\PAIRETA$.
Then, $\ASG{t} \geq_{\textrm o} \SAT{2}$ and, by Theorem~\ref{thm:2_sat_hardness}, $\SAT{2} \geq_{\textrm o} \ASG{t+1}$.
By transitivity, $\ASG{t} \geq_{\textrm o} \ASG{t+1}$, contradicting Lemma~\ref{lem:hierarchy_lemma}\ref{item:no_or_from_asg_t+1_to_asg_t}.
\end{proof}

\subsection{Dominating Set}

Given a graph, $G = (V,E)$, an algorithm for Dominating Set finds a  subset $V' \subseteq V$ of vertices such that for all vertices $v \in V$, $v \in V'$ or $v$ is adjacent to a vertex $u \in V'$.
The cost of the solution is given by the size of $V'$, and the goal is to minimize this cost.

Dominating Set was one of Karp's first 21 NP-complete problems~\cite{K72}, though he referred to it as the set cover problem.
Previous work has concluded that Dominating Set is W[2]-complete~\cite{DF95} and APX-hard~\cite{DP06}.
Further, Online Dominating Set is studied in~\cite{BEFKL19j} and shown to be AOC-complete in~\cite{BFKM17}.

We study an online variant of Dominating Set with predictions.

\begin{definition}\label{def:dom}
The input to \emph{Online Dominating Set with Predictions} ($\DOM$) is a graph, $G=(V,E)$.
An algorithm, $\ALG$, outputs $y_i = 1$ to accept $v_i$ into its dominating set.

For the algorithm's solution to be feasible, the accepted vertices must form a dominating set, i.e., each vertex in $V$ must be contained in $\{v_i \mid y_i=1\}$ or have a neighbor in $\{v_i \mid y_i=1\}$.
Given an instance $I \in \INSTANCES{\DOM}$,
$$\ALG(I) = \sum_{i=1}^n y_i.$$
The set $\{ v_i \mid x_i=1 \}$ is an optimal dominating set.
\end{definition}

\subsubsection{Asymptotic Hardness}

To describe our most central result on $\DOM$, we introduce a slightly weaker
notion of hardness, considering only algorithms that are $(O(1),\beta,\gamma)$-com\-pe\-ti\-tive:

\begin{definition}\label{def:weaklyhard}
Let $P$ and $Q$ be online problems with predictions.
If any $\ABC$-competitive algorithm for $Q$, with $\ALPHA \in O(1)$, implies the existence of an $\ABC$-competitive algorithm for $P$, then $Q$ is \emph{asymptotically as hard as} $P$.
If $P$ is \CCWM{}{}-hard for some complexity class, \CCWM{}{}, and $Q$ is asymptotically as hard as $P$, then $Q$ is \emph{asymptotically \CCWM{}{}-hard}.
\end{definition}

For proving asymptotic hardness, we define a new,
less restrictive type of online reduction, which
motivated the definition of asymptotically as
hard as:
\begin{definition}\label{def:asympreduc}
An \emph{asymptotic online reduction} is the same as a strict online reduction except that, instead of Condition~\ref{item:OR_condition_OPT} from Definition~\ref{def:OR}, it satisfies the following weaker condition:
\begin{align}
\OPT_Q(I_Q) \leqslant \OPT_P(I_P) + b, \text{ for some } b \in O(1)  \label{eq:alternative_O2} \tag{O2$'$}
\end{align}
\end{definition}

We show that, for $(O(1),\beta,\gamma)$-competitive algorithms, asymptotic online reductions can be used the same way as strict online reductions:

\begin{lemma}\label{lem:why_the_naming_asymptotic}
  Let $\rho = \ONLRED{\rho}$ be an asymptotic online reduction from a problem, $P$, with predictions and error measures \PAIRETA to a problem, $Q$, with predictions and error measures \PAIRPHI.
  Let $\ALG_Q \in \ALGS{Q}$ be an $\ABC$-competitive algorithm for $Q$ with respect to $\PAIRPHI$, with $\alpha \in O(1)$, and let $\ALG_P = \ORALG{\rho}(\ALG_Q)$.
  Then, $\ALG_P$ is an $\ABC$-competitive algorithm for $P$ with respect to $\PAIRETA$.
\end{lemma}
\begin{proof}
Consider any instance, $I_P \in \INSTANCES{P}$, and let $I_Q = \ORTRANS{\rho}(\ALG_Q,I_P)$.
Then,
\begin{align*}
  \ALG_P(I_P)
  \leqslant & \ALG_Q(I_Q) + a, \\
            & \text{ where } a \in O(1), \text{ by Condition~\ref{item:OR_condition_ALG} of Definition~\ref{def:OR}}   \\ 
  \leqslant & \: \ALPHA \cdot \OPT_Q(I_Q) + \BETA \cdot \PHIZERO(I_Q) + \GAMMA \cdot \PHIONE(I_Q) + \AT + a, \\
            & \text{ where } \AT \in O(1),\text{ by Definition~\ref{def:competitiveness}}\\
  \leqslant & \: \ALPHA \cdot (\OPT_P(I_P)+b) + \BETA \cdot \ETAZERO(I_P) + \GAMMA \cdot \ETAONE(I_P) + \AT + a, \\
            & \text{ where } b \in O(1), \text{ by~\eqref{eq:alternative_O2} and~\ref{item:OR_condition_eta}} \\
          = & \: \ALPHA \cdot \OPT_P(I_P) + \BETA \cdot \ETAZERO(I_P)+ \GAMMA \cdot \ETAONE(I_P) + \alpha \cdot b + \AT + a,\\
            & \text{ where } \alpha \cdot b + \AT + a \in O(1), \text{ since } \alpha \in O(1).
\end{align*}
\end{proof}

\subsubsection{Hardness of \DOM}
In this section, we give a hardness result for \DOM, using a reduction analogous to a well-known reduction from Vertex Cover to Dominating Set~\cite{M89} used for proving Dominating Set NP-complete.

For any graph, $G=(V,E)$, we let $\dom{G} = (V',E')$ be the supergraph of $G$, where 
\begin{itemize}
\item $V' = V \cup \{ v_{i,j} \mid (v_i,v_j) \in E \}$, and
\item $E' = E \cup \{ (v_i,v_{i,j}), (v_{i,j},v_j) \mid (v_i,v_j) \in E \}$.
\end{itemize}
See Figure~\ref{fig:dom_example} for an example of \dom{G}.

\begin{figure}[htp]
\centering
\begin{tikzpicture}
\node at (-1,4.1) {$G\colon$};
\node[vertex,thick] (v1) at (0,2) {$v_1$};
\node[vertex] (v2) at (0,4) {$v_2$};
\node[vertex] (v3) at (0,0) {$v_3$};
\draw (v1) -- (v2);
\draw (v1) -- (v3);
\end{tikzpicture}
\hspace{1.5cm}
\begin{tikzpicture}
\node[vertex,thick] (v1) at (0,2) {$v_1$};
\node[vertex] (v2) at (0,4) {$v_2$};
\node[vertex] (v3) at (0,0) {$v_3$};
\draw (v1) -- (v2);
\draw (v1) -- (v3);

\node at (-1.3,4.1) {$\dom{G}\colon$};
\node[vertexdarkgray] (v12) at (1.5,3) {$v_{1,2}$};
\node[vertexdarkgray] (v13) at (1.5,1) {$v_{1,3}$};
\draw[gray!70] (v1) -- (v12);
\draw[gray!70] (v1) -- (v13);
\draw[gray!70] (v2) -- (v12);
\draw[gray!70] (v3) -- (v13);
\end{tikzpicture}
\caption{A graph, $G$, and its supergraph \dom{G}. The vertices and edges added to $G$ when constructing \dom{G} are drawn in dark gray. The vertex $v_1$ (drawn with a thicker boundary) constitutes an optimal vertex cover for $G$ and an optimal dominating set for \dom{G}.}
\label{fig:dom_example}
\end{figure}

\begin{lemma}\label{lem:optimal_vc_dom}
For any graph, $G$, with no isolated vertices, any optimal vertex cover for $G$ is an optimal dominating set for $\dom{G}$.
\end{lemma}
\begin{proof}
  Let $G=(V,E)$ and let $G'=(V',E')$ be \dom{G}.
  
  We first argue that any vertex cover, $C$, for $G$ is a dominating set for \dom{G}.
  Since every edge in $E$ is covered by $C$ and no vertex in $V$ is isolated, each vertex in $V$ is in $C$ or has a neighbor in $C$.
  Moreover, since each vertex, $v_{i,j} \in V' \setminus V$ has two neighbors which are the two endpoints of an edge in $E$, all vertices not in $V' \setminus V$ have a neighbor in $C$.

  Next, we argue that if $C$ is an optimal vertex cover for $G$, it is an optimal dominating set for \dom{G}.
  This follows from the fact that, for each edge $(v_i,v_j) \in E$, there is a vertex, $v_{i,j} \in V'$ that needs to be in the dominating set or have a neighbor ($v_i$ or $v_j$) in the set. Including $v_i$ or $v_j$ in the set is no worse than including $v_{i,j}$, since each of $v_i$ and $v_j$ dominate all three vertices, $v_i$, $v_j$, and $v_{i,j}$, and $v_{i,j}$ does not dominate any other vertices than these three.
\end{proof}

Note that Lemma~\ref{lem:optimal_vc_dom} does not hold for graphs with isolated vertices, since each isolated vertex would need to be included in a dominating set, but not in a vertex cover.
In an offline reduction, one would simply remove all isolated vertices from the graph, but when receiving vertices online, an algorithm will not be able to tell whether an isolated vertex will stay isolated.
Thus, we give a method for \textquotedblleft connectifying\textquotedblright\ a (possibly) disconnected graph, while controlling the size of the optimal vertex cover.
This is an easy construction that has been considered before~\cite{stackexchange}.
However, we have not found it in refereed works, so for completeness, we give
the argument below.

For any graph, $G=(V,E)$, we let $\con{G} = (V',E')$ be the connected supergraph of $G$, where
\begin{itemize}
\item $V' = V \cup \{s_1,s_2\}$, and
\item $E' = E \cup \{(s_1,s_2)\} \cup \{(v,s_1) \mid v \in V\}$.
\end{itemize}
See Figure~\ref{fig:connectified_graph} for an example of $\con{G}$. 

Observe that given a graph $G$, we can create $\con{G}$ online by first creating vertices $s_1$ and $s_2$, and then revealing all future vertices, $v_i$, as they are being revealed for $G$, together with the additional edge $(s_1,v_i)$.

\begin{figure}[htp]
\centering
\begin{tikzpicture}[scale = 0.8]
\node at (-1,4.5) {$G\colon$};

\node[vertex] (v1) at (0,4.5) {$v_1$};
\node[vertex] (v2) at (4.5,0) {$v_2$};
\node[vertex] (v3) at (4.5,3) {$v_3$};
\node[vertex] (v4) at (3,1.5) {$v_4$};
\node[vertex] (v5) at (1.5,4.5) {$v_5$};
\node[vertex] (v6) at (1.5,3) {$v_6$};

\draw (v2) -- (v3);
\draw (v2) -- (v4);
\draw (v3) -- (v4);
\draw (v3) -- (v6);
\end{tikzpicture}
\hspace{0.5cm}
\begin{tikzpicture}[scale = 0.8]
\node at (-2.2,4.5) {$\con{G}\colon$};

\node[vertexgray] (s1) at (-2,1) {$s_1$};
\node[vertexgray] (s2) at (-2,3) {$s_2$};

\node[vertex] (v1) at (0,4.5) {$v_1$};
\node[vertex] (v2) at (4.5,0) {$v_2$};
\node[vertex] (v3) at (4.5,3) {$v_3$};
\node[vertex] (v4) at (3,1.5) {$v_4$};
\node[vertex] (v5) at (1.5,4.5) {$v_5$};
\node[vertex] (v6) at (1.5,3) {$v_6$};

\draw (v2) -- (v3);
\draw (v2) -- (v4);
\draw (v3) -- (v4);
\draw (v3) -- (v6);

\draw[gray!30] (s1) -- (s2);
\draw[gray!30] (s1) -- (v1);
\draw[gray!30] (s1) -- (v2);
\draw[gray!30] (s1) -- (v3);
\draw[gray!30] (s1) -- (v4);
\draw[gray!30] (s1) -- (v5);
\draw[gray!30] (s1) -- (v6);
\end{tikzpicture}

\caption{Connectifying a graph. The vertices and edges added to $G$ when constructing \con{G} are shown in light gray.}
\label{fig:connectified_graph}
\end{figure}

\begin{lemma}\label{lem:optimal_vc_in_conG}
For any graph, $G$, if $C$ is an optimal vertex cover for $G$,
then $C \cup \{s_1\}$ is an optimal vertex cover for $\con{G}$.
\end{lemma}
\begin{proof}
  If $C$ is a vertex cover for $G$, then $C \cup \{s_1\}$ is a vertex cover for \con{G} since $s_1$ covers all edges added to $G$ when creating \con{G}.
  Moreover, $C \cup \{s_1\}$ is optimal since $(s_1,s_2)$ is not covered by any vertex in $G$ and neither $s_1$ nor $s_2$ covers any edge in $G$.
\end{proof}

\begin{theorem}\label{thm:hardness_dominating_set}
  For any $t \in \ZZ^+$ and any pair of insertion monotone error measures, $\PAIRETA$, the following hold.
  \begin{enumerate}[label = {(\roman*)}]
    \item \DOM on graphs without isolated vertices is $\CCWM{\PAIRETAFORCC}{t}$-hard. \label{thm:dom_no_isolated}
    \item $\DOM$ is asymptotically $\CCWM{\PAIRETAFORCC}{t}$-hard. \label{thm:dom}
  \end{enumerate}
\end{theorem}

\begin{proof}
  We give a strict and an asymptotic reduction from \BDVC{t} to \DOM.
  In the strict reduction, the \BDVC{t} algorithm, \ALG, gives \dom{G} to the \DOM algorithm, $\ALG'$, in an online fashion as it receives the vertices and edges of $G$.
  In the asymptotic reduction, \ALG gives \dom{\con{G}} to $\ALG'$.
  The asymptotic reduction is described in Algorithm~\ref{alg:ORALG_DOM_HARDNESS}.
  The part of Algorithm~\ref{alg:ORALG_DOM_HARDNESS} written in black describes the strict reduction, which can be used if the \BDVC{t} graph is known to contain no isolated vertices.
Figure~\ref{fig:dom_reduction_example} shows an example of the asymptotic reduction.

\begin{algorithm}[h!]
\caption{Reducing \VC to \DOM}
\begin{algorithmic}[1]
\Require $\ALG' \in \ALGS{\DOM}$ and $I = (x,\hat{x},r) \in \INSTANCES{\VC}$
\Ensure $\ALG \in \ALGS{\VC}$ and $I' = (x',\hat{x}',r') \in \INSTANCES{\DOM}$
\State \textcolor{gray}{Give $\ALG'$ a vertex, $s_1$, with prediction $1$ \Comment{Part of \con{G}}}
\State \textcolor{gray}{Give $\ALG'$ a vertex, $s_2$, with prediction $0$ \Comment{Part of \con{G}}}
\State \textcolor{gray}{Give $\ALG'$ the edge $(s_1,s_2)$                  \Comment{Part of \con{G}}}
\State \textcolor{gray}{Give $\ALG'$ a vertex, $s_{1,2}$ with prediction $0$                 \Comment{Part of \dom{\con{G}}}}
\State \textcolor{gray}{Give $\ALG'$ the edges $(s_1,s_{1,2})$ and $(s_{1,2},s_2)$             \Comment{Part of \dom{\con{G}}}}
\For {each vertex, $v_i$, in $I$} \Comment{Step $i$}
    \State \label{line:first} Give $\ALG'$ a vertex, $v_i$, with prediction $\hat{x}_i$ \Comment{Part of $G$}
    \State \textcolor{gray}{Give $\ALG'$ the edge $(s_1,v_i)$        \Comment{Part of \con{G}}}
    \State \label{line:second}\textcolor{gray}{Give $\ALG'$ a vertex, $t_{1,i}$, with prediction $0$      \Comment{Part of \dom{\con{G}}}}
    \State \textcolor{gray}{Give $\ALG'$ the edges $(s_1,t_{1,i})$ and $(t_{1,i},v_i)$   \Comment{Part of \dom{\con{G}}}}
    \For {each edge $(v_j,v_i)$ with $j<i$}
        \State Give $\ALG'$ the edge $(v_j,v_i)$ \Comment{Part of $G$}
        \State \label{line:third} Give $\ALG'$ a vertex, $v_{j,i}$, with prediction $0$     \Comment{Part of \dom{G}}
        \State Give $\ALG'$ the edges $(v_j,v_{j,i})$ and $(v_{i,j},v_i)$  \Comment{Part of \dom{G}}
    \EndFor
    \If {$\ALG'$ accepts at least one vertex in Step $i$} \label{line:vc_to_dom_start_check}
        \State Accept $v_i$
    \EndIf \label{line:vc_to_dom_end_check}
\EndFor
\end{algorithmic}
\label{alg:ORALG_DOM_HARDNESS}
\end{algorithm}

\begin{figure}[htp]
\centering
\begin{tikzpicture}
\node at (-0.8,4.1) {$G\colon$};
\node[vertex,thick] (v1) at (0,2) {$v_1$};
\node[vertex] (v2) at (0,4) {$v_2$};
\node[vertex] (v3) at (0,0) {$v_3$};

\draw (v1) -- (v2);
\draw (v1) -- (v3);
\end{tikzpicture}
\hspace{1.2cm}
\begin{tikzpicture}
\node at (-3.7,4.1) {$\dom{\con{G}}\colon$};
\node[vertexgray,thick] (s1) at (-2,2) {$s_1$};
\node[vertexgray] (s2) at (-2,4) {$s_2$};
\node[vertexdarkgray] (s12) at (-3,3) {$s_{1,2}$};

\draw[gray!30] (s1) -- (s2);
\draw[gray!70] (s1) -- (s12);
\draw[gray!70] (s2) -- (s12);

\node[vertex,thick] (v1) at (2,2) {$v_1$};
\node[vertexdarkgray] (s11) at (1,2.75) {$t_{1,1}$};

\draw[gray!70] (s1) -- (s11);
\draw[gray!70] (v1) -- (s11);

\node[vertex] (v2) at (2,4) {$v_2$};
\node[vertexdarkgray] (s12) at (0,4) {$t_{1,2}$};

\draw[gray!70] (s1) -- (s12);
\draw[gray!70] (v2) -- (s12);

\node[vertex] (v3) at (2,0) {$v_3$};
\node[vertexdarkgray] (s13) at (0,0) {$t_{1,3}$};

\draw[gray!70] (s1) -- (s13);
\draw[gray!70] (v3) -- (s13);

\draw (v1) -- (v2);
\draw (v1) -- (v3);

\draw[gray!30] (s1) -- (v1);
\draw[gray!30] (s1) -- (v2);
\draw[gray!30] (s1) -- (v3);

\node[vertexdarkgray] (v12) at (3.2,3) {$v_{1,2}$};
\draw[gray!70] (v1) -- (v12);
\draw[gray!70] (v2) -- (v12);

\node[vertexdarkgray] (v13) at (3.2,1) {$v_{1,3}$};
\draw[gray!70] (v1) -- (v13);
\draw[gray!70] (v3) -- (v13);
\end{tikzpicture}
\caption{Example reduction for Dominating Set.
  The vertices and edges added to $G$ when constructing \con{G} are drawn in light gray.
  The vertices and edges added to \con{G} when constructing \dom{\con{G}} are drawn in dark gray.
  In $G$, $\{v_1\}$ (drawn with a thicker boundary) is an optimal vertex cover.
  In \dom{\con{G}}, $\{s_1,v_1\}$ (drawn with a thicker boundary) is an optimal dominating set.}
\label{fig:dom_reduction_example}
\end{figure}

Note that for each vertex, $v_i$, in $I$, \ALG accepts $v_i$ only if $\ALG'$ accepts one of the vertices that \ALG gives to it between receiving $v_i$ and $v_{i+1}$.
Therefore, $\ALG(I) \leqslant \ALG'(I')$, satisfying Condition~\ref{item:OR_condition_ALG} of Definition~\ref{def:OR}.

If the strict reduction is used, then, by Lemma~\ref{lem:optimal_vc_dom}, $\OPT(I')=\OPT(I)$, satisfying Condition~\ref{item:OR_condition_OPT}.
Otherwise, by Lemmas~\ref{lem:optimal_vc_dom} and~\ref{lem:optimal_vc_in_conG}, $\OPT(I')=\OPT(I)+1$, satisfying Condition~\eqref{eq:alternative_O2}.

For the strict reduction, Lemma~\ref{lem:optimal_vc_dom} shows that $x$ encodes an optimal dominating set for $I'$.
Since each vertex in \dom{G} corresponding to a vertex, $v_i$, in $G$ have the same prediction as $v_i$, and all other vertices in \dom{G} have prediction~$0$, this means that Condition~\ref{item:OR_condition_eta} is satisfied.
Similarly, for the asymptotic reduction, with the addition that $s_1$ comes with a prediction of $1$ and $s_2$ with a prediction of $0$.
Thus, for this reduction, Condition~\ref{item:OR_condition_eta} follows from Lemmas~\ref{lem:optimal_vc_dom} and~\ref{lem:optimal_vc_in_conG}.
\end{proof}

We summarize our results about $\DOM$ in Theorem~\ref{thm:summary_dom}:

\begin{theorem}\label{thm:summary_dom}
Let $t \in \ZZ^+$.

For any pair of insertion monotone error measures, $\PAIRETA$,
\begin{enumerate}[label = {(\roman*)}]
\item $\DOM$ is asymptotically $\CCWM{\PAIRETAFORCC}{t}$-hard. \label{item:weak_hardness_dom}
\end{enumerate}
For any pair of error measures, $\PAIRETA$,
\begin{enumerate}[label = {(\roman*)}, start = 2]
\item $\DOM \in \CCINFTYWM{\PAIRETAFORCC}$ and $\DOM$ is not $\CCINFTYWM{\PAIRETAFORCC}$-hard. \label{item:dom}
\end{enumerate}
\end{theorem}
\begin{proof}
Item~\ref{item:weak_hardness_dom} follows directly from Theorem~\ref{thm:hardness_dominating_set}.

To see that $\DOM \in \CCINFTYWM{\PAIRETAFORCC}$,
adapt the setup from the proof of the first part of Theorem~\ref{thm:summary_bdvc_t}\ref{item:vc} with the following addition.
Observe that $\ALG'$ creates an infeasible solution to $\DOM$ if there exists a vertex $v_i$ such that $y_i = 0$, and $y_j = 0$, for all vertices $v_j$ for which $(v_i,v_j)$ is contained in the underlying graph of the instance $I$ of $\DOM$.
Since $x$ encodes an optimal dominating set, either $x_i = 1$, or there exists some $j$ for which $(v_j,v_i)$ is contained in the underlying graph of $I$ for which $x_j = 1$, as otherwise $v_i$ is not dominated.
Hence, $\ALG \in \ALGS{\ASGINFTY}$ has guessed $0$ on a true $1$, and so $\ALG(x,\hat{x}) = \infty$.

To see that \DOM is not $\CCINFTYWM{\PAIRETAFORCC}$-hard, note that
the algorithm, $\textsc{Acc} \in \ALGS{\DOM}$, that accepts all vertices is $(n,0,0)$-competitive, as $\OPT_{\DOM}$ has to accept at least one vertex to create a dominating set.
Now, adapt the proof of the last part of Theorem~\ref{thm:summary_bdvc_t}\ref{item:vc}.
\end{proof}

Since allowing the flexibility of the asymptotic online reduction in Definition~\ref{def:asympreduc} is very natural and presumably very useful, it seems natural to define complexity classes based on the asymptotically as hard as relation. This would allow the possibility of complete problems using the asymptotic online reductions. Since the relation is reflexive and transitive, the classes, $\ACCWM{\PAIRETAFORCC}{t}$, can be defined exactly as the $\CCWM{\PAIRETAFORCC}{t}$ classes but using the asymptotically as hard as relation. The properties corresponding to those in Theorem~\ref{lem:different_base_problem}, Corollary~\ref{cor:sub_and_sup_problem}, and Observation~\ref{obs:anyothercanbeused} also hold for these classes, and the class $\ACCWM{\PAIRETAFORCC}{t}$ contains the class $\CCWM{\PAIRETAFORCC}{t}$. In addition, the classes form a hierarchy with $\ASG{t}$ being complete for $\ACCWM{\PAIRETAFORCC}{t}$, giving the same separation of classes as in Theorem~\ref{thm:hierarchy}.

\section{Paging and $\boldsymbol{\CCWM{\PAIRMUFORCC}{t}}$}\label{sec:paging}

In Paging, we have a \emph{universe}, $U$, of $N$ pages, and a \emph{cache} with room for $k < N$ memory pages.
An instance of Paging is a sequence $r = \langle r_1,r_2,\ldots,r_n\rangle$ of requests, where each request holds a page $p \in U$.
If $p$ is not in cache (a \emph{fault}), an algorithm has to place $p$ in cache, either by evicting a page from cache to make room for $p$, or by placing $p$ in an empty slot in the cache, if one exists. 
The cost of the algorithm is the number of faults.
In the following, we assume that the cache is empty at the beginning of the sequence; this affects the analysis by at most an additive constant $k$.

There is a polynomial time optimal offline Paging algorithm, $\LFD$, which first fills up its cache and then, when there is a page fault, always evicts the page from cache whose next request is furthest in the future~\cite{B66}.
Further, it is well-known that no deterministic online Paging algorithm has a competitive ratio better than $k$~\cite{BE98,ST85}.

We study Paging with succinct predictions~\cite{ABEFHLPS23}:

\begin{definition}\label{def:paging}
An instance of \emph{Paging with Discard Predictions} ($\PAG{k}$) is a triple $I = (x,\hat{x},r)$, where $r = \langle r_1,r_2,\ldots,r_n\rangle$ is a sequences of pages from $U$, and $x,\hat{x} \in \{0,1\}^n$ are two bitstrings such that
\begin{align*}
x_i = \begin{cases}
0, &\mbox{if $\LFD$ keeps $r_i$ in cache until it is requested again,} \\
1, &\mbox{if $\LFD$ evicts $r_i$ before it is requested again,}
\end{cases}
\end{align*}
and $\hat{x}_i$ predicts the value of $x_i$.
If the page in $r_i$ is never requested again, then $x_i = 0$ if $\LFD$ keeps the page in cache until all requests have been seen, and $x_i = 1$ otherwise.
\end{definition}

In this section, we consider the pair \PAIRMU of error measures.
Recall from Definition~\ref{def:error_measures_for_ASG_analysis} that $\mu_0$ and $\mu_1$ count the number of wrong predictions of $0$ and $1$, respectively.
We prove that \PAG{t} is contained in  $\CCWM{\PAIRMUFORCC}{t}$, but not $\CCWM{\PAIRMUFORCC}{t}$-hard.

In~\cite{ABEFHLPS23}, Antoniadis et al.\ prove strong lower bounds for online algorithms with predictions for $\PAG{t}$.
Since $\PAG{t} \in \CCWM{\PAIRMUFORCC}{t}$, these bounds extend  to all $\CCWM{\PAIRMUFORCC}{t}$-hard problems.
This as well as upper bounds for \PAG{t} is discussed in Section~\ref{sec:pagingBounds}

\subsection{$\boldsymbol{\PAG{t}}$ is contained in $\boldsymbol{\CCWM{\PAIRMUFORCC}{t}}$}\label{sec:paging_in CCt}

In this section, we give a reduction from \PAG{t} to \ASG{t},
using an unnamed $\PAG{k}$ algorithm from~\cite{ABEFHLPS23} that we call \FLUSHwALLzeros (\FLUSH).
The strategy of \FLUSH is as follows.
Each page in cache has an associated bit which is the prediction bit associated with the latest request for $p$.
Whenever a page not in cache is requested, evict a page from cache whose associated bit is $1$, if such a page exists. 
Otherwise, evict all pages from cache.
We restate a positive result from~\cite{ABEFHLPS23} on the competitiveness of $\FLUSH$ with respect to the pair of error measures $\PAIRMU$.

\begin{theorem}(\cite{ABEFHLPS23})\label{thm:competitiveness_flush}
For any instance, $I = (x,\hat{x},r)$ of $\PAG{k}$, 
\begin{align*}
\FLUSH(I) \leqslant \OPT_{\PAG{k}}(I) + (k-1) \cdot \MUZERO(I) +  \MUONE(I).
\end{align*}
That is, $\FLUSH$ is strictly $(1,k-1,1)$-competitive for $\PAG{k}$ with respect to $\PAIRMU$.
\end{theorem}

We use this result from~\cite{ABEFHLPS23} to establish a relation between Paging and $\ASG{t}$:

\begin{theorem}\label{thm:pag_membership}
For any $t \in \ZZ^+$, $\PAG{t} \in \CCWM{\PAIRMUFORCC}{t}$.
\end{theorem}
\begin{proof}
We define a strict online reduction $\rho \colon \PAG{t} \orarrow \ASG{t}$.
For any instance, $I = (x,\hat{x},r) \in \INSTANCES{\PAG{t}}$, and any algorithm, $\ALG' \in \ALGS{\ASG{t}}$,
we construct an instance, $I' = (x',\hat{x}',r') \in \ASG{t}$, where $x'$ consists of $x$ followed by $t$ $1$'s, and $\hat{x}'$ consists of $\hat{x}$ followed by $t$ $1$'s.
In this way, $\MU_b(I) = \MU_b(I')$, for $b \in \{0,1\}$, so Condition~\ref{item:OR_condition_eta} is satisfied.

\begin{description}
\item[Towards Condition~\ref{item:OR_condition_OPT}:]
Since the cache is empty from the beginning, $\OPT_{\PAG{t}}$ incurs a cost of $1$ on each of the first $t$ requests. 
After this, $\OPT_{\PAG{t}}$ incurs a cost of $1$ for each request with true bit $1$, by the definition of discard predictions.
Thus,
\begin{align}
\label{eq:OPTPag}
  \OPT_{\PAG{t}}(I) = t + \sum_{i=1}^n x_i.
\end{align}
Moreover, by definition of $I'$, we have that $\OPT_{\ASG{t}}(I') = t + \sum_{i=1}^n x_i
= \OPT_{\PAG{t}}(I)$.

\item[Towards Condition~\ref{item:OR_condition_ALG}:]
By definition of $I'$, and letting $y'$ be the output of $\ALG'(I')$,
\begin{align}
  \label{eq:algASGlower}
\ALG'(I') \geqslant \sum_{i=1}^n \big( y'_i + t \cdot x_i \cdot (1-y'_i) \big) + t.
\end{align}
In particular, $\sum_{i=1}^n \big( y'_i + t \cdot x_i \cdot (1-y'_i) \big)$ is the cost of $\ALG'$ on the first $n$ requests, and $t$ is a lower bound on the cost of $\ALG'$ on the last $t$ requests.

We give pseudo-code for $\rho$ given $\ALG' \in \ALGS{\ASG{t}}$ and $I \in \INSTANCES{\PAG{t}}$ in Algorithm~\ref{alg:paging}. 
Formally, $\ALG = \ORALG{\rho}(\ALG')$ runs the algorithm $\FLUSH$ from~\cite{ABEFHLPS23} using $y'_i$ as the prediction for $r_i$.
Hence, letting $y' = \langle y'_1,y'_2,\ldots,y'_n\rangle$ and $I_{y'} = (x,y',r)$, we have that $\ALG(I) = \FLUSH(I_{y'})$.
Therefore,
\begin{align*}
\ALG(I) 
\leqslant \; &\OPT_{\PAG{t}}(I_{y'}) + (t-1) \cdot \MUZERO(I_{y'}) + \MUONE(I_{y'}), \text{ by Theorem~\ref{thm:competitiveness_flush}} \\
= \; &t + \sum_{i=1}^n \left( x_i + (t-1) \cdot  (1-y'_i) \cdot x_i + y'_i \cdot (1-x_i) \right), \text{ by~(\ref{eq:OPTPag})} \\
= \; &t + \sum_{i=1}^n \big( t \cdot x_i \cdot (1-y'_i) + y'_i \big)\\
\leqslant \; &\ALG'(I'), \text{ by~(\ref{eq:algASGlower})}.
\end{align*}
\end{description}
\end{proof}

\begin{algorithm}[h]
\caption{Reducing \PAG{t} to \ASG{t}}
\begin{algorithmic}[1]
  \Require An algorithm $\ALG' \in \ALGS{\ASG{t}}$ and an instance $(x,\hat{x},r) \in \INSTANCES{\PAG{t}}$
\Ensure An $\ASG{t}$ instance $I'=(x',\hat{x}',r')$ and a $\PAG{t}$ algorithm, \ALG
\For{each request $r_i$}
	\State Get prediction $\hat{x}_i$
	\State Ask $\ALG'$ to guess the next bit given $\hat{x}'_i = \hat{x}_i$, and let $y'_i$ be its output
	\If {$r_i$ is a fault}
		\If {there is a page, $p$, in cache with associated bit $1$}
			\State Evict $p$
		\Else
			\State Flush the cache
		\EndIf
		\State Place the page from $r_i$ in cache
	\EndIf
	\State Set the associated bit of $r_i$ to $y'_i$
\EndFor
\For{$i \gets 1$ to $t$}
        \State{Ask $\ALG'$ to guess the next bit given $\hat{x}'_{n+i}=1$}
\EndFor
\State Compute $x \gets \LFD(r)$ and reveal $x'$ which is $x$ appended by $t$ $1$s to $\ALG$
\end{algorithmic}
\label{alg:paging}
\end{algorithm}

\subsubsection{Establishing Bounds with Respect to $\boldsymbol{(\mu_0,\mu_1)}$}\label{sec:pagingBounds}
Now that we have proven that $\PAG{t} \in \CCWM{\PAIRMUFORCC}{t}$, we can extend a collection of lower bounds from Paging with Discard Predictions by Antoniadis et al.~\cite{ABEFHLPS23} to all $\CCWM{\PAIRMUFORCC}{t}$-hard problems:

\begin{theorem}\label{thm:stronger_lower_bounds_from_paging_general_version}
Let $t \in \ZZ^+$, and let $P$ be any $\CCWM{\PAIRMUFORCC}{t}$-hard problem.
Then, for any $\ABC$-competitive algorithm for $P$ with respect to $\PAIRMU$, 
\begin{enumerate}[label = {(\roman*)}]
\item $\ALPHA + \BETA \geqslant t$ and \label{item:lower_bound_condition_1}
\item $\ALPHA + (t-1) \cdot \GAMMA \geqslant t$. \label{item:lower_bound_condition_2}
\end{enumerate}
\end{theorem}
\begin{proof}
By Theorem~\ref{thm:pag_membership}, $\ASG{t} \geq_{\textrm o} \PAG{t}$.
Since $P$ is $\CCWM{\PAIRMUFORCC}{t}$-hard, we have that $P \geq_{\textrm o} \ASG{t}$.
Hence, by transitivity of the as-hard-as relation, $P \geq_{\textrm o} \PAG{t}$.

For~\ref{item:lower_bound_condition_1}, we assume for the sake of contradiction that there is an $\ABC$-competitive algorithm, $\ALG \in \ALGS{P}$, with $\ALPHA + \BETA < t$.
Since $P \geq_{\textrm o} \PAG{t}$, there exists an $(\alpha,\beta,\gamma)$-competitive algorithm, $\ALG_{\PAG{t}}'$, for $\PAG{t}$ with $\alpha + \beta < t$.
This contradicts Theorem 1.7 from~\cite{ABEFHLPS23}.

Similarly, assuming an $\ABC$-competitive algorithm, $\ALG \in \ALGS{P}$, with $\ALPHA + (t-1) \cdot \GAMMA < t$, we obtain a contradiction with Theorem 1.7 from~\cite{ABEFHLPS23} again, thus proving~\ref{item:lower_bound_condition_2}.
\end{proof}

Finally, we re-prove an existing positive result for $\PAG{t}$ (see Remark 3.2 in~\cite{ABEFHLPS23}) but now using that $\PAG{t} \in \CCWM{\PAIRMUFORCC}{t}$:

\begin{theorem}
For any $\alpha,\beta \in \RR^+$ with $\alpha \geqslant 1$ and $\alpha + \beta \geqslant t$, there exists an $(\alpha,\beta,1)$-competitive algorithm for $\PAG{t}$ with respect to $\PAIRMU$.
\end{theorem}
\begin{proof}
By Theorem~\ref{thm:pag_membership}, any $(\alpha,\beta,\gamma)$-competitive algorithm for $\ASG{t}$ implies an $(\alpha,\beta,\gamma)$-competitive algorithm for \PAG{t}.
Thus it is sufficient that, by Theorem~\ref{thm:ftp_equality_wrt_mu_0_and_mu_1}, there exists an $(\alpha,\beta,1)$-competitive algorithm for $\ASG{t}$ with respect to $\PAIRMU$, for any $\alpha,\beta\in\RR^+$ with $\alpha \geqslant 1$ and $\alpha + \beta \geqslant t$. 
\end{proof}

\subsection{$\boldsymbol{\PAG{t}}$ is not $\boldsymbol{\CCWM{\PAIRMUFORCC}{t}}$-hard}
Finally, we show that $\PAG{t}$ is not $\CCWM{\PAIRMUFORCC}{t}$-complete for $t\geqslant 5$.
Since there
is much more information available to an algorithm for $\PAG{t}$, as opposed
to $\ASG{t}$, intuitively, it seems that $\PAG{t}$ should not be as hard as
$\ASG{t}$, so this is the expected result. We prove this by showing that
the lower bound
on $\ASG{t}$ in Lemma~\ref{lem:small_alpha_gives_large_gamma} does not
hold for $\PAG{t}$. That lower bound
shows that for any $\ABC$-competitive algorithm for $\ASG{t}$ with respect to
$\PAIRMU$, if $\alpha<t$, then $\gamma \geqslant 1$. By Observation~\ref{obs:ASG_classes}, we know
that this also applies to any $\CCWM{\PAIRMUFORCC}{t}$-complete problem.
We provide a $\PAG{t}$
algorithm, \Block, that is $(t-\delta,2t,1-\eps)$-competitive for
some positive $\delta$ and $\eps$.

In this section, we say that a page has a prediction of~$0$~($1$), if the latest request to the page had a prediction of~$0$~($1$).

\FlushBlock (\Block) defines consecutive blocks of the input sequence.
A block starts at the beginning of the request sequence or with the next
request after the previous block ends.
Except for the last block, all blocks are complete.
When a block ends, the cache
is flushed.
When evicting a page, \Block considers only pages with a prediction of~$1$ that have not already been evicted within the current block.
Among these pages, it chooses the page that has been in cache longest.
A block ends with a request causing a fault when the cache is full and
one of the following two conditions hold.

\textbf{Condition 1}
All pages in cache have prediction~0.

\textbf{Condition 2}
Some page has prediction~1 and there are no pages with prediction~1 that have not already been evicted in the block.

As a response to the last request of a block, an arbitrary page is evicted.
This decision is not important, since the cache is flushed at the beginning of the next block.

\begin{proposition}\label{obs:cond1}
If a block ends due to Condition 1, the block contains at least one incorrect prediction of $0$.
\end{proposition}
\begin{proof}
  If a block ends due to Condition 1, then
  just before the last request of the block, the cache is filled with pages whose last prediction was~$0$.
  Thus, according to the predictions, \LFD has all of them in cache and will keep them there until their next request.
  This contradicts the fact that the last request of the block is to a page not in cache, so at least one of the predictions must be incorrect.
\end{proof}

The following lemma holds for any block, including the last, possibly incomplete, block.
\begin{lemma}
\label{beta_2k}
For $t\geqslant 3$ and any block, $I_b$, 
$$\Block(I_b) \leqslant \left(t-\frac{1}{t}\right)\LFD(I_b)+2t.$$
\end{lemma}
\begin{proof}
  In any block, \Block only evicts pages that have
  prediction~1 and have not been evicted earlier in the block, possibly except for the page, $p$, evicted as a result of the last request of the block.
  Even if $p$ is evicted twice, there are at most two faults on it within the block, since the block contains no requests after the second eviction of $p$.
  Thus, if there are $s$
  distinct pages in $I_b$, \Block faults at most $2s$ times on $I_b$. \LFD faults at
  least $s-t$ times. If $s-t \leqslant 0$, then \Block has at most $t$
  faults. Otherwise, \Block faults at most $2s$ times and
  $2s =2(s-t)+2t\leqslant 2\LFD(I_b) +2t < (t-\frac{1}{t})\LFD(I_b)+2t$,
  since $t\geqslant 3$.
\end{proof}

\begin{lemma}
  \label{LFDbound}
  For any complete block, $I_b$, with no incorrect $0$-predictions, $$\LFD(I_b) \geqslant 2.$$
\end{lemma}
\begin{proof}
  Let $s$ be the number of distinct pages in the block.
  Since $I_b$ is a complete block, $s>t$.
If $s\geqslant t+2$, then there are at least two pages among the ones requested in
the block that
\LFD does not initially have in cache, so $\LFD(I_b)\geqslant 2$.

Thus, only the case $s=t+1$ remains.
In this case, we let $P$ denote the $t+1$ pages requested within the block $I_b$.
During the first $t$ faults in $I_b$, \Block is filling up the cache again after flushing.
For each of the remaining requests within the block, exactly one of the $t+1$ pages is outside the cache.
We argue that $I_b$ can be split into two subblocks, \subblockOne followed by \subblockTwo, such that each subblock contains requests to all $t+1$ distinct pages in $P$.

Let $r$ be the first request within the block causing an eviction.
At the time $r$ arrives, the cache has been filled and $r$ is a request to the one page in $P$ which is not in cache.
Thus, the part of $I_b$ ending with $r$ contains requests to $t+1$ distinct pages, so we let \subblockOne denote this part of $I_b$.

What remains is to prove that the remaining part, \subblockTwo, of $I_b$ also contains requests to all $t+1$ pages in~$P$.
Note that none of the pages that are in cache at the beginning of \subblockTwo have been evicted from cache within~$I_b$.
We partition $P$ into two subsets, $P_0$ and $P_1$, based on the situation at the beginning of \subblockTwo.
The set $P_0$ consists of the pages whose last request came with a $0$-prediction and $P_1=P \setminus P_0$ consists of those whose last request came with a $1$-prediction.

Note that by the definition of \Block, the last request, $r'$, of the block is a fault and when $r'$ arrives, all pages in cache with a prediction of~$1$ have been evicted within $I_b$. Therefore, all pages in $P_1$ that are in cache at the arrival of $r'$ have been requested with a $0$-prediction or brought into cache 
within \subblockTwo.
If one of the pages in $P_1$ is the page not in cache, it is the page requested by~$r'$.
Thus, all pages in $P_1$ are requested within \subblockTwo.
We now argue that all pages in $P_0$ must also be requested within \subblockTwo.
Since there are no incorrect $0$-predictions in $I_b$ and since any paging algorithm, including \LFD, can have at most $t$ pages in cache at a time, each page in $P_0$ must be requested before the last page in $P_1$ has been requested.

Since \subblockOne and \subblockTwo each contain requests to $t+1$ distinct pages, \LFD must fault at least once in each of the two subblocks.
\end{proof}

\begin{lemma}
\label{rule2}
For $t\geqslant 5$ and any complete block, $I_b$, with no incorrect $0$-predictions,
\[\Block(I_b)\leqslant \left(t-\frac{1}{3t^2}\right)\LFD(I_b)+\left(1-\frac{1}{3t^2}
\right)\MUONE(I_b).\]
\end{lemma}
\begin{proof}
  Let $s$ be the number of distinct pages in the block.
  Since $I_b$ is a complete block, $s>t$.
By Proposition~\ref{obs:cond1},
the block $I_b$ ended due to Condition~2.

\newcommand{\dc}{\ensuremath{d_{\text{c}}}\xspace}
\newcommand{\dw}{\ensuremath{d_{\text{w}}}\xspace}

Let $d$ be the number of pages \Block faults on twice in the block.
Note that for each such page, $p$, the last request to $p$ before the last fault on $p$ in the block has a prediction of $1$.
Among the $d$ pages that \Block faults on twice, let \dc (\dw) be the number of pages, $p$, where the last request to $p$ before the last fault on $p$ had a correct (incorrect) $1$-prediction, and note that $d=\dc+\dw$.

Thus, \Block faults at most $s+d$ times in the block, and
\LFD faults at least $s-t$ times. An independent lower bound on
\LFD is that it faults at least
$\dc$ times. Thus, \Block faults at most
$s+\dw+\dc \leqslant 2\LFD(I_b) + t + \dw$ times.
We now show that for $t \geqslant 5$ and $\eps =\frac{1}{3t^2}$,
\begin{align}
  \label{tobeproven}
  2\LFD(I_b) + t + \dw\leqslant (t-\eps)\LFD(I_b) + (1-\eps)\MUONE(I_b),
\end{align}
which will conclude our proof.

We start by establishing
\begin{align}
  \label{keyIneq}
  t+\eps \dw \leqslant (t-2-\eps)\LFD(I_b).
\end{align}
The easier case is $s=t+1$:
\begin{align*}
  t+\eps \dw
  & \leqslant t+\eps (t+1) \\
  & \leqslant 2t-4-2\eps, \mbox{ for $t\geqslant 5$ and $\eps=\frac{1}{3t^2}$} \\
  & \leqslant (t-2-\eps)\LFD(I_b), \mbox{ since, by Lemma~\ref{LFDbound}, $\LFD(I_b) \geqslant 2$} \\
\end{align*}

So, now we assume that $s \geqslant t+2$:

For $t\geqslant 5$ and $\eps=\frac{1}{3t^2}$,
\[2+\eps t+2\eps \leqslant (t-2-2\eps)(s-t-1),\]
since for $s\geqslant t+2$, the right-hand side is greater than
2.5 and the left-hand side is smaller than~2.5.
This is equivalent to
\begin{align}
  \label{someineq}
  t+\eps (s-t-1+t+1) \leqslant (t-2-\eps)+(t-2-\eps)(s-t-1).
\end{align}
Now,
\begin{align*}
  t+\eps \dw
  & \leqslant t+\eps (s-t-1+t+1), \mbox{ since $\dw \leqslant s$} \\
  & \leqslant (t-2-\eps)+(t-2-\eps)(s-t-1), \mbox{ by (\ref{someineq})} \\
  & \leqslant (t-2-\eps) +(t-2-\eps)(\LFD(I_b)-1), \mbox{ since $\LFD(I_b) \geqslant s-t$} \\
  & = (t-2-\eps)\LFD(I_b)
\end{align*}
Having established (\ref{keyIneq}) for all cases, and adding $2\LFD(I_b) + (1-\eps)\dw$ to both sides, we get that
\begin{align*}
  2\LFD(I_b) + t + \dw
  & \leqslant 2\LFD(I_b) + (t-2-\eps)\LFD(I_b)+(1-\eps)\dw \\
  & \leqslant (t-\eps)\LFD(I_b) + (1-\eps)\MUONE(I_b), \mbox{ since $\MUONE(I_b) \geqslant \dw$}
\end{align*}
This establishes (\ref{tobeproven}) and concludes the proof.
\end{proof}

We can now establish that $\PAG{t}$ is not complete for the class $\CCWM{\PAIRMUFORCC}{t}$.
\begin{theorem}
  \label{blockcomp}
  For $t\geqslant 5$, the $\PAG{t}$ algorithm, \Block, is
  $(t-\frac{1}{3t^2}, 2t, 1-\frac{1}{3t^2})$-competitive.
\end{theorem}
\begin{proof}
  We consider the different types of blocks and show that they are each bounded by
\begin{align}
\left(t-\frac{1}{3t^2}\right)\LFD(I_b)+2t\MUZERO(I_b) +
\left(1-\frac{1}{3t^2}\right)\MUONE(I_b),\label{blockcompineq}
\end{align}
except a possible last block, where we need an additive term of~$2t$.
Thus, we can sum up the faults over all blocks to get the result.

If block $I_b$ ends due to Condition~1 where all pages in cache
have prediction~0, 
then there have been at least $t+1$ different pages requested
in the block, and \LFD cannot have them all in cache.
Thus, there was an incorrect $0$-prediction and
$\MUZERO(I_b) \geqslant 1$.
Using Lemma~\ref{beta_2k}, the number of faults by \Block in $I_b$ is bounded
by
\[\left(t-\frac{1}{t}\right)\LFD(I_b)+2t \leqslant \left(t-\frac{1}{3t^2}\right)\LFD(I_b) + 2t\MUZERO(I_b).\]

If block $I_b$ ends due to Condition~2, then, by Lemma~\ref{rule2},
the number of faults by \Block in $I_b$ is bounded by
\[\left(t-\frac{1}{3t^2}\right)\LFD(I_b)+\left(1-\frac{1}{3t^2}\right)\MUONE(I_b).\]

For the possible final non-ending block, by Lemma~\ref{beta_2k},
the number of faults by \Block in $I_b$ is bounded by
\[\left(t-\frac{1}{t}\right)\LFD(I_b)+2t \leqslant \left(t-\frac{1}{3t^2}\right)\LFD(I_b) + 2t.\]
We have seen that for
any complete block, the number of faults is bounded by (\ref{blockcompineq}).
Thus, they can be summed up to give the result, using
the $2t$ term from the last block as an additive constant.
\end{proof}
\begin{corollary}
  $\PAG{t}$ is not $\CCWM{\PAIRMUFORCC}{t}$-hard.
\end{corollary}
\begin{proof}
  Since \ASG{t} is complete for the class, it follows
  by Lemma~\ref{lem:small_alpha_gives_large_gamma} that
  any algorithm for
  a $\CCWM{\PAIRMUFORCC}{t}$-hard problem
  that has $\alpha<t$, must have $\gamma \geqslant 1$.
  By Theorem~\ref{blockcomp},
  \Block does not fulfill this, so
  $\PAG{t}$ is not $\CCWM{\PAIRMUFORCC}{t}$-hard.
\end{proof}

\section{Concluding Remarks and Future Work}
We have defined complexity classes for online minimization problems with binary predictions and without predictions and proven that they form a strict hierarchy.
Further, we showed that our complexity classes have all the structure one expects from complexity classes.
We proved membership, hardness, and completeness of multiple problems with respect to various pairs of error measures, using our reduction template as well as other methods.
For instance, we have shown completeness of Online $t$-Bounded Degree Vertex Cover and Online $t$-Bounded Overlap Interval Rejection.
Beyond this, we showed strong lower bounds for all $\CCWM{\PAIRMUFORCC}{t}$-hard problems, using similar lower bounds from~\cite{ABEFHLPS23} and a reduction from Paging with Discard Predictions to Asymmetric String Guessing with Predictions. We also proved Paging with Discard Predictions to not be $\CCWM{\PAIRMUFORCC}{t}$-hard.

These results were defined and
proven using the definition of asymptotic competitivenes that allows for
an additive constant.
In online algorithms, researchers generally
use two slightly different
definitions of the asymptotic competitive ratio; either using the style from our paper
or defining the ratio as the
limit of the ratio of the algorithm's performance to \OPT's
performance.
The two definitions are equivalent only if an additive $o(\OPT)$ term is allowed.
Since incorporating this more general term adds a significant amount
of technicalities to the proofs, we have favored readability and just allow
an additive constant.
The difference is minor in that if an algorithm is $c$-competitive with regards
to the limit-based definition, it is $(c+\eps)$-competitive in the
restricted version for any $\eps>0$, so taking the infimum
over all values results in the same competitive ratios.
The results in this paper also hold using a definition of
competitivenes that allows for an additive $o(\OPT)$ term.

Note that our definition of relative hardness also applies to maximization problems.
In a very recent follow-up to the arXiv version of our paper, online maximization problems with predictions were considered and several maximization problems shown to be members, hard, and complete for $\CCWM{\PAIRMUFORCC}{t}$~\cite{B25}. 

The most interesting open problem is extending our complexity theory to other kinds of predictions.
Placing other online problems with or without predictions into the complexity hierarchy,
including determining whether or not 
   $\COLT{k}{t}$ is contained in
any of the complexity classes defined,
would also be interesting.
Other possible directions for future work include 
considering randomized algorithms as well as
other performance measures, for instance using relative worst order analysis~\cite{BF07, BFL07j,BFL21j} or random order analysis~\cite{K96} instead of competitive analysis.

\clearpage
\bibliography{refs}
\bibliographystyle{plain}

\end{document}